\documentclass[11pt]{article}
\usepackage[margin=1in]{geometry}
\sloppy

\usepackage{amsfonts,amsmath,amssymb,amsthm}
\usepackage{enumerate}
\usepackage{hyperref}
\usepackage{paralist}

\usepackage{privmath,privref, privalgorithm}
\usepackage[long]{pwbib}
\usepackage{cite}
\usepackage{pgfplots}

\newcommand{\Exp}{\ensuremath{\operatorname{\mathbf{E}}}}

\usepackage[textsize=tiny,disable]{todonotes}

\SetKwFunction{reset}{reset}
\SetKwFunction{scan}{scan}
\SetKwFunction{TAS}{test\&set}
\SetKwFunction{win}{win}
\SetKwFunction{lose}{lose}
\SetKwFunction{continue}{continue}
\SetKwFunction{True}{true}
\SetKwFunction{False}{false}
\SetKwFunction{Knockout}{knockout}
\SetKwFunction{Competition}{compete}
\SetKwFunction{Enter}{enter}
\SetKwFunction{split}{split}
\SetKwFunction{compete}{compete}
\SetKwFunction{op}{op}

\renewcommand{\index}{\ensuremath{\mathrm{index}}}
\newcommand{\pos}{\ensuremath{\mathrm{pos}}}
\newcommand{\sig}{\ensuremath{\mathrm{sig}}}

\newcommand{\id}{\ensuremath{\mathrm{id}}}
\newcommand{\num}[2]{\ensuremath{\mathrm{num}(#1,#2)}}

\newcommand{\procs}{\ensuremath{\mathcal{P}}}
\newcommand{\PP}{\ensuremath{\mathcal{P}}}

\newtheorem{theorem}{Theorem}
\newtheorem{lemma}[theorem]{Lemma}
\newtheorem{observation}[theorem]{Observation}
\newtheorem{corollary}[theorem]{Corollary}

\title{Deterministic and Fast Randomized Test-and-Set in Optimal Space \footnote{The results in this paper combine and elaborate on previous work that appeared in the
2015 ACM Symposium on Theory of Computing~\cite{GHHW2015} and 2013 International Symposium on Distributed Computing~\cite{GHHW2013a}}}
\author{George Giakkoupis\\
INRIA Rennes\\
\normalsize{george.giakkoupis@inria.fr}
\and
Maryam Helmi\\
University of Calgary\\
\normalsize{mhelmikh@ucalgary.ca}
\and
Lisa Higham\\
University of Calgary\\
\normalsize{higham@ucalgary.ca}
\and
Philipp Woelfel\\
University of Calgary\\
\normalsize{woelfel@ucalgary.ca}
}

\date{}

\begin{document}
\maketitle

\begin{abstract}
The test-and-set object is a fundamental synchronization primitive for shared memory systems.
A test-and-set object stores a bit, initialized to 0, and supports one operation, \TAS{}, which sets the bit's value to 1 and returns its previous value.
This paper studies the number of atomic registers
required to implement a test-and-set object in the standard asynchronous shared memory model with $n$ processes.
The best lower bound is $\log n - 1$ for obstruction-free~\cite{GW2012b} and deadlock-free~\cite{SP1989a} implementations.
Recently a deterministic obstruction-free implementation using $O(\sqrt{n})$ registers was presented \cite{GHHW2013a}.
This paper closes the gap between these known  upper and lower bounds by presenting
a deterministic obstruction-free implementation of a test-and-set object from
$\Theta(\log n)$ registers of size $\Theta(\log n)$ bits.

We also provide a technique to transform any deterministic obstruction-free algorithm, in which, from any configuration, any process can finish if it runs for $b$ steps without interference, into a randomized wait-free algorithm for the oblivious adversary, in which the expected step complexity is polynomial in $n$ and $b$. This transformation allows us to combine our obstruction-free algorithm with the randomized test-and-set algorithm by Giakkoupis and Woelfel~\cite{GW2012b},
to obtain a randomized wait-free test-and-set algorithm from $\Theta(\log n)$ registers, with expected step-complexity $\Theta(\log^\ast n)$ against the oblivious adversary.

\end{abstract}

\addtocounter{page}{-1}
\thispagestyle{empty}
\clearpage

\section{Introduction}

A \emph{test-and-set} (\emph{TAS}) object is perhaps the simplest standard shared memory primitive that has no wait-free deterministic implementation from registers.
It stores a bit, which is initially 0, and supports one operation, namely \TAS{}.
A  \TAS{} sets the bit's value to 1 and returns its previous value.

TAS objects have consensus number two.
That is, they can be used together with registers to solve deterministic wait-free consensus only in systems with two processes.
Despite that, TAS is a standard building block for shared memory algorithms
that solve many classical problems, such as mutual exclusion and renaming~\cite{KRS1988a,PPTV1998a,EHW1998a,BPSV2006a,AAGGG2010a,AAGG2011a,AACGZ2011a}.
Since TAS objects are among the simplest synchronization primitives, they are well suited for investigating the difficulties arising in synchronization problems.
Algorithms or impossibility results for TAS provide insights into the complexity of other shared memory problems, and can contribute to their solutions.

We consider a standard shared memory system in which $n$ processes communicate through atomic read and write operations on shared registers.
A common assumption is that each register can store $\Theta(\log n)$ bits, although in some settings registers of unbounded size are assumed.
The strongest reasonable progress condition is \emph{wait-freedom}, which guarantees that every operation finishes in a finite number of the calling process' steps, independent of other processes.
Since TAS has consensus number two, deterministic wait-free implementations from registers do not exist for two or more processes.
A weaker progress condition, and the one most frequently used for analyzing space complexity, is \emph{obstruction-freedom} \cite{HLM2003a}.
It guarantees that from every reachable configuration and for any process, that process will finish its operation in a finite number of its own steps, provided that no other process takes any steps (i.e., in a sufficiently long solo execution).
Any shared memory object has an obstruction-free implementation from $n$ registers~\cite{Her1991a}.

The randomized step complexity of TAS has been thoroughly investigated, with significant progress being made in recent years \cite{TV2002a,AGTV1992a,AAGGG2010a,AA2011a,GW2012a}.
In contrast, little was known about the space complexity of obstruction-free or randomized wait-free TAS.
In 1989, Styer and Peterson~\cite{SP1989a} studied the space complexity of the related mutual exclusion problem, 
under the deadlock-free progress requirement.
As a special case they also considered a variant called weak leader election (see Section \ref{sec:model}). 
It suffices to add a single one-bit register to transform any deadlock-free weak leader election protocol 
into a linearizable deadlock-free TAS.
Styer and Peterson proved a space lower bound of $\ceil{\log n}+1$ registers,
and provided an algorithm that established that this bound is tight.
Hence, in the case of deadlock-freedom, Styer and Peterson's results answer the question of the space complexity of TAS precisely up to a single register.

Deadlock-freedom is a natural progress property for mutual exclusion related problems,
where waiting for other processes is inherent in the problem specification.
But for other problems, it is inappropriate because it does not preclude a single slow or failing process preventing all other processes from making progress.
Alternative progress properties, such as obstruction-freedom, lock-freedom, or randomized wait-freedom,
are more desirable for such problems.
Research on the space complexity of shared memory problems has focused on the obstruction-free progress property \cite{JTT2000a,ELMS05,HLM2003a,GHP05}.
However, despite significant research on TAS, prior to the result presented here, the asymptotic space complexity of obstruction-free TAS implementations remained unknown.

In 2012, Giakkoupis and Woelfel~\cite{GW2012b} used the same lower bound technique as that of Styer and Peterson to conclude that obstruction-free TAS
requires $\ceil{\log n} - 1$ registers.
The maximum number of steps taken by any process running alone, until it finishes its method call is called the \emph{solo step complexity} of that method~\cite{AGHK09}.
In 2013 we devised a deterministic obstruction-free TAS algorithm using $\Theta(\sqrt n)$ registers, where the solo step complexity of \TAS{} is $\Theta(\sqrt n)$~\cite{GHHW2013a}.
We now present an asymptotically tight result.

\begin{theorem}\label{thm:main}
  There is a deterministic obstruction-free implementation of a TAS object from $\Theta(\log n)$ registers of size $\Theta(\log n)$ bits,
  where the solo step complexity of the \TAS{} method is $\Theta(\log n)$.
\end{theorem}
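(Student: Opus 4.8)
The plan is to give an explicit construction. First I would reduce the problem to building an obstruction-free \emph{weak leader election} object: every process returns \texttt{win} or \texttt{lose}, at most one process ever returns \texttt{win}, and any process that runs solo from the initial configuration returns \texttt{win}. The obstruction-free analogue of the one-bit-register reduction recalled in the introduction turns such an object into a linearizable obstruction-free TAS at no asymptotic cost, so it suffices to implement weak leader election from $\Theta(\log n)$ registers of $\Theta(\log n)$ bits with solo step complexity $\Theta(\log n)$.

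The core building block is a constant-space \emph{deterministic sifter}: a splitter-like gadget built from $O(1)$ registers in which a process obtains one of three outcomes, \texttt{win}, \texttt{lose}, or \texttt{retry}, with the guarantees that a process running the gadget alone from its reset state obtains \texttt{win}, and that among any set of processes executing a single episode of the gadget at most one obtains \texttt{win}. Since a deterministic stage cannot be guaranteed to shrink the contending field by more than a constant factor, I would chain $\lceil\log n\rceil$ such gadgets $G_1,\dots,G_{\lceil\log n\rceil}$: a process promoted out of $G_i$ proceeds to $G_{i+1}$; a process that gets \texttt{lose} halts with overall output \texttt{lose}; a process that gets \texttt{retry} re-enters $G_i$; and a process promoted out of the last gadget returns \texttt{win} after raising a done-flag that every process reads, returning \texttt{lose} if it is set, before it could reach $G_{\lceil\log n\rceil}$. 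A solo process sails straight through the chain in $O(\log n)$ steps, and the chain is engineered so that no two processes traverse it to the end. (This is the deterministic counterpart of the randomized sifter chain of~\cite{GW2012b}, which, because randomized sifters shrink the field polynomially, needs only $\log^\ast n$ stages.)

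The main obstacle is reconciling register reuse with obstruction-freedom. Because the $O(1)$ registers of each gadget are reused across all of its episodes, a gadget ``claimed'' by a process that then crashes would, in a naive design, permanently deflect every later process away from \texttt{win}, violating obstruction-freedom; yet a careless reclaim rule that lets a newcomer \texttt{reset} a gadget it believes abandoned could create two winners if the original claimant is merely slow. I would therefore give each gadget a self-cleaning protocol: a process that is repeatedly deflected may \texttt{reset} the gadget, but only after a sequence of reads certifies, from a bounded amount of shared state, that no process is currently promoted past this point, while a promoted process periodically refreshes that state so that a \texttt{reset} cannot strand it. The heart of the correctness proof is an invariant, established by induction on $i$, that characterizes for every reachable configuration the set of processes that can still obtain \texttt{win} from $G_i$ onward; from it, both safety (at most one overall \texttt{win}) and the done-flag argument follow, while obstruction-freedom follows because a solo process eventually resets every gadget and then traverses them all.

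Finally I would account for resources: each gadget uses $O(1)$ registers and the gadgets are disjoint across the $\lceil\log n\rceil$ stages, for $\Theta(\log n)$ registers in total; each register holds a process identifier or a bounded counter, hence $\Theta(\log n)$ bits; and a solo \TAS{} executes each gadget once from its reset state at $O(1)$ steps per gadget, for solo step complexity $\Theta(\log n)$. Together with the $\lceil\log n\rceil-1$ lower bound of~\cite{GW2012b}, this gives the stated $\Theta(\log n)$ bounds.
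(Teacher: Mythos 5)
Your overall skeleton does match the paper's: reduce TAS to weak leader election via the extra-register construction (Theorem~\ref{thm:TAS}), then chain $O(\log n)$ constant-space deterministic filtering stages, each traversable in $O(1)$ solo steps. But the entire technical content of the theorem lies in constructing such a stage, and that is exactly what your proposal asserts rather than proves. The gadget you sketch --- a splitter-like object in which at most one process wins per ``episode,'' together with a reclaim/\texttt{reset} rule in which a deflected process first certifies from bounded shared state that ``no process is currently promoted past this point,'' while promoted processes ``periodically refresh'' that state --- founders on the very crash-vs-slow indistinguishability you name. In an asynchronous system a promoted process may take no further steps at all, so it cannot refresh anything; hence no finite sequence of reads of bounded, reusable state can distinguish ``promoted and stalled'' from ``not promoted'' unless promotion leaves a permanent trace, and a permanent trace in $O(1)$ reusable registers is precisely what reset destroys. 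So either your certification blocks resets whenever some trace of a promoted process remains (and then a solo process cannot finish from a configuration where a promoted process has stalled, breaking obstruction-freedom or the ``not all lose'' requirement), or it permits them (and then a merely slow promoted process can be stranded alongside a second winner, breaking safety). You also never argue the other half of the weak-leader-election/sifter specification --- that not all participants may lose --- which is a genuine safety condition the paper has to work for (Lemma~\ref{lem:lose}), and you silently assume that a process can obtain consistent multi-register views of a gadget's state, a problem the paper handles by first building an obstruction-free $M$-component snapshot from $M+1$ bounded registers (Theorem~\ref{thm:scan}) and running the stage on top of it.

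The paper escapes your dilemma by demanding much less from each stage: its object is a one-shot $\lfloor(2k+1)/3\rfloor$-sifter (Theorem~\ref{thm:sifter}) --- at most roughly two thirds of the $k$ participants win and at most $k-1$ lose --- rather than a reusable at-most-one-winner gadget, so no stage is ever reset and no stage must certify anything about processes ahead of it. The stage is implemented on a $6$-component bounded snapshot: processes try to achieve a ``clean-sweep'' of a $3$-component array $A$, and any group that over-writes someone else's clean-sweep is funneled, carrying the signature of $A$ it saw, into a knockout phase on a second array $B$ in which at least one member of the group must lose; a counting argument over ``sifting intervals'' (Lemmas~\ref{lem:main}--\ref{lem:math}) then yields the $2/3$ ratio, and Lemma~\ref{lem:repeated-log} shows $\lfloor\log_{3/2}n\rfloor+1$ chained stages suffice. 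If you want to salvage your route, you would have to either exhibit a concrete constant-space, resettable, obstruction-free unique-winner gadget (which is the hard open content, not a lemma to be assumed) or, as the paper does, relax the per-stage guarantee to a constant-fraction sifter and prove that weaker property directly.
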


There are performance benefits if the solo run that is required for termination is short,
because processes have a better chance of completing their method call before they get interrupted.
In our algorithm, processes make partial progress even if they can run uninterruptedly for a constant number of steps. As a result, a process needs to execute only a constant number of solo steps $\Theta(\log n)$ times, to finish its \TAS{} method call.

The relation between wait-freedom and obstruction-freedom has been investigated before: Fich, Luchangco, Moir, and Shavit~\cite{ELMS05} showed that obstruction-free algorithms can be transformed into wait-free ones in the unknown-bound semi-synchronous model.
The approach in this paper is different;
 we use randomization,
but stay in the fully asynchronous model.
It is easy to see that any deterministic obstruction-free algorithm can be transformed into an algorithm that is randomized wait-free against the oblivious adversary and has exponential expected step complexity.
In Section \ref{sec:wait-free}, we provide a more efficient but also simple transformation to show the following result.
\begin{theorem}
    \label{thm:randomized-step-bound}
    Suppose there is a deterministic obstruction-free algorithm whose solo step complexity is $b$.
    Then the algorithm can be transformed into a randomized one that uses the same number of registers of the same size, such that for any schedule
    determined by an oblivious adversary, each process finishes after at most $O\big(b(n + b) \log(n/\delta)\big)$ of its own steps with probability at least $1-\delta$, for any $\delta > 0$
    (which can be a function of $n$).
\end{theorem}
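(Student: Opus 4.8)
The plan is to read the obstruction-freedom hypothesis operationally: from any reachable configuration, if a process takes $b$ steps during which no other process takes a step, it completes. The transformation will manufacture such \emph{clean bursts} by randomized delaying. Concretely, I would have each process partition its steps into consecutive blocks of $b$ steps; at the start of each block it flips an independent coin that is heads with probability $q:=\Theta\big(1/(n+b)\big)$; on heads the block is \emph{active} and the process runs $b$ steps of the underlying algorithm (stopping early if it finishes), and on tails the block is \emph{passive} and the process executes $b$ no-op steps. The coins are local, so the register count and size are unchanged, and since the adversary is oblivious the schedule is fixed independently of the coins. A process completes as soon as one of its active blocks runs without any step of another process interleaved, since that is a solo run of length $b$; so it suffices to show that within $N:=\Theta\big(b(n+b)\log(n/\delta)\big)$ of its own steps every process has such an active block, with probability $\ge 1-\delta$.

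The core is a single-process estimate. Fix a process $P$ and condition on the (fixed) schedule being such that, during $P$'s first $N$ own steps, the other processes take at most $(n-1)N$ steps in total. For each of $P$'s $N/b$ blocks $r$, let $M_r$ be the number of non-$P$ steps that fall, in the global order, strictly between the first and last step of block $r$; the $M_r$ depend only on the schedule, so they are independent of $P$'s coins, and $\sum_r M_r\le (n-1)N$. By Markov's inequality at least half of $P$'s blocks have $M_r\le 2(n-1)b=O\big(b(n+b)\big)$; call these the \emph{good} blocks. If a good block $r$ is active it fails to be clean only if some other process's active block touches the global window of block $r$; the number of other processes' blocks meeting that window is $O(M_r/b+n)=O(n+b)$, each is active independently with probability $q$, so block $r$ is clean with probability at least $(1-q)^{O(n+b)}=\Omega(1)$. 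Hence each good block is active-and-clean with probability $\Omega(q)=\Omega\big(1/(n+b)\big)$, and the expected number of active-and-clean blocks among $P$'s first $N$ is $\ge \tfrac12\cdot\tfrac Nb\cdot\Omega(q)=\Omega\big(\log(n/\delta)\big)$; a concentration/amplification step then gives that $P$ completes within $N$ own steps with probability at least $1-\delta/n$.

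It remains to remove the conditioning, which I would do by a stopping-time and union argument. Suppose for contradiction that with probability exceeding $\delta$ some process takes more than $N$ own steps without completing, and let $P^\ast$ be the first process to reach its $(N{+}1)$-st step. At that moment every other process has taken at most $N$ steps, so during $P^\ast$'s first $N$ steps the other processes collectively took at most $(n-1)N$ steps — exactly the hypothesis of the single-process estimate. Applying it to $P^\ast$ and union-bounding over the $n$ possible identities of $P^\ast$ bounds the failure probability by $n\cdot(\delta/n)=\delta$, a contradiction; in particular each individual process finishes within $O\big(b(n+b)\log(n/\delta)\big)$ of its own steps with probability $\ge 1-\delta$. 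Finally, feeding the randomized $O(\log^\ast n)$-step TAS of~\cite{GW2012b} (combined with the deterministic algorithm of Theorem~\ref{thm:main}) through this transformation yields the randomized wait-free TAS claimed in the abstract.

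I expect the main obstacle to be the amplification step in the second paragraph: the events ``block $r$ is clean'' for different blocks of $P$ are \emph{not} independent, since nearby blocks share the same other-process coins, and the shared requirements are positively correlated — the unfavourable direction. Handling this cleanly (for instance by conditioning on all other processes' coins and bounding how many active other-blocks can be created near $P$'s windows, or by passing to a sub-collection of $P$'s blocks whose global windows are sufficiently separated, or via a bounded-differences inequality in the other-process coins) is the delicate point; the Markov step, the geometric-delay bookkeeping, and the stopping-time reduction are routine. A secondary nuisance is the interplay between the schedule and processes halting on completion, which is conveniently dealt with by a ``nobody halts'' coupling in which a completed process keeps issuing no-op steps, so that the window sizes $M_r$ remain functions of the fixed schedule alone.
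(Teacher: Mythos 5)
Your overall plan---activate length-$b$ blocks with a small-bias coin, count the coin tosses of other processes that can obstruct a block, show that a constant fraction of blocks is unobstructed in expectation, then concentrate and union-bound over processes---is essentially the paper's proof. But two steps are genuinely incomplete. The concentration step you flag is not a routine afterthought: the ``block $r$ is clean'' events share coins, and the plain bounded-differences inequality is quantitatively too weak here. With deviation $t=\Theta(\lambda)$ (where $\lambda=\Theta((n+b)\log(n/\delta))$ is the number of blocks) and Lipschitz constants satisfying $\max_j c_j\le b$ and $\sum_j c_j^2=O(n\lambda b)$, the standard bound gives only $e^{-\Theta(\lambda/(nb))}$, which is not $O(\delta/n)$ for this choice of $\lambda$; to get the stated theorem you need a variance-sensitive refinement (the paper uses Theorem~3.9 of McDiarmid), which exploits that each coin is heads only with probability $\Theta(1/n)$ together with $\sum_j c_j=O(n\lambda)$, yielding $e^{-\Omega(\lambda/b)}\le\delta/(2n)$. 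So ``a bounded-differences inequality in the other-process coins'' is the right family of tools, but only the refined form closes the argument, and this is precisely where the $(n+b)$ factor in the bound comes from.

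More seriously, your stopping-time de-conditioning does not work as stated. If $P^\ast$ is the first process, in the fixed schedule, to reach its $(N+1)$-st step, then the interleaving hypothesis does hold for it, but bounding the failure probability of this one schedule-determined process says nothing about the others: a later process's first $N$ steps can be interleaved with arbitrarily many steps of an earlier-finishing process (which, under your own ``nobody halts'' coupling, keeps being scheduled and keeps tossing coins), so $\sum_r M_r\le(n-1)N$ simply fails for it and the per-block obstruction count is no longer $O(n+b)$. If instead $P^\ast$ is the first process to fail, then the claim ``at that moment every other process has taken at most $N$ steps'' is false for the same reason. The missing idea is the paper's truncated-schedule construction: order the processes $p_1,p_2,\ldots$ by the time at which they complete their $N$-th step, and analyze $p_i$ under the schedule $\sigma_i$ obtained from $\sigma$ by deleting all but the first $N$ steps of each $p_j$ with $j<i$. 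Under $\sigma_i$ the interleaving bound holds deterministically, and the executions under $\sigma$ and $\sigma_i$ coincide (with the same coins) on the event that $p_1,\ldots,p_{i-1}$ finish within $N$ steps, because their surplus steps are no-ops that do not touch shared memory; a union bound over $i$ then gives the theorem. You introduce the coupling only to make the window sizes schedule-measurable, not to build these truncated schedules, so the reduction from the single-process estimate to ``all processes under $\sigma$'' is missing.
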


We apply this transformation to our obstruction-free algorithm and combine the result with the test-and-set algorithm by Giakkoupis and Woelfel~\cite{GW2012b}, to obtain a randomized wait-free TAS implementation from $\Theta(\log n)$ registers, which has expected step complexity $O(\log^\ast n)$.


\begin{theorem}\label{thm:randomized}
  There is a randomized TAS implementation from $\Theta(\log n)$ registers of size $\Theta(\log n)$ bits, such that for any schedule determined by an oblivious adversary, the maximum number of steps executed by any process is $O(\log^\ast n)$ in expectation, and $O(\log n)$ with high probability, $1 - n^{-\Omega(1)}$.
\end{theorem}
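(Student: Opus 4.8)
The plan is to instantiate the randomized TAS algorithm of Giakkoupis and Woelfel~\cite{GW2012b} with a space-efficient \emph{backup} object obtained from Theorems~\ref{thm:main} and~\ref{thm:randomized-step-bound}. Their algorithm already achieves $O(\log^\ast n)$ expected and $O(\log n)$ high-probability step complexity against the oblivious adversary, but it uses $\Theta(n)$ registers. However, it has a modular structure that I would first make precise: a process proceeds through a cascade of $O(\log^\ast n)$ constant-space randomized ``sifter'' objects, at each of which it may return a value; a process that never decides in the cascade --- which happens only because the cascade occasionally fails to elect a unique winner --- falls through to a single backup linearizable TAS object $B$ shared by all $n$ processes, and this $B$ is what accounts for the $\Theta(n)$ registers in the original construction. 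The two facts I would extract from~\cite{GW2012b} are: (i)~everything outside $B$ uses $O(\log n)$ registers of $O(\log n)$ bits; and (ii)~a fixed process reaches $B$ only with probability $n^{-\Omega(1)}$, where the exponent can be made to exceed any desired constant at the cost of $O(1)$ additional sifter passes, hence without affecting the cascade's step bounds. Linearizability of the composite object is inherited from~\cite{GW2012b}, whose correctness argument uses only that $B$ is a linearizable (not necessarily atomic) TAS, which holds for any wait-free implementation of $B$ via its linearization.

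Next I would build $B$. By Theorem~\ref{thm:main} there is a deterministic obstruction-free TAS from $\Theta(\log n)$ registers of $\Theta(\log n)$ bits with solo step complexity $b = \Theta(\log n)$. Applying the transformation of Theorem~\ref{thm:randomized-step-bound} yields a randomized wait-free TAS $B$ that uses the same $\Theta(\log n)$ registers of $\Theta(\log n)$ bits, and in which every process finishes after at most $O\big(b(n+b)\log(n/\delta)\big) = O\big(n\log n\,\log(n/\delta)\big)$ of its own steps with probability at least $1-\delta$. Integrating this tail over $\delta$ shows $\Exp[\text{steps of a process in }B] = O(n\log^2 n)$, a fixed polynomial in $n$; this bound holds for every oblivious schedule at $B$ and every number $\le n$ of participants, so it is insensitive to which and how many processes fall through the cascade.

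Assembling the two pieces gives the space bound at once: $O(\log n)$ registers for the cascade plus $\Theta(\log n)$ for $B$, all of $\Theta(\log n)$ bits; together with the $\ceil{\log n}-1$ lower bound of~\cite{GW2012b,SP1989a} this is asymptotically optimal. For the step complexity, fix a process $p$ and choose the amplification so that $\Pr[p\text{ reaches }B]\le n^{-c}$ for a constant $c$ larger than the degree of the above polynomial (say $c\ge 4$). With probability $1-n^{-c}$, $p$ decides within the cascade, which contributes $O(\log^\ast n)$ steps in expectation and $O(\log n)$ w.h.p.\ as in~\cite{GW2012b}; a union bound over the $n$ processes then yields $\max_p(\text{steps of }p) = O(\log n)$ with probability $1-n^{-\Omega(1)}$. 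For the expectation, $\Exp[\text{steps of }p] \le \Exp[\text{cascade steps of }p] + \Pr[p\text{ reaches }B]\cdot\Exp[\text{steps of }p\text{ in }B] \le O(\log^\ast n) + n^{-c}\cdot O(n\log^2 n) = O(\log^\ast n)$, and summing over $p$ (the extra factor $n$ is absorbed since $c\ge 4$) gives $\Exp[\max_p(\text{steps of }p)] = O(\log^\ast n)$.

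The step I expect to be the main obstacle is the structural audit of~\cite{GW2012b}: confirming that all of its space other than the backup is $O(\log n)$, and in particular that the probability of a process reaching the backup can be pushed below an arbitrary inverse polynomial while adding only $O(1)$ to the expected and w.h.p.\ step counts. If the published analysis only establishes a fixed inverse-polynomial (or weaker) bound on reaching the backup, I would need to reanalyze the sifter cascade, insert an amplification stage, and re-verify that linearizability and the $O(\log^\ast n)$ / $O(\log n)$ bounds survive. The remaining ingredients --- the tail integration for $\Exp[\text{steps in }B]$ and the two union bounds over processes --- are routine.
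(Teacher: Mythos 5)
Your overall strategy---keep the sifter/splitter cascade of~\cite{GW2012b} and replace its expensive tail by the randomized wait-free TAS obtained from Theorems~\ref{thm:main} and~\ref{thm:randomized-step-bound}, then bound the expectation by conditioning on whether a process ever reaches that backup---is indeed the paper's strategy, and your final assembly (the $O(n\log^2 n)$ expected-step bound for the backup, the conditioning, and the union bound) matches the paper's closing computation. However, there is a genuine gap at the point you yourself flag as ``the main obstacle,'' and it is not a routine audit: your two extracted facts about~\cite{GW2012b} are not true of that algorithm. In~\cite{GW2012b} there is no single backup object $B$ concentrating the space; the construction is a chain of $n$ GW-sifters \emph{each} of size $\log n$, alternating with $n$ splitters and $n$ two-process leader-election objects, so the $\Theta(n\log n)$ registers are spread across the whole cascade and termination is guaranteed deterministically by the splitters (each stage eliminates at least one process), not by falling through to a backup. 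Consequently fact~(i) fails outright, and fact~(ii) is not something you can obtain ``at the cost of $O(1)$ additional sifter passes'': even truncating the chain to the $O(\log^\ast n)$ stages that are used in expectation would cost $\Theta(\log n\cdot\log^\ast n)=\omega(\log n)$ registers, and after $O(\log^\ast n)$ stages the probability that more than one process survives is not $n^{-\Omega(1)}$, so a constant amount of amplification does not suffice.

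The missing content is exactly the paper's redesign of the cascade: one GW-sifter of size $\log n$, then $\log^2\log n$ sifters of size $2\log\log n$, then $\Theta(\log n)$ sifters of size $2$, and only then the backup from Theorems~\ref{thm:main} and~\ref{thm:randomized-step-bound}. This choice keeps the total space at $\log n + (2\log\log n)(\log^2\log n) + O(\log n) = O(\log n)$ registers, and the probability analysis is staged accordingly: Markov's inequality bounds the chance that more than $\log^2 n$ processes survive the first sifter by $O(1/\log n)$; a subdivision-plus-Markov argument over the $(2\log\log n)$-bit sifters bounds the chance of exhausting them by $o(1/\log n)$; and a Chernoff-type argument over the $\Theta(\log n)$ two-bit sifters (each roughly halving the survivors in expectation) drives the probability of reaching the backup below $n^{-\beta'}$ for an arbitrarily large constant $\beta'$, which is what makes $n^{-\beta'}\cdot O(n\log^2 n)$ negligible and yields both the $O(\log^\ast n)$ expectation and the $O(\log n)$ high-probability bound. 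Your fallback sentence (``I would need to reanalyze the sifter cascade, insert an amplification stage, and re-verify'') is a placeholder for precisely this construction and analysis, which is the substance of the theorem's proof; as written, the proposal does not establish the claimed $\Theta(\log n)$ space together with the $n^{-\Omega(1)}$ fall-through probability.
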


A \emph{long-lived} test-and set object provides an operation \reset{} in addition to \TAS{}.
The \reset{} operation can only be executed by a process if its preceding operation on the object was a successful \TAS{}; in that case the \reset{} operation unconditionally resets the value of the TAS object to 0.
Recently, Aghazadeh and Woelfel \cite{AW2014b} showed that any TAS object implemented from $m$ $\ell$-bit registers can be transformed into a long-lived TAS object, using $O(m\cdot n)$ registers of size $\max\{\ell,\,\log(n+m)\}+O(1)$ bits.
A \reset{} operation takes only constant time in the worst-case, and the step complexity of a \TAS{} operation of the long-lived object is the same (up to a constant additive term) as the one of the (one-shot) TAS object.
Applying this to the result stated in Theorem~\ref{thm:randomized}, yields the following:
\begin{corollary}
 A long-lived TAS object can be implemented from $O(n\log n)$ registers, each of size $O(\log n)$ bits, such that the expected step complexity of \TAS{} is $O(\log^\ast n)$ against the oblivious adversary, and the worst-case step complexity of \reset{} is $O(1)$.
\end{corollary}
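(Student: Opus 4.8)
The plan is to obtain the Corollary by directly composing Theorem~\ref{thm:randomized} with the black-box transformation of Aghazadeh and Woelfel~\cite{AW2014b}. First I would take as the base object the randomized one-shot TAS guaranteed by Theorem~\ref{thm:randomized}: it is implemented from $m = \Theta(\log n)$ registers, each of $\ell = \Theta(\log n)$ bits, and under any oblivious schedule each process completes a \TAS{} call after $O(\log^\ast n)$ of its own steps in expectation (and $O(\log n)$ with high probability). Then I would feed this object into the transformation of~\cite{AW2014b}, which turns any TAS built from $m$ $\ell$-bit registers into a long-lived TAS that uses $O(m\cdot n)$ registers of size $\max\{\ell,\,\log(n+m)\}+O(1)$ bits, adds only an $O(1)$ worst-case \reset{}, and increases the step complexity of \TAS{} by at most an additive constant.

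Next I would substitute the parameters. With $m,\ell = \Theta(\log n)$, the register count becomes $O(m\cdot n) = O(n\log n)$, and the register size becomes $\max\{\Theta(\log n),\,\log(n+\Theta(\log n))\}+O(1) = \Theta(\log n)$. A \reset{} costs $O(1)$ steps in the worst case, and a \TAS{} call costs $O(\log^\ast n)+O(1) = O(\log^\ast n)$ steps in expectation against the oblivious adversary. Linearizability of the long-lived object follows from the correctness guarantee of the transformation applied to the (linearizable) base object.

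The one point I would check carefully is the interaction of the adversary model with the composition: since the wrapper of~\cite{AW2014b} is deterministic and local, the only source of randomness is the base object, and an oblivious adversary for the long-lived object induces an oblivious schedule for the base object, so the expected-step guarantee of Theorem~\ref{thm:randomized} is inherited intact; likewise the additive constant from the transformation is absorbed since $O(\log^\ast n)+O(1)=O(\log^\ast n)$. I do not expect any substantive obstacle here — the real work lies in Theorems~\ref{thm:main}--\ref{thm:randomized} and in~\cite{AW2014b}, which are invoked as given — so this argument is essentially a parameter-plugging verification.
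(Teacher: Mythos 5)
Your proposal is correct and matches the paper's argument exactly: the corollary is obtained by applying the Aghazadeh--Woelfel transformation as a black box to the randomized TAS of Theorem~\ref{thm:randomized} and substituting $m,\ell = \Theta(\log n)$, giving $O(n\log n)$ registers of size $O(\log n)$ bits, $O(1)$ worst-case \reset{}, and $O(\log^\ast n)$ expected \TAS{} complexity. The paper treats this as the same parameter-plugging step you describe, so no further comment is needed.
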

The space lower bound for mutual exclusion \cite{BL1993a} implies that any long-lived TAS implementation requires at least $n$ registers.
Aghazadeh and Woelfel \cite{AW2014b} also gave a construction of a long-lived TAS from $O(n)$ registers, where the expected step complexity of \TAS{} and \reset{} is $O(\log\log n)$ against the oblivious adversary.

Our TAS algorithms rely on two components that are of independent interest; we expect they have other applications.
One is  an $M$-component snapshot object implemented from bounded registers.
A $B$-bounded $M$-component snapshot object maintains a collection of $M$ components.
Each component stores a value of size at most $B$ bits.
The object supports two operations $\update{i,x}$ and $\scan{}$.
Operation \scan{} returns the values of all components, and $\update{i,x}$ writes $x$ to the $i$-th component where $x$ has size at most $B$ bits and $i\in\{1,\dots,M\}$.
If each component has unbounded size, then it is simply called an $M$-component snapshot object.
The snapshot object is an important and well-studied primitive in distributed computing.
There are many implementations of snapshot objects from registers in the literature~\cite{AADGMS1993a,A1994,AA2002,AR1998,FFR2007a}.
The lower bound by Jayanti, Tan and Toueg for the general class of perturbable objects
implies that any implementation of an $M$-component snapshot object from historyless and resettable consensus objects requires at least $M - 1$ objects,
and each \scan{} operation takes at least $M - 1$ steps~\cite{JTT2000a}.
Fatourou, Fich and Ruppert improved the space lower bound for $M$-component snapshot objects to $M$ for implementations from registers~\cite{FFR2007a}.
They also showed that this lower bound is tight by providing a wait-free implementation of an $M$-component snapshot object from $M$ unbounded registers. 
But for our test-and-set implementation we need an asymptotically optimally space efficient snapshot object that uses only bounded registers.
Section \ref{sec:snapshot} contains our simple obstruction-free implementation of a $B$-bounded $M$-component snapshot object from $M+1$ bounded registers.

\begin{theorem}\label{thm:scan}
There is an obstruction-free implementation of a $B$-bounded $M$-component snapshot object from $M+1$ registers of size $\Theta(B +\log {n})$ bits,
where the solo step complexity of \scan{} is $O(M)$ and the solo step complexity of \update{} is $O(1)$.
\end{theorem}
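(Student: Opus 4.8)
The plan is to give a direct, "lazy snapshot" style implementation.  Reserve one register $R_0$ as a \emph{helping slot} and registers $R_1,\dots,R_M$, where $R_i$ holds the current value of component $i$ together with a sequence number (a monotonically increasing counter tagged with the writer's process id, so that $\Theta(\log n)$ extra bits suffice to make every write distinguishable).  An \update{i,x} simply increments its local view of the sequence number for component $i$ and writes the pair $(x,\mathit{seq})$ into $R_i$; this is $O(1)$ steps.  A \scan{} works by \emph{double collect}: repeatedly read $R_1,\dots,R_M$ in order; if two consecutive collects return identical sequence numbers in every component, the scan returns those values, since no update can have started and finished in between without changing some sequence number.

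The subtlety is termination: in an obstruction-free setting a solo \scan{} must finish in $O(M)$ steps, but a naive double collect can be forced to retry forever if updates keep interleaving.  The standard fix — having updaters embed a full scan so that a scanner can "borrow" it — would blow up the solo step complexity of \update{} to $\omega(1)$, which we cannot afford.  Instead I would use the single helping register $R_0$ asymmetrically: a scanner that has just failed a double collect writes (into $R_0$) an announcement that it is in progress, and an updater, \emph{before} writing $R_i$, first checks $R_0$ and, if a scan is announced, the updater itself performs one collect of $R_1,\dots,R_M$ and deposits that collect (together with the scanner's announcement tag) into $R_0$ for the scanner to adopt.  Crucially this extra work is done by the updater only when a scan is pending, and — because the scanner's announcement is cleared once the scanner returns — in any \emph{solo} execution of \update{} the check of $R_0$ finds no pending scan, so \update{} still runs in $O(1)$ solo steps.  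A solo \scan{} likewise needs at most two collects plus the announcement, i.e.\ $O(M)$ steps, because with no concurrent updater the second collect must match.

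The correctness argument then has two parts.  First, \emph{linearizability}: if a \scan{} returns the result of a successful double collect, linearize it at any point between the two collects; if it returns a collect adopted from $R_0$ that was deposited by some updater $u$, linearize it at the point $u$ performed that collect — in both cases one checks that no \update{} linearized "inside" the scan's interval is missed or double-counted, using that sequence numbers are strictly increasing per component.  Second, \emph{obstruction-freedom / the $O(M)$ solo bound}: once a scanner $s$ has announced itself in $R_0$, every updater that subsequently writes some $R_i$ first refreshes $R_0$ with a fresh collect tagged for $s$; after $s$ takes $O(M)$ further steps it will either complete a clean double collect or read from $R_0$ a collect that is valid for its interval, so $s$ cannot be starved — and when $s$ eventually runs solo the first mechanism alone already terminates.

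The main obstacle I expect is the second part: pinning down exactly when an adopted collect from $R_0$ is "valid for the scanner's interval" (the updater's collect must lie entirely within the scanner's \scan{} call, which requires the announcement to be in $R_0$ before the updater reads it and to still be uncleared), and simultaneously arguing that the $R_0$-checking step does not inflate the \emph{solo} step complexity of \update{} beyond $O(1)$.  Getting the announce/clear protocol and the tagging of deposited collects right so that both of these hold at once — without needing a second helping register — is the delicate point; everything else is the routine double-collect analysis.
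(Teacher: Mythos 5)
There are two genuine gaps. First, your sequence numbers cannot be ``monotonically increasing'' inside registers of size $\Theta(B+\log n)$: a bounded counter wraps around, so ``every write distinguishable'' fails, and with it the entire double-collect argument (a component can change and change back between your two collects with identical tags --- the classic ABA problem). This is precisely the difficulty the paper's construction is designed around: each register $A[i]$ carries only the value, the writer's ID, and a \emph{single} alternating bit, and the one extra register $S$ is used not for helping but as an interference detector --- every updater writes its ID to $S$ \emph{before} writing $A[i]$, and the scanner writes its own ID to $S$ before its two collects and re-reads $S$ afterwards, retrying unless $S$ is unchanged. Then two writes by the same process to the same component between the collects would have changed $S$, and a single write is caught by the alternating bit; no helping and no unbounded tags are needed, since obstruction-freedom permits the scan to retry forever under contention.

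Second, your helping mechanism through $R_0$ breaks the claimed $O(1)$ solo step complexity of update under the paper's definition, which takes the worst case over \emph{all reachable configurations}. There is a reachable configuration in which a scanner has announced itself in $R_0$ and then stalls without clearing the announcement; an updater running solo from that configuration reads the pending announcement and, by your protocol, performs a full collect, i.e.\ $\Omega(M)$ solo steps. Your argument that ``in any solo execution of update the check of $R_0$ finds no pending scan'' implicitly assumes the stalled scanner has returned and cleared $R_0$, which need not hold. Since the theorem only asks for obstruction-freedom (solo termination), the helping machinery is unnecessary anyway; dropping it and adopting the $S$-register interference-detection scheme above yields the stated bounds directly.
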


The key component of our TAS algorithm is a sifter object.
An \emph{$f(k)$-sifter}, where $f$ is a function such that $1\leq f(k)\leq\max\{k-1,1\}$ for any integer $k\geq1$, supports only one method, \Competition{}, which returns \win or \lose.
In any execution where $k$ processes call \Competition{}, at most $f(k)$ of them return \win, and at most $k-1$ return \lose.
Recent randomized TAS constructions~\cite{AA2011a,GW2012a} are based on randomized sifters,
where the number of winning processes is at most $f(k)$ in \emph{expectation}.
Here, however, we use deterministic sifters, where $f(k)$ is a worst-case bound.
Section~\ref{sec:sifter} contains our sifter implementation, which establishes the following theorem.
\begin{theorem}\label{thm:sifter}
	There is an obstruction-free implementation of a $\floor {\frac{2k+1}{3}}$-sifter
	from a $\ceil{4 \cdot\log {n}}$-bounded $6$-component snapshot object.
\end{theorem}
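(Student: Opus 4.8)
The plan is to exhibit an explicit \Competition{} algorithm layered on the $6$-component snapshot object of Theorem~\ref{thm:scan}, and then to verify its three defining properties separately: (P1) obstruction-freedom together with $O(1)$ solo step complexity; (P2) it is impossible for all $k$ competitors to return \lose{} (equivalently, at most $k-1$ do); and (P3) at most $\lfloor(2k+1)/3\rfloor$ competitors return \win{}. In the algorithm each process executes a \emph{bounded} sequence of update and \scan{} operations on the snapshot object: it writes its identifier --- together with a constant amount of bookkeeping (a few observed identifiers and/or flags), so that each component value still fits in $\lceil 4\log n\rceil$ bits --- to a few components, interleaves a constant number of \scan{}s, and finally returns \win{} or \lose{} according to a case analysis of the last snapshot it read. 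Given this shape, (P1) is essentially immediate: by Theorem~\ref{thm:scan} each \scan{} costs $O(M)=O(1)$ solo steps and each update costs $O(1)$, and a solo process issues only $O(1)$ of them before deciding, so the solo step complexity of \Competition{} is $O(1)$, which in particular implies obstruction-freedom.

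For (P2) I would use linearizability of the snapshot object to totally order all update and \scan{} events of the execution, and then --- assuming for contradiction that every one of the $k$ competitors returns \lose{} --- single out the competitor $q$ whose decisive (completed) \scan{} is linearized earliest. This $q$ observes a snapshot reflecting an ``early'' state, in which none of the components it inspects has yet been disturbed by another competitor; the case analysis is designed precisely so that such a clean view forces the outcome \win{}, contradicting the assumption. Restricting attention to completed \scan{}s in the choice of $q$ is what makes this robust to other competitors being stalled inside the snapshot object, as is possible under mere obstruction-freedom.

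Property (P3) is the heart of the argument and the step I expect to be the main obstacle. It suffices to show that in every execution the numbers $|W|$ and $|L|$ of returning winners and losers satisfy $|W|\le 2|L|+1$: since $|W|+|L|\le k$, this gives $|W|\le 2|L|+1\le 2(k-|W|)+1$, i.e.\ $3|W|\le 2k+1$, i.e.\ $|W|\le\lfloor(2k+1)/3\rfloor$. To prove $|W|\le 2|L|+1$ I would fix the total order on update and \scan{} events coming from the snapshot linearization, partition the winners into the constant number of classes corresponding to the branches of the final case analysis, and argue branch by branch that whenever three competitors reach a ``winning configuration'' of the same type, their interleaving forces at least one of the three into a branch whose decision is \lose{}. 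Collecting these forced losers without double-counting, and checking separately the three residues of $k\bmod 3$ as well as the tiny cases $k\le 3$ (pinned down by $1\le f(k)\le\max\{k-1,1\}$ together with (P2)), yields $|W|\le 2|L|+1$. The delicate points --- and where the constant $2/3$, rather than a weaker fraction, actually has to be extracted --- are (a) determining, for each branch, exactly which ordering among the constantly many relevant write/\scan{} events it forces, and (b) making the ``one forced loser per three same-type winners'' charges mutually injective across all branches, so that no loser is charged more than twice overall.
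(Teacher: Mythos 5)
There is a genuine gap, and it is structural rather than a missing detail. Your proposal never exhibits the algorithm: every verification step is deferred to ``a case analysis of the last snapshot,'' so nothing is actually proved. Worse, the shape you commit to --- each process performs a \emph{bounded} (constant) number of updates and \scan{}s and then decides --- is not just different from the paper's construction, it is unattainable. A deterministic algorithm in which every process decides after a bounded number of snapshot operations would be wait-free, since the snapshot object itself can be implemented wait-free from registers. But for $k=2$ the specification of a $\floor{(2k+1)/3}$-sifter (at most one \win, not both \lose) is exactly two-process weak leader election, which, by the consensus-number argument the paper invokes (and Theorem~\ref{thm:TAS}), has no deterministic wait-free implementation from registers. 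So any correct deterministic sifter here must allow unboundedly many steps under contention; only \emph{solo} runs can be required to finish in $O(1)$ steps. The paper's algorithm (Figure~\ref{fig:Tas-alg}) has precisely this form: an unbounded loop in which a process repeatedly writes its id into the 3-component array $A$ trying for a clean-sweep, and, whenever it holds exactly one slot of a full signature $\sigma$, is diverted into the \Knockout{$\sigma$} subprotocol on the other three components $B$, possibly returning to $A$ afterwards. Your (P1) claim of $O(1)$ \emph{worst-case} operations per process is therefore false for any correct algorithm, and obstruction-freedom has to be argued as in Lemma~\ref{lem:obs-free}, about solo continuations from arbitrary reachable configurations.

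Because the premise fails, the sketches for (P2) and (P3) have no footing, and they would not go through even for the paper's algorithm in the form you state them. For (P2), ``the earliest linearized decisive scan sees an undisturbed view and hence must win'' is unsound: the earliest completed scan can perfectly well be a losing scan (it may already see another competitor occupying two slots of $A$); the paper instead reasons about the process performing the \emph{last} write to $A$, showing it can neither lose on $A$ nor signature-lose on $B$, and that if it process-loses on $B$ then some same-signature competitor cannot lose (Lemma~\ref{lem:lose}). For (P3), your target inequality $|W|\le 2|L|+1$ is indeed equivalent to the bound, but ``three same-type winners force a loser'' is not how the bound arises and is not substantiated; the paper's argument partitions the execution into the $\kappa-1$ sifting intervals between consecutive winning scans and winning writes, shows every such interval contains the last or second-last write of some process that cannot win (Lemmas~\ref{lem:help}--\ref{lem:main2}) --- the crux being that over-writing a clean-sweep forces processes through \Knockout{} with a stale signature --- and then charges each loser to at most two intervals, giving $\ceil{(k'-1)/2}+k'\le k$. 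To repair your proposal you would need to (i) give the algorithm with its unbounded loop and the $B$/signature mechanism (or an equivalent), and (ii) replace the ``final case analysis'' framing with an interval-based counting argument of this kind.
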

By combining $O(\log n)$ sifters, and our snapshot object from Theorem~\ref{thm:scan} we obtain our TAS implementation using $O(\log n)$ registers.


Section~\ref{sec:model} defines our model of computation and communication.
We assume these sifter and snapshot tools to implement our deterministic TAS object in Section~\ref{sec:deterministicTAS},
and, together with Theorem~\ref{thm:randomized-step-bound}, our randomized TAS object in Section~\ref{sec:randomizedTAS}.

\section{Model and Preliminaries}
\label{sec:model}

Our model of computation and communication is the standard asynchronous shared memory model where a set $\procs$ of $n$ processes
with distinct identifiers communicate through shared multi-reader multi-writer registers.
Each register supports two atomic operations, read and write.

An algorithm is an assignment of a program to each process.
Each process' program can access that process' 
local registers as well as the shared registers.
At each \emph{step} by a process, that process executes
a single shared memory access (or, initially, its program invocation)
followed by all its subsequent local operations and random choices,
up to the point where that process is poised to execute its next shared memory operation.
A \emph{schedule} is a sequence of process identifiers.
A schedule, $\sigma$, gives rise to a sequence of steps, called an \emph{execution} as follows.
The $i$-th step in the execution is the next step in the program of the $i$-th process in $\sigma$.

An algorithm is \emph{deterministic} if each process' program is deterministic.
A deterministic implementation of a method is \emph{wait-free} if,
from any point of an execution and for any process,
the process completes its method call in a finite number of its own steps,
regardless of the intervening steps taken by other processes.
A deterministic implementation of a method is \emph{obstruction-free} if,
from any point of an execution and for any process $p$,
$p$ completes its method call in a finite number of its own steps,
provided there are no intervening steps taken by other processes.
In such an execution, we say that $p$ runs \emph{solo} during these uninterrupted steps by $p$.

The algorithm is \emph{randomized} if some process' program is randomized.
An implementation of a method is \emph{randomized wait-free} if,
from any point of an execution and for any process $p$,
the  number of steps by $p$ required for $p$ to complete its method call is finite in expectation,
regardless of the intervening steps taken by other processes~\cite{Her1991a}.

A \emph{test-and-set} (\emph{TAS}) object stores one bit, which is initially 0,
and supports a \TAS{} operation that sets the bit's value to 1 and returns its previous value.

An \emph{$f(k)$-sifter} object, where $f$ is a function such that $1\leq f(k)\leq\max\{k-1,1\}$ for any integer $k\geq1$, supports only one operation, \Competition{}, which returns \win or \lose.
In any execution where $k$ processes call \Competition{}, at most $f(k)$ of them return \win, and at most $k-1$ return \lose.

A $B$-bounded $M$-component snapshot object stores a vector $V=(V_1,\dots,V_M)$ of $M$ values from some domain $D$,
where each $d$ in $D$ has size at most $B$ bits.
It supports two operations: \scan{} takes no parameter and returns the value of $V$,
and $\update{i,x}$, $i\in\{1,\dots,M\}$, $x\in D$, writes $x$ to the $i$-th component of $V$ and returns nothing.

An object is \emph{implemented} by providing a program, the \emph{method} for \op,
for each operation, \op, defined for that object.
Since our objective is to implement a TAS object, we need to provide a \TAS method.
Our TAS algorithm is then just the \TAS method assigned to each process.
Our correctness condition is \emph{linearizability} \cite{HW1990a}, which requires that for any execution of our algorithm
and for every \TAS method call, $\mu$, in that execution, there is a point between $\mu$'s invocation and response
such that if the entire method call is replaced by the atomic \TAS operation returning the same value as $\mu$ at that point,
the resulting execution is valid for the TAS object.
Linerizability is a \emph{composable} property:
A linerizable implementation of object $A$ assuming atomic objects $B$,
composed with a linearizable implementation of $B$ assuming atomic objects $C$,
is a linearizable implementation of $A$ using $C$.
We exploit this by providing a linearizable implementation of a TAS object assuming an $M$-component snapshot object,
and then a linearizable implementation of an $M$-component snapshot on registers.
Linearizabilty is also a \emph{local} property:
any correct deterministic algorithm that uses a collection of atomic objects,
will remain correct if these objects are replaced with their linearizable implementations.

Implementing a TAS object is related to solving \emph{weak leader election}, 
where each participating process has to decide on one value, \win or \lose.
Among all processes that finish their weak leader election protocol, at most one process is allowed to win, and not all processes may lose.
Hence, if all processes finish, then exactly one process, the leader, wins.
(The term leader election is ambiguous.
It is also used to denote the \emph{name consensus} problem,
where the losing processes need to output the ID of the winner.
We add the qualifier ``weak'' in order to distinguish the two variants.)
Weak leader election and test-and-set are equally hard problems with respect to asymptotic space complexity.
Replacing the return values 0 and 1 of a \TAS{} operation with \win and \lose, respectively, yields a weak leader election protocol.
The difference is that TAS requires that the \TAS method that  returns 1 must be linearized before those that return 0,
whereas weak leader election lacks the corresponding requirement for \win and \lose.
Nevertheless, Golab, Hendler and Woelfel~\cite{GHW2010a} gave an implementation of a TAS object
using weak leader election and one additional register:

\begin{theorem}\label{thm:TAS}
	\cite{GHW2010a} A linearizable TAS object can be implemented using a weak leader election protocol and one additional multi-reader/multi-writer binary register, such that a \TAS{} method requires only a constant number of read and write operations in addition to the weak leader election protocol.
\end{theorem}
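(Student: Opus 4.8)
The plan is to combine the given weak leader election protocol $L$ with the extra binary register $R$, initialised to $0$, so that $L$ singles out the one \TAS{} call that returns $0$ while $R$ both gives latecomers a constant-time fast path and rescues linearizability from the fact that leader election is ``weak''. The \TAS{} method first reads $R$; if it reads $1$, it returns $1$. Otherwise it \emph{competes}, i.e.\ invokes the \Competition{} method of $L$; then it writes $1$ to $R$, regardless of the outcome; and finally it returns $0$ if the outcome was \win{} and $1$ if it was \lose{}. A solo execution reads $R$ once, competes at most once --- which terminates because $L$ is obstruction-free and no other process takes steps --- writes $R$ at most once, and returns, so the construction is obstruction-free; and in any execution a \TAS{} call performs at most one read and one write of $R$ besides a single \Competition{} call, which is the claimed constant overhead. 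Moreover, at most one process obtains \win{} from $L$, so at most one \TAS{} call returns $0$.

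For linearizability, note that $R$ is written only the value $1$, so $R=0$ holds at a time $t$ exactly when no write to $R$ precedes $t$. Assign linearization points as follows: the unique call that obtains \win{} from $L$ (if any) --- whether or not it has returned --- is linearized as returning $0$ at a point just after its invocation; a $1$-returning call is linearized at its read of $R$ if it took the fast path, and at its response if it competed and lost. It suffices to show that whenever some call returns $1$ there is a designated $0$-call --- the \win{}-call, or, if there is none, a single pending competing call that the linearization completes --- whose linearization point strictly precedes that of every $1$-returning call; together with uniqueness of this $0$-call, that makes the induced sequential history a (possibly empty) prefix of $0,1,1,\dots$, exactly as the \TAS{} specification requires. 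The engine is that \emph{every competing process writes $1$ to $R$ before returning}: hence every $1$-returning call's linearization point lies after some write to $R$ (both the fast-path read of $1$ and the post-loss response do so), whereas any call that read $R=0$ in its first step --- in particular the \win{}-call, and any competing call that never returns --- was invoked, hence linearized, before every write to $R$, and so before every $1$-returning linearization point.

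It remains to exhibit the $0$-call. If some process obtains \win{} from $L$, then it read $R=0$ in its first step (otherwise it would have returned $1$ without competing), and it is the desired $0$-call. Otherwise, suppose some call returns $1$ yet every competing call eventually returns; since at most $n$ processes compete and all of them finish, there is a finite time $\tau$ by which every \Competition{} call has completed, and applying to the prefix ending at $\tau$ the guarantee that not all processes that finish the leader election return \lose{} produces a winner, a contradiction. Hence some competing call never returns; its \TAS{} call is pending, it read $R=0$ in its first step, and we complete it as returning $0$. This is the crux, and where the register earns its keep: weak leader election says nothing about when a loser finishes relative to the winner, so a loser may return $1$ while the eventual winner has not yet even been invoked, and then no assignment of linearization points to the $1$-returning calls alone can be consistent with the \TAS{} specification; forcing every competitor to flip $R$ on its way out is precisely what diverts every latecomer onto the fast path, which rules out a ``late winner'' and guarantees that a legitimate $0$-call --- the true winner, or a forever-stuck competitor --- exists. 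What remains is routine: checking that the chosen linearization points lie inside their calls' intervals, that the resulting sequential history is indeed $0$ followed by $1$s, and the boundary cases (no \win{}-call; only a stuck competitor; mixtures of fast-path and losing returns).
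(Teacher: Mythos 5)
The paper itself gives no proof of Theorem~\ref{thm:TAS} (it is quoted from~\cite{GHW2010a}), and your construction is essentially the standard one from that work: a doorway register $R$ that competitors set before returning, a fast path returning $1$ when $R=1$, and linearization of the (possibly pending) \win{}-call just after its invocation, which necessarily precedes every write to $R$ and hence every $1$-returning linearization point. Your argument is correct, including the key step of completing a forever-pending competitor as the $0$-returner when no finisher wins; the only cosmetic remark is that the ``not all finishers may lose'' guarantee is more naturally invoked for the complete execution (all competitors finished and all lost) than for the prefix ending at your time $\tau$, and that having the winner also write $R$ is harmless but unnecessary.
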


For a deterministic obstruction-free implementation, the \emph{solo step complexity} is the worst case
over all processes $p$ and all reachable configurations $C$ of the number of steps taken in a solo execution by $p$
starting at $C$ until $p$ terminates its method.

\section{Space Efficient Deterministic Test-and-Set}\label{sec:deterministicTAS}

Because of Theorem \ref{thm:TAS}, to establish Theorem \ref{thm:main},  
it suffices to give an implementation of weak leader election that achieves the space and step complexity claimed in that theorem. 
We now describe this implementation, assuming we have the use of the sifter object of Theorem \ref{thm:sifter} and the snapshot object of Theorem \ref{thm:scan}.

An $f(k)$-sifter and a $g(k)$-sifter
can be combined to obtain an $f(g(k))$-sifter, by letting the losers of the $g(k)$-sifter lose, and the winners call \Competition{} on the $f(k)$-sifter.
Hence, by combining enough sifter objects, we can obtain a 1-sifter, which is a weak leader election protocol.

In Section \ref{sec:sifter} we show how to implement a single $\floor {\frac{2k+1}{3}}$-sifter
from a 6-component snapshot object.
The implementation is obstruction-free.
Moreover, whenever a process starts running alone, it terminates after $O(1)$ scan and write operations.
By Theorem~\ref{thm:scan}, we can implement a 6-component snapshot object from $7$ registers,
where the  solo step complexity of each method is constant.
Hence, using the obstruction-free snapshot implementation from Theorem~\ref{thm:scan},
our $\floor {\frac{2k+1}{3}}$-sifter implementation has constant solo step complexity and uses $7$ registers.

Since multiple sifters are combined to construct our weak leader election algorithm (and hence our TAS implementation)
it is more space efficient to replace the individual snapshot objects with a single snapshot object shared by all sifters.
We can simulate $\ell$ distinct 6-component snapshot objects by one $(6\ell)$-component snapshot object.
By Theorem~\ref{thm:scan}, we can implement such a snapshot object using $6 \ell + 1$ registers where the solo step complexity is $O(\ell)$.
Hence, Theorem~\ref{thm:scan} and Theorem~\ref{thm:sifter} combine to yield:
\begin{corollary}\label{cor:combinedSifters}
  There is an obstruction-free implementation of $\ell$ instances of $\floor {\frac{2k+1}{3}}$-sifters using $6 \ell + 1$ registers,
  each of size $\Theta(\log n)$-bits, such that the solo step complexity of \Competition{} is $O(\ell)$.
\end{corollary}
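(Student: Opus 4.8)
The plan is to assemble Theorem~\ref{thm:sifter} and Theorem~\ref{thm:scan} using the composability and locality of linearizability, which preserve obstruction-freedom and let the solo step complexities of the layers compose multiplicatively. The first step is to make precise the simulation of $\ell$ separate $6$-component snapshot objects by a single $(6\ell)$-component snapshot object $S$: reserve the block of components $\{6(j-1)+1,\dots,6j\}$ of $S$ for the $j$-th simulated object; implement $\update{i,x}$ on the $j$-th object (for $i\in\{1,\dots,6\}$) as $\update{6(j-1)+i,\,x}$ on $S$, and \scan{} on the $j$-th object as a \scan{} of $S$ followed by the local projection onto its $j$-th block. Linearizing each simulated operation at the linearization point of the underlying operation on $S$ gives, after restriction to block $j$, a valid sequential history of the $j$-th $6$-component snapshot object, so this is a linearizable, obstruction-free implementation of $\ell$ independent $6$-component snapshot objects (obstruction-freedom because $S$ is obstruction-free and a solo top-level execution induces a solo execution on $S$).

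Next I would instantiate the simulation with $B=\ceil{4\cdot\log n}$ and apply Theorem~\ref{thm:scan} with $M=6\ell$ to implement $S$ from $M+1=6\ell+1$ registers of size $\Theta(B+\log n)=\Theta(\log n)$ bits; layering the $\ell$ sifters of Theorem~\ref{thm:sifter} on top of the $\ell$ simulated $6$-component snapshot objects then yields an obstruction-free implementation of $\ell$ instances of $\floor{\frac{2k+1}{3}}$-sifters from $6\ell+1$ registers, each of $\Theta(\log n)$ bits, as claimed.

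For the step bound I would recall that, as established in Section~\ref{sec:sifter}, once a process runs solo inside \Competition{} it performs only $O(1)$ \scan{}/\update{} calls on its $6$-component snapshot before terminating; by Theorem~\ref{thm:scan} each such \scan{} costs $O(M)=O(6\ell)=O(\ell)$ solo steps on $S$ and each \update{} costs $O(1)$, while the block projection is purely local. Hence a solo \Competition{} finishes within $O(\ell)$ steps. I do not expect a real obstacle here; the only points deserving a sentence of justification are that the block-wise simulation is genuinely linearizable as $\ell$ separate objects and that obstruction-freedom survives the three-layer composition, both of which follow from the composability and locality of linearizability together with the observation that a solo execution at the top level induces solo executions all the way down.
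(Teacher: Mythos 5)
Your proposal is correct and follows essentially the same route as the paper: the paper likewise simulates the $\ell$ distinct $6$-component snapshot objects by one $(6\ell)$-component snapshot object, implements it via Theorem~\ref{thm:scan} with $M=6\ell$ (giving $6\ell+1$ registers of $\Theta(\log n)$ bits), and combines the $O(1)$ solo scan/write bound of the sifter with the $O(M)=O(\ell)$ solo cost of \scan{} to get $O(\ell)$ solo steps for \Competition{}. Your added remarks on block-wise linearizability and the preservation of obstruction-freedom under composition are details the paper leaves implicit, not a different argument.
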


We can implement a weak leader election protocol using a sequence of at most $\ell = \floor{\log_{3/2} n} + 1$ instances of a $\floor {\frac{2k+1}{3}}$-sifter.
As describe earlier, each process starts by invoking the \Competition{} method of the first sifter;
the winners of the $i$-th sifter proceed to the $(i+1)$-th sifter,
while the losers lose the weak leader election;
the winner of the weak leader election is the process that wins the last sifter.
We need to show that $\ell$ repeated applications of function $f(k) = \floor {\frac{2k+1}{3}}$ to an initial value of $k=n$ yield a value of 1.


\begin{lemma}\label{lem:repeated-log}
	Let $f(n) = \floor{\frac{2n+1}{3}} $.
	Let $f^{(0)}(n) = n$ and $f^{(i+1)}(n) = f(f^{(i)}(n))$.
	Then for any  integer $\ell \geq \log_{3/2} n $,  $f^{(\ell)}(n) = 1$ for any $n \geq 1$.
\end{lemma}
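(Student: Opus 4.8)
The plan is to show that the iteration $f(n) = \floor{\frac{2n+1}{3}}$ contracts by essentially a factor of $2/3$ at each step, so that after roughly $\log_{3/2}n$ steps the value drops to $1$, and then to check that $1$ is a fixed point so all further iterations stay at $1$. The subtle point is the floor: naively $f(n) \le \frac{2n+1}{3}$, which is not quite $\le \frac{2}{3}n$, so I need a slightly more careful bound to get a clean geometric decay.

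First I would record the fixed-point fact: $f(1) = \floor{1} = 1$, so once the sequence reaches $1$ it remains $1$ forever; hence it suffices to show $f^{(\ell)}(n) = 1$ for $\ell = \floor{\log_{3/2}n}$ (or any larger $\ell$). I would also note monotonicity of $f$, which lets me compare iterates cleanly. The key inequality I would prove is that for all integers $n \ge 2$, $f(n) \le \frac{2n}{3}$: indeed $f(n) \le \frac{2n+1}{3}$, and since $f(n)$ is an integer, $f(n) \le \floor{\frac{2n+1}{3}} \le \frac{2n}{3}$ whenever $\frac{2n+1}{3}$ is not an integer of the form forcing the bound up — more carefully, $\floor{\frac{2n+1}{3}} \le \frac{2n}{3}$ is equivalent to $2n+1 \le 2n + 3\{\cdot\}$-type reasoning, which one checks holds for $n\ge 2$ by examining $n \bmod 3$: the three residue cases give $f(n) = \frac{2n}{3}, \frac{2n}{3}+\frac{1}{3}\to\lfloor\rfloor, \ldots$; in each case $f(n) \le \frac{2n}{3}$ for $n \ge 2$, with the only near-miss at small $n$ handled directly (e.g. $f(2)=1\le 4/3$, $f(3)=2\le 2$).

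Given this, an easy induction shows $f^{(i)}(n) \le \max\{1, (2/3)^i n\}$ for all $i$: the base case is trivial, and for the step, if $f^{(i)}(n) = 1$ we are done by the fixed-point property, and otherwise $f^{(i)}(n) \ge 2$, so $f^{(i+1)}(n) \le \frac{2}{3}f^{(i)}(n) \le (2/3)^{i+1}n$. Now take $\ell \ge \log_{3/2}n$, i.e. $(3/2)^\ell \ge n$, equivalently $(2/3)^\ell n \le 1$. Then $f^{(\ell)}(n) \le \max\{1, (2/3)^\ell n\} = 1$, and since $f$ always returns a value $\ge 1$ (as $f(k) \ge 1$ by the sifter constraint $1 \le f(k)$, or directly $\floor{\frac{2k+1}{3}} \ge 1$ for $k \ge 1$), we get $f^{(\ell)}(n) = 1$.

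**The main obstacle** is purely the floor bookkeeping in establishing $f(n) \le \frac{2}{3}n$ for $n \ge 2$; everything else is a one-line induction plus the trivial fixed-point observation. I would dispatch the floor issue by a short case analysis on $n \bmod 3$ (with $n=2,3$ checked by hand), rather than trying to find a slick uniform estimate. One should also double-check the edge case $n=1$: then $\ell \ge \log_{3/2}1 = 0$ is allowed, $f^{(0)}(1) = 1$, consistent with the claim.
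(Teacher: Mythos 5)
There is a genuine gap: your key inequality $f(n)\le \tfrac{2n}{3}$ for all $n\ge 2$ is false. Whenever $n\equiv 1 \pmod 3$, the quantity $\tfrac{2n+1}{3}$ is an integer, so the floor does not help and $f(n)=\tfrac{2n+1}{3}>\tfrac{2n}{3}$; concretely $f(4)=\lfloor 9/3\rfloor=3>8/3$, $f(7)=5>14/3$, etc. Consequently your induction claim $f^{(i)}(n)\le\max\{1,(2/3)^i n\}$ already fails at $n=4$, $i=1$. The case analysis on $n\bmod 3$ that you deferred (``the three residue cases give \dots'') is exactly where the argument breaks: it is the residue class $n\equiv 1\pmod 3$ that kills the purely multiplicative contraction. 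A weaker multiplicative bound such as $f(n)\le\tfrac{3n}{4}$ (for $n\ge 2$) is true, but it only yields the conclusion for $\ell\ge\log_{4/3}n$, which is weaker than the stated $\ell\ge\log_{3/2}n$.

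The repair is to contract an affine quantity rather than $n$ itself, which is what the paper does. From $f(m)\le\tfrac{2m+1}{3}$ one gets $f(m)-1\le\tfrac{2}{3}(m-1)$ with no case analysis, and induction gives $f^{(k)}(n)\le\left(\tfrac{2}{3}\right)^k(n-1)+1$. For integer $\ell\ge\log_{3/2}n$ this is $\le 2-\tfrac1n<2$, and since the iterates are integers that never drop below $1$ (your fixed-point observation $f(1)=1$ is fine and is also used in the paper), it follows that $f^{(\ell)}(n)=1$. Note the paper's final step genuinely needs integrality to pass from ``$<2$'' to ``$=1$''; your plan to conclude directly from ``$\le 1$'' relied on the false bound.
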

\begin{proof}
	First, observe that if $n\geq 1$ then $\floor{\frac{2n+1}{3}} \geq 1$, so $f^{k}(n)$ never drops below $1$ for any $k$. 	
	Now, we show by induction on $k$, that
	\begin{displaymath}
	f^{(k)}(n) \leq \left(\frac{2}{3} \right)^k  n + 1 - \left(\frac{2}{3} \right)^k \text{ for }  k \geq 0.
	\end{displaymath}
	For the basis, $k=0$, observe that $f^{(0)}(n) = n = (\frac{2}{3} )^0 \cdot n + 1 - (\frac{2}{3})^0$.\\
	For the inductive step:
	\begin{eqnarray*}
	f^{(k+1)}(n) &=&   f(f^{(k)}(n)) = \left\lfloor \frac{2(f^{(k)}(n))+1}{3} \right\rfloor \leq \frac{2(f^{(k)}(n))+1}{3} \\
	           &\leq & \frac{2 \big( (\frac{2}{3} )^k  n + 1 - (\frac{2}{3})^k \big) + 1}{3}  \text { by the induction hypothesis} \\
	           &=& 	\left(\frac{2}{3}\right)^{k+1} n + 1 -\left(\frac{2}{3}\right)^{k+1}.
	\end{eqnarray*}
	
Thus, for any integer $ \ell \geq \log_{3/2} n$, $f^{(\ell)} (n) \leq 1 +1 - \frac{1}{n} <2$.

But $f$ takes only integer values and $\ell$ is an integer,
implying that after $\floor {\log_{3/2} n} +1$ applications of $f$,
the value is at most 1.	
\end{proof}

Thus, Theorem~\ref{thm:main} follows from  Theorem \ref{thm:TAS}, Corollary~\ref{cor:combinedSifters} and Lemma~\ref{lem:repeated-log}.
More precisely, we have:
\begin{theorem}\label{thm:tightenedMain}
	There is a deterministic obstruction-free implementation of a TAS object from
	$6 \floor{\log_{3/2} n} + 7 $ registers each of size at most $ 4\log n$ bits,
	where the solo step complexity of the \TAS{} method is $\Theta(\log n)$.
\end{theorem}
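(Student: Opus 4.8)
The plan is to assemble Theorem~\ref{thm:tightenedMain} purely by bookkeeping over the three ingredients already established: Theorem~\ref{thm:TAS} (weak leader election plus one binary register gives linearizable TAS), Corollary~\ref{cor:combinedSifters} (an obstruction-free implementation of $\ell$ instances of $\floor{\frac{2k+1}{3}}$-sifters from $6\ell+1$ registers of $\Theta(\log n)$ bits with solo step complexity $O(\ell)$), and Lemma~\ref{lem:repeated-log} (iterating $f(k)=\floor{\frac{2k+1}{3}}$ on $n$ reaches $1$ within $\floor{\log_{3/2} n}+1$ steps). So first I would fix $\ell = \floor{\log_{3/2} n}+1$ and build the weak leader election protocol as the cascade of $\ell$ sifters described just before the lemma: every process calls \Competition{} on sifter $1$; winners of sifter $i$ call \Competition{} on sifter $i+1$ while losers return \lose; whoever wins sifter $\ell$ returns \win.

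Second I would verify correctness of this cascade as a weak leader election protocol. Using the sifter composition fact (an $f$-sifter followed by a $g$-sifter yields an $f\circ g$-sifter, since losers drop out and only winners advance), the $\ell$-fold cascade is an $f^{(\ell)}$-sifter, and by Lemma~\ref{lem:repeated-log} with $\ell\geq\log_{3/2}n$ we get $f^{(\ell)}(n)=1$, i.e.\ at most one process returns \win. The ``not all lose'' property holds because each sifter guarantees at most $k-1$ of its $k$ participants lose, so as long as $\geq 1$ process enters a sifter, $\geq 1$ advances; hence if all processes finish the protocol, exactly one wins. Obstruction-freedom transfers: in a solo run a process simply passes through each of the $\le\ell$ sifters in turn, and by Corollary~\ref{cor:combinedSifters} each \Competition{} call it makes on the shared $(6\ell)$-component snapshot object terminates in $O(\ell)$ solo steps, giving $O(\ell^2)$ total — but I should instead invoke the finer point noted in the text (each sifter terminates in $O(1)$ \scan/\update operations on its logical snapshot, so with the shared $(6\ell)$-component snapshot each sifter costs $O(\ell)$ steps and the whole cascade $O(\ell^2)$; to get the claimed $\Theta(\log n)$ I need the snapshot's solo step complexity per operation to not scale with $\ell$, which is the subtle part — see below).

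Third I would do the resource accounting. Registers: Corollary~\ref{cor:combinedSifters} gives $6\ell+1 = 6(\floor{\log_{3/2}n}+1)+1 = 6\floor{\log_{3/2}n}+7$ registers for the sifters, and Theorem~\ref{thm:TAS} adds one binary register, which is absorbed into the $6\floor{\log_{3/2}n}+7$ count (or stated as $+O(1)$; I would reconcile the exact constant with how the statement is phrased — the binary register is dominated and can be folded in by a slightly looser argument, or the $+7$ already has slack). Register size: each of the $6\ell+1$ registers has size $\Theta(B+\log n)$ bits by Theorem~\ref{thm:scan} with $B=\ceil{4\log n}$ from Theorem~\ref{thm:sifter}, which is $\Theta(\log n)$, and I would check it fits under the stated $4\log n$ bound (this again may need the constants in Theorem~\ref{thm:scan}/\ref{thm:sifter} to be tracked precisely, or the bound in the statement to be read as ``at most $c\log n$''). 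Finally, wrapping with Theorem~\ref{thm:TAS} converts weak leader election into a linearizable TAS at an additive $O(1)$ step cost, so the solo step complexity is $O(\ell)+O(1)=O(\log n)$, and it is $\Omega(\log n)$ because a solo process must traverse $\Omega(\log n)$ sifters, giving $\Theta(\log n)$.

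The main obstacle I expect is the step-complexity claim $\Theta(\log n)$ rather than $\Theta(\log^2 n)$: naively, each of the $\ell=\Theta(\log n)$ sifters is backed by a $(6\ell)$-component snapshot whose \scan{} has solo cost $O(\ell)$, so a solo traversal seems to cost $\Theta(\ell^2)=\Theta(\log^2 n)$. Resolving this requires the remark made in the surrounding text — that a process makes ``partial progress'' after a constant number of solo steps and needs to execute only a constant number of solo steps $\Theta(\log n)$ times — meaning the relevant measure is total solo steps across the lifetime, and in fact a single solo sweep through all sifters should only invoke a constant number of \scan{}/\update{} operations \emph{per sifter} but each such operation on the big shared snapshot costs $\Theta(\ell)$; so to get $\Theta(\log n)$ one must argue that a solo process's accesses can be charged so that the total is $O(\ell)$, e.g.\ because an \update{} is $O(1)$ and the sifter needs only $O(1)$ \scan{}s \emph{in total} once it is running solo and has already advanced, or because the snapshot implementation of Theorem~\ref{thm:scan} lets a solo \scan{} that only reads components relevant to the current sifter cost $O(1)$. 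I would therefore spell out carefully how the sifter's solo behavior interacts with the shared snapshot so that the aggregate solo step count is $\Theta(\log n)$, flagging this as the one place where the bound is not a pure counting exercise; everything else is arithmetic on $\ell=\floor{\log_{3/2}n}+1$ together with composition of linearizable objects.
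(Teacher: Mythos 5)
Your proposal follows exactly the paper's route: the paper proves this theorem in a single line by combining Theorem~\ref{thm:TAS}, Corollary~\ref{cor:combinedSifters}, and Lemma~\ref{lem:repeated-log} with $\ell=\floor{\log_{3/2}n}+1$, which is precisely your assembly. The points you flag are not defects of your argument but places where the paper's terse derivation leaves accounting implicit: the one extra binary register from Theorem~\ref{thm:TAS} and the exact ``$4\log n$ bits'' bound are constant-level slack, and, more substantively, your worry about $\Theta(\log^2 n)$ is legitimate --- if every \Competition{} call performed a full scan of the shared $(6\ell)$-component snapshot, as the $O(\ell)$ bound in Corollary~\ref{cor:combinedSifters} suggests, a solo traversal of the $\ell$ sifters would cost $\Theta(\ell^2)$ low-level steps. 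The resolution you sketch is the right one: each sifter needs only its own six components, and the implementation of Section~\ref{sec:snapshot} supports such a ``partial'' scan (write to $S$, collect only the six relevant registers twice, re-read $S$) in $O(1)$ solo steps, because the linearizability argument there is componentwise; equivalently, the $\ell$ per-sifter 6-component snapshots can share the single register $S$, preserving the $6\ell+1$ register count. With that refinement each sifter costs $O(1)$ snapshot operations of $O(1)$ steps each in a solo run, yielding the claimed $\Theta(\log n)$; the paper asserts this without spelling it out, so your flag identifies a real omission in the paper's bookkeeping rather than a gap in your own approach.
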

Since Corollary~\ref{cor:combinedSifters} follows from Theorem~\ref{thm:sifter} and Theorem~\ref{thm:scan},
it remains to prove these two theorems to complete the implementation of our deterministic TAS object.
This we do in the next two sections.

\section{Sifter Implementation}\label{sec:sifter}

\begin{figure*}
  \renewcommand{\algorithmcfname}{Algorithm}
  \DontPrintSemicolon
  \begin{center}
  \begin{minipage}{.9\textwidth}
	\textbf{Shared Objects:}
	\begin{itemize}
		\item[$\bullet$] $A[0, 1, 2]$ is an array of the first 3 components of a 6-component snapshot object $U$. Each array entry stores a value from $\procs \cup \{\bot\} $ and is initially $\bot$.
\smallskip

		\item[$\bullet$] $B[0, 1, 2]$ is an array of the second 3 components of $U$. Each array entry stores a pair $(\id,\sig)$, where $\id \in \procs \cup \{\bot\} $, and $\sig$ is a triple from the set $(\procs \cup \{\bot\} )^3$. Initially, $\id = \bot$ and $\sig = (\bot,\bot,\bot)$.
	\end{itemize}
	
	\textbf{Notation:} For any array $X$ and value $v$, let $\num{v}{X} :=|\{i:X[i] = v\}|$.
	\bigskip
	
	\begin{algorithm}[H]
    \nonl\TitleOfAlgo{compete()}
		
		$\pos := 0$\;
		\While(\label{line:while}){\True}{
			$A[\pos]$.write$(p)$\label{line:Awrite} \;
			$a:=$ scan$(A)$\label{line:Cscan}\;	
			\IlIf{$\num{p}{a}=3$
\label{line:Awin-condition}}{\Return{\win}\label{line:Awin}}\;
			\IlIf{$\exists\, q \in \procs\colon \num{p}{a}<\num{q}{a}$ \label{line:second-if}}{\Return{\lose}\label{line:first-die}}\;
			\If{$\num{p}{a} = 1$\label{line:num1}}{	
				\IlIf{\Knockout{a}\label{line:invokeK}}{\Return{\lose}\label{line:second-die}}			
			}
			Let $\pos \in\set{0,1,2}\colon a[\pos] \neq p$ \KwSty{and} $a[(\pos - 1)\bmod 3] = p$ \label{line:pos}\;
		}
	\end{algorithm}
  \setlength{\interspacetitleruled}{0pt}%
  \setlength{\algotitleheightrule}{0pt}
  \bigskip
  
	\begin{function}[H]
	\nonl\TitleOfAlgo{knockout($\sig$)}
		$\index := 0$\;
		\While{\True}{
			$B[\index]$.write$((p,\sig))$\label{line:Bwrite}\;
			$(\widehat{a},\widehat{b}):=$ scan$(A,B)$\label{line:Kscan}\;
			\IlIf{$\widehat{a} \neq \sig$\label{line:A-changed}}{\Return{\True}}\label{line:first-true}\;
			\IlIf{$\exists\, q\in\PP\colon q \neq p$ \KwSty{and} $\num{(q,\sig)}{\widehat{b}} \geq 2$}{\Return{\True}}\label{line:second-true}\;
			\IlIf{$\num{(p,\sig)}{\widehat{b}}=3$}{\Return{\False}}\label{line:Bwin}\;
			Let $\index \in\set{0,1,2}\colon \widehat{b}[\index] \neq (p,\sig)$\label{line:index}\;
		}
	\end{function}
	\caption{Implementation of a sifter for process $p \in \procs$}
	\label{fig:Tas-alg}
  \end{minipage}
  \end{center}
\end{figure*}

This section establishes Theorem \ref{thm:sifter}.
Our sifter implementation is presented in Figure~\ref{fig:Tas-alg}.
To aid intuition we first consider a very simple obstruction-free sifter object, implemented from a 3-component snapshot object $A$.
Each component of $A$ can hold one process identifier.
For ease of readability, we write $A[i]$.write$(x)$ instead of $A$.\update{i,x}, and call \update{} operations writes.
The \scan{} operation returns a triple of process identifiers, called a \emph{signature}.
At some point in an execution, process $p$ \emph{covers} component $i$
if it writes to component $i$ in its next step.
Each process $p$ alternates between writing and scanning.
When $p$ writes, it writes its own identifier to a component of $A$
that did not contain $p$ in its preceding scan.
The goal of any process, $p$, is to achieve a \emph{clean-sweep}
meaning that its scan returns signature $(p,p,p)$.
In this case, $p$ terminates with \win.
If, however, while trying for a clean-sweep,
$p$'s scan returns a signature that contains more copies of a different identifier than it has copies of $p$,
then $p$ terminates with \lose.
Any process that runs alone for six steps without losing, will return \win.
Furthermore, not all processes can return \lose.
To see this, let $w$ be the last write to $A$ and let $p$ be the process executing $w$.
If process $p$ returns \lose,
then there is a process $q$  that occupies two positions in $p$'s last scan, so $q$ cannot return \lose.
Therefore, this is an implementation of an obstruction-free sifter object.

This implementation, however, is not a very efficient sifter.
Suppose that while a clean-sweep is being achieved by one process,
two other processes cover two distinct components of $A$.
Then these covering processes can over-write the clean-sweep,
and be made to again cover two distinct components.
Now a new process can run under the cover and achieve a clean sweep.
By repeating this scenario,
executions are easily created where all but one process return \win.
Also, notice that to create another winner after a clean-sweep,
such an obliteration of the clean-sweep by two (or three) over-writes is also necessary.

To reduce the number of processes that can return \win to at most a constant fraction of those that compete,
the core idea is to
prevent processes that participate in over-writing a clean-sweep, from covering again,
without some process losing.
This is achieved, in our algorithm,
by expanding the 3-component snapshot object $A$ with 3 additional components.
The first 3 components are referred to as $A$, and the second 3 components as $B$.
We implement $A$ and $B$ together from a 6-component snapshot object $U$.
To make notation more intuitive we use the following convention: for each $i \in \{0,1,2\}$, $A[i]$.write$(x)$ denotes $U$.\update{i,x} and
$B[i]$.write$(x)$ denotes \mbox{$U$.\update{i + 3,x}}.
Furthermore, scan$(A,B)$ returns simply what $U$.\scan{} returns, and
scan$(A)$ returns the first three components returned by $U$.\scan{}.

Each component of $B$ can hold a pair consisting of a process identifier and a signature.
A write by $p$ can be either a write of $p$ to a component of $A$,
or a write of $(p, s_p)$ to a component of $B$, where $s_p$ is a signature.
Each process begins by competing on $A$ and still strictly alternates between writing and scanning.

If process $p$, competing on $A$, gets a scan with signature $s$ of $A$,
where the identifiers in $s$ are all distinct and one of them is $p$,
then $p$ leaves $A$ to compete on $B$ while remembering $s$.
(Notice that if $p$ does not get such a scan and it does not immediately return \lose,
then $p$ is in at least two positions in $s$.
Therefore, its last write could not have been part of an over-write of a clean-sweep by some other process.)
By writing the pair $(p,s)$ to components of $B$,
$p$ tries to achieve a clean-sweep of $B$
(meaning a scan by $p$ shows that each of the 3 components of $B$ contains $(p,s)$).
If $p$ achieves such a clean-sweep, then it returns to competing on $A$, as described above.
There are two ways that process $p$ can lose while playing on $B$.
First, $p$ loses if, while trying to achieve a clean-sweep of $B$,
one of $p$'s scans shows a signature of $A$ different from $s$.
Second, $p$ loses if its scan shows that for some other process $q$,
$(q,s)$ occupies at least 2 positions of $B$.
That is, $p$ only returns to continue competing on $A$ if it achieves a clean-sweep of $B$
while each of its scans satisfies
1) the signature of $A$ is $s$,
and 2) no other process with signature $s$ occupies more than one component of $B$.

\subsection{Intuition for Correctness}

Our proof will establish that not all processes can return \lose,
and at most $\floor{(2k+1)/3}$ processes can win, if $k$ processes participate.
While the proof has to attend to several subtleties and substantial detail,
there are several insights that aid our intuition.
We say a process is playing on $A$, if its next shared memory step is on $A$.
Consider the three ways that a process can return \lose.
Let us say \emph{$p$ loses on $A$} if process $p$ loses while playing on $A$
because the signature of $A$ in its last scan contained more occurrences
of some other process than occurrences of $p$.
We say \emph{$p$ signature-loses on $B$} if process $p$ with signature $s$,
loses while trying to achieve a clean-sweep of $B$,
because one of $p$'s scans shows a signature of $A$ different from $s$.
We say \emph{$p$ process-loses on $B$} if process $p$ loses
because its scan shows that for some other process $q$,
$(q,s)$ occupies at least 2 positions of $B$.

Lemma~\ref{lem:lose} below states that not all processes can lose. For the intuition suppose that all processes lose.
Consider the last write, say $w$, to $A$, and let $p$ be the process that executes $w$.
Process $p$ cannot lose on $A$ because if it did,
then in $p$'s last scan there is some process, $q \neq p$, that occupies 2 positions on $A$,
and that process cannot return \lose unless some process writes to $A$ after $w$. 
Similarly,  $p$ cannot signature-lose on $B$ because, again, that would imply a write to
$A$ after $w$.
So suppose $p$ process-loses on $B$.
Then we show that there is some other process, say $q$,
that has the same signature as $p$ and is competing with $p$, and $q$ cannot process-lose on $B$.
Process $q$ also cannot signature-lose on $B$ or lose on $A$ without a write to $A$ happening after $w$.

Lemma~\ref{lem:main} below states that if $k$ processes call \Competition{}, then at most $\floor{(2k+1)/3}$ of them win. Consider the intervals in an execution between the final scans of processes
that return \win (achieve a clean-sweep of $A$).
If $\ell$ processes return \win, there are $\ell -1$ such disjoint intervals.
We associate each such interval $I$, with a losing process 
as follows.
\begin{enumerate}
	\item If $I$ contains the last write by a process $p$ that loses on $A$, 
then
associate $I$ with $p$. 
\item If $I$ contains the last write by a process $q$ that signature-loses on $B$,
associate $I$ with $q$. 
\item If $I$ is not associated with a losing process via either (1) or (2),
we will associate $I$ with a losing process as follows.
\end{enumerate}
We will prove that there is a sub-interval $I'$ of $I$ and there are either two or three processes that,
during $I'$,
move from $A$ to $B$ and finish competing on $B$ using some signature, say $s$,
while the signature of $A$ remains $s$ throughout $I'$.
Now we focus on the execution during $I'$.
Since $B$ has three components, after any clean-sweep on $B$,
a subsequent clean-sweep on $B$ requires two processes to over-write the previous clean-sweep.
These over-writers must have signature $s$,
because, otherwise, an over-writer has a signature different from that of $A$ and
would signature-lose on $B$,
implying that $I$ has an associated losing process via (2).
If there are two processes with signature $s$ then $I'$ can have at most one clean-sweep,
and if there are three processes then $I'$ can have at most two clean-sweeps.
Therefore, at least one of the two or three processes competing on $B$ with signature $s$ cannot return \win,
and $I$ is associated with one such process. 
Notice, however, that this process could withhold its last write in order to be assigned to a later interval via~(2).

Therefore, using these three rules of association, we assign at least one losing process to every interval, and
no process is assigned to more than two of these intervals.
Thus there are at least $(\ell-1)/2$ processes that cannot return \win.

\subsection{Notation and Terminology}
Throughout the remainder of the section we consider a fixed execution $E$.
A \emph{losing scan} is a scan by a process such that this process will return \lose in its next step,
without doing any further shared memory operation.
A \emph{winning scan} is a scan by a process such that this process will return \win in its next step,
without doing any further shared memory operation.
For each winning scan there exists a last write by the process that performs this scan.
We call this write a \emph{winning write}.
Let $s_1,s_2,\dots,s_\kappa$ be the sequence of winning scans in $E$
and let  $q_1,q_2,\dots,q_\kappa$ denote the corresponding sequence of processes that performed these scans.
Observe that for all $i$, $1 \leq i \leq \kappa$,  $s_i$ is preceded by a winning write $w_i$ performed by $q_i$.
Furthermore, $s_i$ must happen before $w_{i+1}$
because at $s_i$ all components in $A$ contain $q_i$'s $\id$
however, at $w_{i+1}$, $q_{i+1}$ has written its own $\id$ everywhere in $A$.
Hence winning scans and winning writes strictly interleave.
That is, the order of winning scans and writes in $E$ is $w_1,s_1,w_2,s_2,\dots,w_\kappa,s_\kappa$.

Suppose $E = \mathrm{op}_1, \mathrm{op}_2 \ldots $,
we denote the contiguous subsequence of $E$
starting at $\mathrm{op}_i $ and ending at the operation immediately before $\mathrm{op}_j$
by $E[\mathrm{op}_i: \mathrm{op}_j)$.
A \emph{sifting interval} is a subsequence of an execution
that starts at some winning scan and ends at the operation immediately before the next winning write.
Observe that all sifting intervals are disjoint.
Also because there has been a preceding winning scan, no component of $A$ contains $\bot$ in any sifting interval.
Note that since $E$ contains $\kappa$ winning scans it has $\kappa-1$ disjoint sifting intervals.

A \emph{signature} is an ordered triple of identifiers.
A signature $(p_0,p_1,p_2)$ is \emph{full} if for any $i,j \in \{0,1,2\}, i \neq j$ implies $p_i \neq p_j$.

The following lemmas concern properties of executions.
Terms such as
before, after, next, previous, precedes, and follows are all with respect to the order of operations in execution $E$.

A local variable $x$ in the algorithm is denoted by $x_p$ when it is used in the method call invoked by process $p$.

\subsection{Proof of Correctness}

Lemmas~\ref{lem:scans} through \ref{lem:noOld-sig} provide us with some properties of the algorithm that are used in Lemma~\ref{lem:lose}, to prove that there is no execution in which all processes return \lose.

\begin{lemma}\label{lem:scans}
Suppose that process $p$ executes a scan, say $s$, at Line~\ref{line:Kscan}, and in this scan$(A,B)$, $A=\sigma$.
If $s$ is not a losing scan then, at the most recent scan$(A)$ executed in Line~\ref{line:Cscan}, by $p$, preceding $s$, $A = \sigma$.
\end{lemma}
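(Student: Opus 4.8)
The plan is to argue purely about the control flow of process $p$, tracing backward from the scan $s$. First I would observe that a scan at Line~\ref{line:Kscan} is executed only inside a call to the \Knockout{} function, and that \Knockout{} is invoked only at Line~\ref{line:invokeK}, with argument $\sig = a$, where $a$ is the value returned by the scan$(A)$ that $p$ performed at Line~\ref{line:Cscan} in the same iteration of the \texttt{while} loop of \Competition{} (Line~\ref{line:while}); indeed the call is $\Knockout{a}$ and $a$ was just read at Line~\ref{line:Cscan}. Hence, if $s$ occurs within the invocation $\Knockout{\sig}$ of $p$, then $\sig$ is exactly the value $a$ of some scan$(A)$ at Line~\ref{line:Cscan} by $p$, and at that scan $A = \sig$.

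Next I would argue that this scan is \emph{the most recent} scan$(A)$ at Line~\ref{line:Cscan} by $p$ preceding $s$. Between the scan that produced $a$ and the scan $s$, process $p$ executes only the call to \Knockout{} itself (it enters \Knockout{} immediately after the test at Line~\ref{line:num1}), and the body of \Knockout{} contains only writes at Line~\ref{line:Bwrite} and scans at Line~\ref{line:Kscan}, but no scan at Line~\ref{line:Cscan}. Therefore no scan$(A)$ at Line~\ref{line:Cscan} intervenes, so the most recent one preceding $s$ is precisely the scan whose value is $\sig$.

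Finally I would invoke the hypothesis that $s$ is not a losing scan. Let $(\widehat a,\widehat b)$ be the value returned by $s$ at Line~\ref{line:Kscan}; by the hypothesis of the lemma the $A$-part satisfies $\widehat a = \sigma$. If $\widehat a \neq \sig$, then the test at Line~\ref{line:A-changed} succeeds, \Knockout{} returns \True, and $p$ returns \lose at Line~\ref{line:invokeK} in its very next step, making $s$ a losing scan — a contradiction. Hence $\widehat a = \sig$, that is, $\sigma = \sig = a$, so at the most recent scan$(A)$ executed at Line~\ref{line:Cscan} by $p$ preceding $s$ we have $A = \sigma$, as claimed.

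I do not expect a substantial obstacle here: the argument is essentially a control-flow walk combined with a single application of the ``not a losing scan'' hypothesis to the test at Line~\ref{line:A-changed}. The only point that needs care is the second step — confirming that no scan at Line~\ref{line:Cscan} occurs strictly between the scan defining $\sig$ and the scan $s$ — which is immediate from inspecting the body of \Knockout{}.
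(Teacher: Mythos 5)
Your proposal is correct and follows essentially the same route as the paper's proof: identify that $\sig_p$ at $s$ equals the value of $A$ at $p$'s most recent scan at Line~\ref{line:Cscan} (since \Knockout{} is called with that value and contains no scans at Line~\ref{line:Cscan}), and then use the test at Line~\ref{line:A-changed} together with the ``not a losing scan'' hypothesis to force $\widehat{a}_p = \sig_p = \sigma$. The only cosmetic difference is that the paper phrases this as a contradiction starting from $\hat{s}$ having signature $\sigma'\neq\sigma$, and the losing return is at Line~\ref{line:second-die} rather than Line~\ref{line:invokeK}, a labeling detail with no mathematical consequence.
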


\begin{proof}
Let $\hat{s}$ be the most recent scan$(A)$ executed in Line~\ref{line:Cscan}, by $p$, preceding $s$.
By way of contradiction suppose that at $\hat{s}$, $A = \sigma'$, where $\sigma' \neq \sigma$.
Then, at $s$, by Line~\ref{line:invokeK}, $\sig_p = \sigma'$ and by Line~\ref{line:Kscan}, $\widehat{a}_p = \sigma$. Hence at $s$, $\widehat{a}_p \neq \sig_p$. Therefore, by Lines~\ref{line:A-changed} and \ref{line:second-die}, $s$ is a losing scan which is a contradiction.
\end{proof}

\begin{lemma}\label{obs:diffPos}
Let $s$ be any scan by process $p$ and $w$ be $p$'s next write.  If, at $w$, $p$ writes to $A[j]$, then at $s$, $A[j] \neq p$ and $A[(j-1) \bmod 3] = p$.
\end{lemma}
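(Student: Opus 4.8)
The plan is to read off the control flow of \compete{} and \Knockout{}. Since $p$ performs writes only at Line~\ref{line:Awrite} (into $A$) and Line~\ref{line:Bwrite} (into $B$), and $w$ writes into $A$, the write $w$ is an execution of Line~\ref{line:Awrite}, which writes $p$ into $A[\pos_p]$ for the current value of the local variable $\pos_p$; hence $j=\pos_p$. Note $w$ is not $p$'s first shared-memory operation (the first write, $A[0].$write$(p)$, is preceded by no scan of $p$, so it is never ``$p$'s next write'' after a scan); consequently the value of $\pos_p$ used at $w$ was assigned by the most recent execution of Line~\ref{line:pos} by $p$, which occurs in the iteration of the while-loop of \compete{} immediately preceding the iteration containing $w$. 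Let $a$ be the triple returned by the scan$(A)$ of Line~\ref{line:Cscan} in that preceding iteration. By the choice made at Line~\ref{line:pos} we have $a[\pos_p]\neq p$ and $a[(\pos_p-1)\bmod 3]=p$.

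First I would dispose of the case that $p$ did \emph{not} invoke \Knockout{} in that iteration (i.e., the test of Line~\ref{line:num1} failed). Then, between the scan$(A)$ of Line~\ref{line:Cscan} of that iteration and the write $w$, process $p$ executes only local operations (the tests at Lines~\ref{line:Awin-condition}, \ref{line:second-if}, \ref{line:num1}, the assignment at Line~\ref{line:pos}, and the loop test), so the scan $s$ immediately preceding $w$ is exactly that scan$(A)$, and the value it read is $a$. Hence at $s$, $A[j]=a[\pos_p]\neq p$ and $A[(j-1)\bmod 3]=a[(\pos_p-1)\bmod 3]=p$. In the remaining case $p$ invoked \Knockout{}$(a)$ in that iteration. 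Since $p$ afterwards reaches Line~\ref{line:pos} (and then $w$), \Knockout{} must have returned \False{}; otherwise $p$ would return \lose{} at Line~\ref{line:second-die} and never execute $w$. Thus \Knockout{} returned at Line~\ref{line:Bwin}, and again, after its last scan$(A,B)$ at Line~\ref{line:Kscan} — call it $s$ — process $p$ executes only local steps before $w$ (the tests at Lines~\ref{line:A-changed}, \ref{line:second-true}, \ref{line:Bwin}, the return from \Knockout{}, Line~\ref{line:pos}, and the loop test); so $s$ is the scan immediately preceding $w$. Since $p$ returns \False{} rather than \lose{} right after $s$, the scan $s$ is not a losing scan, so Lemma~\ref{lem:scans} applies to it: writing $\sigma$ for the value of $A$ read by $s$, the most recent scan$(A)$ of Line~\ref{line:Cscan} by $p$ preceding $s$ — which is the one returning $a$ — also saw $A=\sigma$, i.e.\ $a=\sigma$. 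Therefore at $s$, $A[j]=\sigma[\pos_p]=a[\pos_p]\neq p$ and $A[(j-1)\bmod 3]=\sigma[(\pos_p-1)\bmod 3]=a[(\pos_p-1)\bmod 3]=p$, as required.

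The only subtle point — and the step I expect to be the main obstacle — is the identification of the scan $s$ in the second case: it is not the scan$(A)$ of Line~\ref{line:Cscan} that determined $\pos_p$, but rather the final scan$(A,B)$ inside the \Knockout{} call of the same iteration, with arbitrarily many writes into $B$ (and possibly interfering writes into $A$ by other processes) in between. Bridging this gap requires invoking Lemma~\ref{lem:scans} to argue that, because that \Knockout{} call returned \False{} (hence did not return \True{} at Line~\ref{line:A-changed}), the $A$-component of $s$ still agrees with $a$, so that the positional information $a[\pos_p]\neq p$, $a[(\pos_p-1)\bmod 3]=p$ transfers verbatim to $s$. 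Everything else is a routine walk through the pseudocode.
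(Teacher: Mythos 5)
Your proof is correct and follows essentially the same route as the paper's: identify the scan at Line~\ref{line:Cscan} that fixed $\pos_p$ via Line~\ref{line:pos}, observe that $s$ is not a losing scan because $p$ writes afterwards, and transfer the positional facts to $s$ via Lemma~\ref{lem:scans}. The paper merely compresses your two cases into one sentence (taking $\hat{s}$ to be the last Line-\ref{line:Cscan} scan before $w$, which coincides with $s$ in your first case), so your explicit case split is just a more detailed rendering of the same argument.
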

\begin{proof}
Let $\hat{s}$ be the last scan$(A)$ executed in Line~\ref{line:Cscan} by $p$ preceding $w$.
Since $w$ is to $A[j]$, by Line~\ref{line:pos}, at $\hat{s}$, $A[j] \neq p$ and $A[(j-1) \bmod 3] = p$.
Since $p$ performs a write after $s$, $s$ is not a losing scan.
By Lemma~\ref{lem:scans}, the signature of $A$ at $s$ and $\hat{s}$ is equal.
Therefore at $s$, $A[j] \neq p$ and $A[(j-1) \bmod 3] = p$.
\end{proof}

\begin{lemma}\label{lem:change-location}
Suppose at scan $s$, $A=(p_0,p_1,p_2)$ is a full signature.
For any $i \in \{0,1,2\}$, if $p_i$ writes to $A$ after $s$, then its first write into $A$ after $s$ is not to $A[i]$.
\end{lemma}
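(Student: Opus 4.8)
The plan is to track how a full signature at a scan can change, using the structural lemmas already established, especially Lemma~\ref{obs:diffPos}.

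First I would set up notation: fix the scan $s$ with $A = (p_0, p_1, p_2)$ full, fix an index $i \in \{0,1,2\}$, and suppose for contradiction that $p_i$'s first write into $A$ after $s$, call it $w$, is to $A[i]$. The goal is to derive a contradiction from Lemma~\ref{obs:diffPos}, which says that for $p_i$'s scan $s'$ immediately preceding $w$ (i.e.\ $p_i$'s last scan before $w$), we have $A[i] \neq p_i$ at $s'$ and $A[(i-1) \bmod 3] = p_i$ at $s'$. Note that $w$ being the \emph{first} write of $p_i$ into $A$ after $s$ is what lets me locate $s'$ relative to $s$: either $s'$ precedes $s$ or $s'$ lies between $s$ and $w$.

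The key step is to use the fact that between $s$ and $w$ process $p_i$ performs no write into $A$. I would argue about the content of $A[(i-1)\bmod 3]$. At $s$ this component equals $p_{(i-1)\bmod 3}$, and since the signature is \emph{full}, $p_{(i-1)\bmod 3} \neq p_i$. For $p_i$ to read $A[(i-1)\bmod 3] = p_i$ at its scan $s'$ preceding $w$, that component must have been $p_i$ at $s'$. If $s'$ comes after $s$, then between $s$ and $s'$ some process wrote $p_i$ into $A[(i-1)\bmod 3]$ — but the only process that writes its own id, and $p_i$ is only written by $p_i$ (each process writes only its own identifier to $A$), and $p_i$ does no write into $A$ between $s$ and $w$; contradiction. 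If $s'$ precedes $s$, then $p_i$ had already (before $s$) determined $\pos = (i-1)\bmod 3 \neq i$ via Line~\ref{line:pos} as the target of $w$, contradicting the assumption that $w$ writes to $A[i]$. Either way we reach a contradiction, so $p_i$'s first write after $s$ is not to $A[i]$.

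The main obstacle I anticipate is handling the two cases for the position of $s'$ cleanly, and in particular making rigorous the claim that $p_i$ is the only process that can place $p_i$ into a component of $A$ — this follows from the algorithm (Line~\ref{line:Awrite} writes $p$, i.e.\ the executing process's own id), but I should state it as a small auxiliary observation, perhaps folding in the role of Line~\ref{line:pos} which guarantees that the component $p_i$ targets differs from the one it last found containing $p_i$. A secondary subtlety is that $s$ itself need not be one of $p_i$'s scans, so I must be careful to distinguish $s$ (an arbitrary scan in $E$) from $p_i$'s own scan $s'$; Lemma~\ref{obs:diffPos} is applied to the pair $(s', w)$, both of which belong to $p_i$, which is exactly its intended use.
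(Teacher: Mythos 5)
Your overall strategy matches the paper's: apply Lemma~\ref{obs:diffPos} to the pair $(s',w)$, use the fact that only $p_i$ ever writes the value $p_i$ into $A$, and split on whether $s'$ lies before or after $s$. Your case ``$s'$ after $s$'' is correct and is essentially the paper's argument in contrapositive form. The gap is in the case ``$s'$ precedes $s$'': the assertion that $p_i$ ``had already determined $\pos=(i-1)\bmod 3$ via Line~\ref{line:pos} as the target of $w$'' does not follow from anything you have established. Line~\ref{line:pos} only constrains the chosen position $\pos$ by $a[\pos]\neq p_i$ and $a[(\pos-1)\bmod 3]=p_i$ at the scan where it is computed; under your contradiction hypothesis $\pos=i$, these constraints are perfectly satisfiable at $s'$ (namely $A[i]\neq p_i$ and $A[(i-1)\bmod 3]=p_i$, which is exactly what Lemma~\ref{obs:diffPos} gives you), so no contradiction has been derived. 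Moreover, even if you knew the contents of $A$ at $s'$, the position selected would be the one \emph{following} the slot holding $p_i$, so the value $(i-1)\bmod 3$ is not the right candidate in any reading; and you cannot transport the signature seen at $s$ back to $s'$, since other processes may write to $A$ in $E[s':s)$.

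The case is easily repaired with the same ingredients you already use, and the repair is exactly the paper's argument: since $w$ is $p_i$'s first write to $A$ after $s$ and $s'$ is its last scan before $w$, $p_i$ performs no write to $A$ in $E[s':s)$; because only $p_i$ writes the value $p_i$, the fact that $A[i]=p_i$ at $s$ forces $A[i]=p_i$ already at $s'$. This directly contradicts the consequence $A[i]\neq p_i$ at $s'$ of Lemma~\ref{obs:diffPos} under your hypothesis (equivalently, stated without contradiction as in the paper: $A[i]=p_i$ at $s'$ plus Lemma~\ref{obs:diffPos} implies $w$ is not to $A[i]$). In other words, the contradiction in this case must come from the component $A[i]$, not from the component $A[(i-1)\bmod 3]$, which is the one that does the work only in your other case.
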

\begin{proof}
Let $w_i$ be the first write by $p_i$ to $A$ after $s$.
Let $s_i$ be the scan by $p_i$ preceding $w_i$.
If $s_i$ happens before $s$,
then
there is no write to $A$ by $p_i$ in the execution $E[s_i : s)$.
At $s$, $A[i]=p_i$, hence at $s_i$, $A[i]=p_i$.
Suppose $s_i$ happens after $s$.
At $s$, $A[i]$ is the only location that contains $p_i$,
and there is no write to $A$ by $p_i$ in the execution $E[s : s_i)$
and $s_i$ is not a losing scan.
Therefore, at $s_i$, $A[i]=p_i$.
In either case, by Lemma~\ref{obs:diffPos}, $w_i$ is a write to $A[\pos]$ where $\pos \neq i$.
\end{proof}

\begin{lemma}\label{lem:first-write}
Suppose at scan $s$, $A=(p_0,p_1,p_2)$ is a full signature.
Let $w$ be the first write to $A$ after $s$. Then $w$ changes the signature of $A$.
\end{lemma}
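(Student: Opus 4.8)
I want to show that if at scan $s$ the array $A$ holds a full signature $(p_0,p_1,p_2)$, then the very first write $w$ to $A$ after $s$ necessarily changes the signature of $A$. Since at $s$ each component $A[i]$ holds $p_i$, and these are three distinct identifiers, the only way $w$ could fail to change the signature is if $w$ writes $p_i$ into $A[i]$ for the same $i$ that $w$ targets — i.e., $w$ re-writes the identifier already sitting in that component. So the statement reduces to: the process performing $w$ does not write its own identifier back into the component it already occupies at $s$.

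**Main step.** Let $q$ be the process executing $w$, and say $w$ writes to $A[j]$. Because $A$ at $s$ is full with $(p_0,p_1,p_2)$, either $q = p_i$ for exactly one $i$, or $q \notin \{p_0,p_1,p_2\}$. In the first case, I would invoke Lemma~\ref{lem:change-location}: since $q = p_i$ and $q$ writes to $A$ after $s$, its first such write is not to $A[i]$, so $j \neq i$; hence $w$ overwrites the slot $A[j]$ which at $s$ contains $p_j \neq p_i = q$ (using fullness and the fact that, since $w$ is the \emph{first} write to $A$ after $s$, no component of $A$ has changed between $s$ and immediately before $w$), so the signature changes. In the second case, $q$'s identifier appears nowhere in $A$ at $s$; since nothing has been written to $A$ between $s$ and $w$, $A[j]$ still holds $p_j \neq q$ just before $w$, so writing $q$ into $A[j]$ again changes the signature. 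Either way the conclusion follows.

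**The obstacle.** The one point that needs care is the claim that the configuration of $A$ immediately before $w$ is identical to the configuration at $s$ — this is exactly what lets me transfer "$A[i] = p_i$ at $s$" to "$A[i] = p_i$ just before $w$." This is immediate from the hypothesis that $w$ is the \emph{first} write to $A$ after $s$: registers change only via writes, so no component of $A$ is altered in $E[s : w)$. The second subtlety is handling the boundary case $q = p_i$ with $w$ targeting some $A[j]$, $j \neq i$: I must confirm $p_j$ (the content of $A[j]$ at $s$, hence just before $w$) is different from $q = p_i$, which is precisely the fullness assumption. So the proof is short; the only thing to state explicitly and correctly is the "no intervening write to $A$" observation and the case split on whether $q$ already occupies a slot of $A$ at $s$.
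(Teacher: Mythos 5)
Your proof is correct and follows essentially the same route as the paper's: a case split on whether the writer $q$ is one of $p_0,p_1,p_2$, applying Lemma~\ref{lem:change-location} when $q=p_i$ to rule out a write to $A[i]$, and using fullness (plus the absence of intervening writes to $A$) otherwise. You merely make explicit the "no component of $A$ changes between $s$ and $w$" observation that the paper leaves implicit.
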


\begin{proof}
Let $q$ be the process executing $w$.
If $q \notin\{p_0,p_1,p_2\}$, then since $q$ writes its own id,
it changes the signature of $A$.
If $q = p_i \in \{p_0,p_1,p_2\}$, then by Lemma~\ref{lem:change-location},
$q$ writes to a location different from $A[i]$.
Hence $w$ changes the signature of $A$.
\end{proof}

\begin{lemma}\label{lem:allParticipate}
Suppose at scan $s_1$, $A=(p_0,p_1,p_2)$ is a full signature.
Let $w$ be the first write to $A$ after $s_1$.
Let $s_2$ be any scan after $w$
such that at $s_2$, $A=(p_0,p_1,p_2)$. Then, for some $\ell \in \{0,1,2\}$, $p_{\ell}$ calls  \Knockout{$\sigma$}, where $\sigma \neq (p_0,p_1,p_2)$ and returns \False in the execution $E[w:s_2)$.
\end{lemma}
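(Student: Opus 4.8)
The plan is to follow, inside $E[w:s_2)$, the process that ``restores'' a component of $A$ to the value it holds in the signature $(p_0,p_1,p_2)$, and to show that this process must pass through a Line~\ref{line:Cscan} scan that triggers a \Knockout{} call with a signature different from $(p_0,p_1,p_2)$.

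First I would reduce to the case in which $s_2$ is the \emph{earliest} scan after $w$ with $A=(p_0,p_1,p_2)$: if the conclusion holds for this $s_2$, then since $E[w:s_2^{\min})\subseteq E[w:s_2)$ it holds for every later such scan. Minimality also gives that every scan executed by any process in $E(w:s_2)$ sees $A\neq(p_0,p_1,p_2)$. By Lemma~\ref{lem:first-write}, $w$ changes the signature of $A$, so $A\neq(p_0,p_1,p_2)$ immediately after $w$ while $A=(p_0,p_1,p_2)$ at $s_2$; hence I can pick the last write $w^*$ strictly before $s_2$ after which $A=(p_0,p_1,p_2)$ but before which $A\neq(p_0,p_1,p_2)$. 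Since the value written must be the one that $(p_0,p_1,p_2)$ has at the overwritten component, $w^*$ is a write of some $p_j$ to $A[j]$, and $w<w^*<s_2$.

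Now consider the loop iteration in which $p_j$ executes $w^*$ and let $\hat r$ be its Line~\ref{line:Cscan} scan. By Line~\ref{line:pos} (which must select $\pos=j$), at $\hat r$ we have $\hat r[j]\neq p_j$ and $\hat r[(j-1)\bmod 3]=p_j$; and since $p_j$ proceeds to $w^*$ rather than returning \lose or \win, $\hat r$ is neither a losing nor a winning scan, so $\num{p_j}{\hat r}\in\{1,2\}$ with no other process having strictly more copies in $\hat r$. To locate $\hat r$ inside the target interval I would argue: $w^*$ is a write of $p_j$ to $A[j]$ after $s_1$, so by Lemma~\ref{lem:change-location} it is not $p_j$'s first write to $A$ after $s_1$; letting $v$ be that first write, we have $v\geq w$ (as $w$ is the first write to $A$ after $s_1$) and $v<w^*$, so $v$ is in an earlier iteration than $w^*$'s and thus $w\leq v<\hat r<w^*<s_2$. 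Consequently $\hat r$, and everything $p_j$ does between $\hat r$ and $w^*$, lie in $E[w:s_2)$. In the case $\num{p_j}{\hat r}=1$ we are done: $\hat r$ is then a full signature containing $p_j$, and $\hat r[j]\neq p_j$ forces $\hat r\neq(p_0,p_1,p_2)$; by Lines~\ref{line:num1}--\ref{line:invokeK} $p_j$ calls \Knockout{$\hat r$}, which must return \False since $p_j$ continues to $w^*$, and this call lies in $E[w:s_2)$, so the conclusion holds with $\ell=j$ and $\sigma=\hat r$.

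The remaining case, $\num{p_j}{\hat r}=2$, is the main obstacle. Then $\hat r[(j-1)\bmod 3]=\hat r[(j+1)\bmod 3]=p_j$ and $\hat r[j]\neq p_j$, so component $(j+1)\bmod 3$ is ``wrong'' at $\hat r$ (its value $p_j$ differs from $p_{(j+1)\bmod 3}$, the signature being full); since $A=(p_0,p_1,p_2)$ at $s_2$, this component is restored somewhere in $E(\hat r:s_2)\subseteq E[w:s_2)$, i.e.\ some $p_{(j+1)\bmod 3}$ writes its own id there. I would then repeat the analysis of the previous paragraph with this process and component in place of $(p_j,j)$ — either its relevant Line~\ref{line:Cscan} scan has a single copy of it and we finish exactly as above, or it has two copies and we pass cyclically to the next component. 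Since $p_0,p_1,p_2$ are distinct the iteration can only run through the three components, so the crux is ruling out a non-terminating (equivalently periodic) run: one must combine the minimality of $s_2$ with Lemmas~\ref{lem:first-write} and \ref{lem:change-location}, together with the position-cycling discipline of Line~\ref{line:pos}, to show that the ``restoring'' scans produced in successive rounds cannot close into a cycle within the finite interval $E[w:s_2)$, forcing some round to land in the single-copy case. This non-periodicity argument, plus the bookkeeping that keeps every exhibited object inside $E[w:s_2)$, is where the real work lies; everything else is a direct reading of the code via Lemmas~\ref{lem:scans}, \ref{obs:diffPos}, \ref{lem:change-location} and \ref{lem:first-write}.
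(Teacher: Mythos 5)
Your first case ($\num{p_j}{\hat{r}}=1$) is handled correctly, but the proposal as a whole has a genuine gap: the two-copy case $\num{p_j}{\hat{r}}=2$, which you yourself flag as ``where the real work lies,'' is left unproved. With your choice of process --- the one performing the last write before $s_2$ that restores some component of the signature --- nothing prevents that process from holding two components of $A$ at its preceding scan, so it need not call \Knockout{} at all, and the cycling argument you sketch (passing to the process that restores the next ``wrong'' component, and arguing the recursion cannot be periodic inside $E[w:s_2)$) is not carried out; the successive restoring writes are last-writes of different components with no useful temporal ordering among their preceding scans, so it is not clear the recursion terminates in the single-copy case within your framework. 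As written, the lemma is not established.

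The paper's proof avoids this case entirely by picking the process from the \emph{first} write rather than the last restoring one: let $i$ be the component written by $w$. By Lemma~\ref{lem:change-location} (applied at $s_1$), $w$ is not executed by $p_i$, so immediately after $w$ \emph{no} component of $A$ contains $p_i$ --- exactly the property your restorer $p_j$ lacks. Since $A[i]=p_i$ at $s_2$, the last write to $A[i]$ in $E[w:s_2)$ is by $p_i$, while Lemma~\ref{lem:change-location} again gives that $p_i$'s first write $w^i_1$ in the interval goes to some $A[j]$ with $j\neq i$; hence $p_i$ writes at least twice in the interval, and at the scan following $w^i_1$ it can appear at most once (only in $A[j]$). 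That scan is not a losing scan (a further write follows), so $p_i$ appears exactly once, the signature $\sigma$ it sees is full, and $\sigma\neq(p_0,p_1,p_2)$ because $p_i$ sits at position $j\neq i$. Thus $p_i$ calls \Knockout{$\sigma$} and, since it must write to $A$ again before $s_2$, this call returns \False within $E[w:s_2)$ --- the single-copy case is forced and no recursion is needed. (Two smaller slips: the scan fixing $\pos=j$ for $w^*$ belongs to the loop iteration \emph{preceding} the one containing $w^*$, not the same one; and the reduction to the earliest $s_2$ is unnecessary once the argument is set up this way.)
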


\begin{proof}
Suppose that $w$ is a write to component $A[i]$.
Since $A[i]=p_i$ at $s_2$, the last write to $A[i]$ in $E[w:s_2)$, is by $p_i$.

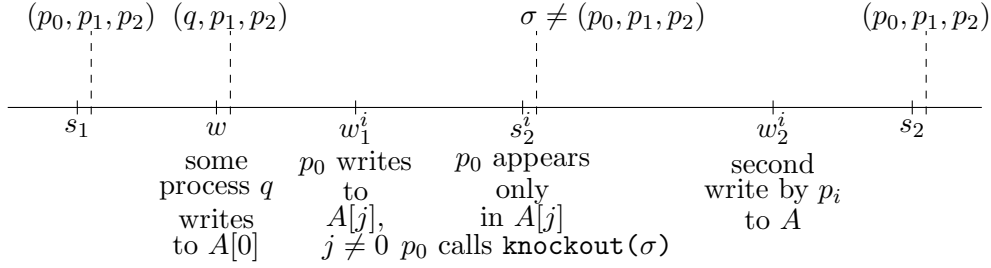
\begin{figure}[ht]
	\centering

\begin{tikzpicture}[scale=1.85]
\draw (-0.5,0) -- (6.5,0);

\draw (0,-0.05) -- (0,0.05);
\node at (0,-0.15) {$s_1$};

\draw (0.1,-0.05)[dashed] -- (0.1,0.55);
\node at (0.1,0.65) {$(p_0,p_1,p_2)$};

\draw (1,-0.05) -- (1,0.05);
\node at (1,-0.15) {$w$};
\node at (1,-0.4) {some};
\node at (1,-0.6) {process $q$};
\node at (1,-0.8) {writes};
\node at (1,-1) {to $A[0]$};

\draw (1.1,-0.05)[dashed] -- (1.1,0.55);
\node at (1.1,0.65) {$(q,p_1,p_2)$};

\draw (2,-0.05) -- (2,0.05);
\node at (2,-0.15) {$w^i_1$};
\node at (2,-0.4) {$p_0$ writes};
\node at (2,-0.6) {to};
\node at (2,-0.8) {$A[j],$};
\node at (2,-1) {$j \neq 0$};

\draw (3.2,-0.05) -- (3.2,0.05);
\node at (3.2,-0.15) {$s^i_2$};
\node at (3.2,-0.4) {$p_0$ appears};
\node at (3.2,-0.6) {only};
\node at (3.2,-0.8) {in $A[j]$};
\node at (3.3,-1) {$p_0$ calls \Knockout{$\sigma$}};

\draw (3.3,-0.05)[dashed] -- (3.3,0.55);
\node at (3.85,0.65) {$\sigma \neq (p_0,p_1,p_2)$};

\draw (5,-0.05) -- (5,0.05);
\node at (5,-0.15) {$w^i_2$};
\node at (5,-0.4) {second};
\node at (5,-0.6) {write by $p_i$};
\node at (5,-0.8) {to $A$};

\draw (6,-0.05) -- (6,0.05);
\node at (6,-0.15) {$s_2$};

\draw (6.1,-0.05)[dashed] -- (6.1,0.55);
\node at (6.1,0.65) {$(p_0,p_1,p_2)$};

\end{tikzpicture}

	\label{figure:a}
	\caption{Illustration of the order of the operations, where $i = 0$}
\label{figure:OpsA}
\end{figure}

By Lemma~\ref{lem:change-location}, the first write by $p_i$ to $A$ in $E[w:s_2)$, say $w^i_1$, is to $A[j]$ where $j \neq i$.
Thus $p_i$ must perform at least two writes to $A$ in the interval $E[w : s_2)$.
Let $s^i_2$ be the scan by $p_i$ following $w^i_1$ in $E[w : s_2)$, and
$\sigma$ be the signature of $A$ at $s^i_2$.

Since $w^i_1$ is to $A[j]$ and $w$ is to $A[i]$, $w \neq w^i_1$, implying $w$ is not executed by $p_i$.
Immediately after $w$, no location in $A$ contains $p_i$.
Because $p_i$ writes only once in $E[w : s^i_2)$, at $s^i_2$, $p_i$ can appear only in $A[j]$. Since $s^i_2$ is not a losing scan, $p_i$ must still be in $A[j]$ at $s^i_2$, and $\sigma$ be full.
This implies $p_i$ calls \Knockout{$\sigma$} after $s^i_2$.
Furthermore, $\sigma \neq (p_0,p_1,p_2)$.
Finally, because $p_i$ writes to $A$ after $s^i_2$,
$p_i$ must return \False from this \Knockout call.
\end{proof}

Recall that in any execution, if a process $p$ performs a scan$(A)$ in which $A = \sigma$ is a full signature containing $p$, it invokes \Knockout{$\sigma$}. During this \Knockout{$\sigma$} call, $p$ tries to write $(p, \sigma)$ to all components of $B$. Let $w$ be any write of this \Knockout{$\sigma$} call. In the following lemma, we prove that if at some scan after $w$, say $s$, $A = \sigma$ and $B[i] = (p, \sigma)$, then the signature of $A$ is $\sigma$ in the entire execution between $w$ and $s$. In other words, during $E[w : s)$, the signature of $A$ cannot change from $\sigma$ to $\sigma' \neq \sigma$ and change back to $\sigma$ again while $p$ is performing one single \Knockout{$\sigma$}.

\begin{lemma}\label{lem:noOld-sig}
Suppose at scan $s$, $A= \sigma$ is a full signature, and
there is an $i \in \{0,1,2\}$ and a process $p$ such that $B[i] = (p,\sigma)$.
Let $w_i$ be the last write to $B[i]$ that precedes $s$.
Then, there is no write to $A$ in $E[w_i:s)$.
\end{lemma}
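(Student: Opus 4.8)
The plan is to argue by contradiction: assume that some write to $A$ occurs during $E[w_i:s)$, and obtain a contradiction from Lemma~\ref{lem:first-write}, Lemma~\ref{lem:allParticipate}, and the facts, established below, that $B[i]=(p,\sigma)$ between $w_i$ and $s$ and that $p$ is inside a single call to \Knockout{$\sigma$} around the time of $w_i$.

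First I would pin down $w_i$ and record two facts. Since $B[i]=(p,\sigma)$ at $s$ and $w_i$ is the last write to $B[i]$ before $s$, the write $w_i$ sets $B[i]:=(p,\sigma)$; inspecting the code, this is the write at Line~\ref{line:Bwrite} of a call to \Knockout{$\sigma$} by $p$ (so $\sig_p=\sigma$ when $w_i$ executes, and by the entry conditions at Lines~\ref{line:num1}--\ref{line:invokeK} the signature $\sigma$ is full and contains $p$). \emph{Fact~(a):} $B[i]=(p,\sigma)$ at every configuration from $w_i$ up to and including $s$; consequently, no process $q$ can perform a scan in $E[w_i:s)$ that witnesses a \False-return from a call to \Knockout{$\tau$} --- that is, a scan$(A,B)$ with $B=\big((q,\tau),(q,\tau),(q,\tau)\big)$, cf.\ Line~\ref{line:Bwin} --- unless $q=p$ and $\tau=\sigma$. \emph{Fact~(b):} $p$ is executing \Knockout{$\sigma$} at the moment $w_i$ occurs, and also during the purely local steps immediately preceding and following $w_i$.

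Next I would locate a scan $t$ by $p$ with $A=\sigma$ adjacent to $w_i$: $t$ is the shared-memory operation performed by $p$ immediately before $w_i$, which is either the scan$(A)$ at Line~\ref{line:Cscan} whose result triggered $p$'s call to \Knockout{$\sigma$} (so $A=\sigma$ at $t$, since $\sig_p$ equals that result), or the scan$(A,B)$ at Line~\ref{line:Kscan} of the previous loop iteration (so $A=\sigma$ at $t$, since otherwise $p$ would have returned \True at Line~\ref{line:A-changed} instead of executing $w_i$); Lemma~\ref{lem:scans} underpins the first reading. Since $A=\sigma$ at both $t$ and $s$, any change of $A$ in between must be undone, so by Lemma~\ref{lem:first-write} (the first write to $A$ after a scan with full signature changes the signature) together with Lemma~\ref{lem:allParticipate}, there is a process $p_\ell$ occurring in $\sigma$ that completes a call to \Knockout{$\tau$}, with $\tau\neq\sigma$, by returning \False. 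Fact~(b) forces $p_\ell\neq p$ (such a call, with argument different from $\sigma$, cannot be nested inside $p$'s ongoing \Knockout{$\sigma$}), and once I arrange for the witnessing scan of this \False-return to lie in $E[w_i:s)$, Fact~(a) is contradicted.

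The step I expect to be the main obstacle is controlling the short window $E[t:w_i)$, during which other processes may already have written to $A$: I cannot conclude that $A=\sigma$ at the instant $w_i$ executes, so the first write to $A$ after $t$ that I hand to Lemma~\ref{lem:allParticipate} may precede $w_i$, which would place the resulting \False-return before $w_i$, out of reach of Fact~(a). To close this gap I would re-anchor on the scan$(A,B)$ that $p$ performs at Line~\ref{line:Kscan} immediately after $w_i$ in the same \Knockout{$\sigma$} call: if $A=\sigma$ at that scan, apply Lemma~\ref{lem:allParticipate} from this later anchor --- with $s$, or the next scan by $p$ at which $A=\sigma$, as the second scan --- so that the \False-return provably falls in $E[w_i:s)$; if instead $A\neq\sigma$ at that scan (so $p$ signature-loses) or $p$ performs no further shared step before $s$, treat these residual cases by tracing the operations directly and applying Lemma~\ref{lem:allParticipate} to the last scan before $s$ at which $A=\sigma$. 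The conceptual content sits entirely in Fact~(a) and Lemmas~\ref{lem:first-write} and~\ref{lem:allParticipate}; the remaining work is bookkeeping about which operations $p$ performs among $t$, $w_i$, the post-$w_i$ scan, and $s$.
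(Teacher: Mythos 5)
Your skeleton is the paper's: argue by contradiction, use Lemma~\ref{lem:scans} to get $A=\sigma$ at $p$'s scan preceding $w_i$, extract via Lemmas~\ref{lem:first-write} and~\ref{lem:allParticipate} a completed \Knockout{$\sigma'$} with $\sigma'\neq\sigma$ that returns \False, and contradict the fact that $B[i]=(p,\sigma)$ throughout $E[w_i:s)$ (your Fact~(a) is exactly the paper's closing contradiction). But the step you defer as ``bookkeeping'' is the actual crux, and your patch does not close it. Re-anchoring at $p$'s scan $t'$ at Line~\ref{line:Kscan} immediately after $w_i$ fails when the writes to $A$ in $E[w_i:s)$ all occur between $w_i$ and $t'$ (other processes can take arbitrarily many steps there): then $A$ may well equal $\sigma$ at $t'$, yet no write to $A$ lies between $t'$ and $s$, so Lemma~\ref{lem:allParticipate} cannot be invoked from that anchor at all --- and this situation is in neither of your two listed residual cases. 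Your fallback, anchoring at ``the last scan before $s$ at which $A=\sigma$,'' has the same defect: that scan may be followed by no write to $A$ before $s$, or it may lie before $w_i$ with its first subsequent write to $A$ also before $w_i$, in which case the \False-return you extract can complete before $w_i$ and Fact~(a) is untouched --- precisely the failure mode you yourself identified for the anchor $t$.

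The paper closes the gap with an anchor you never consider: let $w$ be the first write to $A$ in $E[w_i:s)$ (it exists under the contradiction hypothesis), and let $\hat{s}$ be the last scan \emph{before $w$} at which $A=\sigma$; this exists because $p$'s last scan before $w_i$ is such a scan, and the paper argues that $w$ is the first write to $A$ after $\hat{s}$ --- this is where the window you worry about is absorbed. Applying Lemma~\ref{lem:allParticipate} to $(\hat{s},w,s)$ then places the completed \Knockout{$\sigma'$}, $\sigma'\neq\sigma$, inside $E[w:s)\subseteq E[w_i:s)$, where its final scan shows $B[i]=(q,\sigma')\neq(p,\sigma)$, contradicting that $w_i$ is the last write to $B[i]$ before $s$. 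The essential idea missing from your write-up is to choose the anchor relative to $w$ --- the first $A$-write after $w_i$ --- rather than relative to $w_i$, to $p$'s own scans, or to $s$, so that the extracted \False-return is forced to lie after $w_i$.
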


\begin{proof}
By way of contradiction, let $w$ be the first write to $A$ in $E[w_i:s)$.
Let $s_i$ be the last scan by $p$ preceding $w_i$.
Since $w_i$ has value $(p, \sigma)$, at $w_i$, $\sig_p = \sigma$. Therefore, at the last scan executed in Line~\ref{line:Cscan} preceding $w_i$, $A=\sigma$.

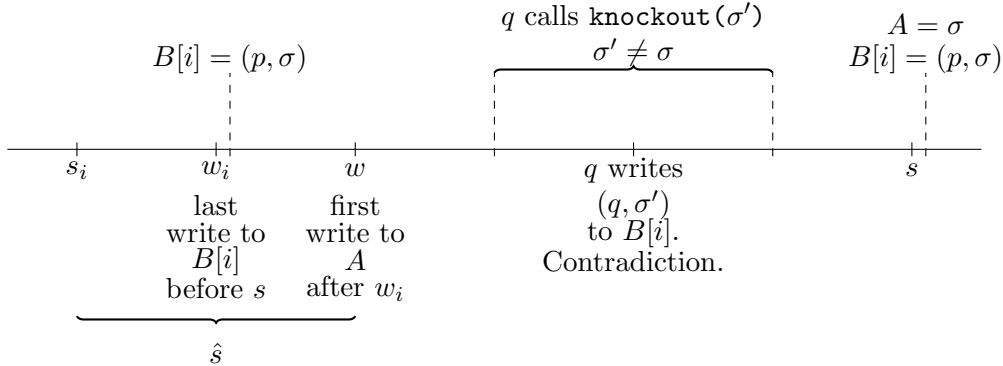
\begin{figure}[hb]
	\centering

\begin{tikzpicture}[scale=1.85]
\draw (-0.5,0) -- (6.5,0);

\draw (0,-0.05) -- (0,0.05);
\node at (0,-0.15) {$s_i$};

\draw (1,-0.05) -- (1,0.05);
\node at (1,-0.15) {$w_i$};
\node at (1,-0.4) {last};
\node at (1,-0.6) {write to};
\node at (1,-0.8) {$B[i]$};
\node at (1,-1) {before $s$};

\draw (1.1,-0.05)[dashed] -- (1.1,0.55);
\node at (1.1,0.65) {$B[i] = (p, \sigma)$};

\draw (2,-0.05) -- (2,0.05);
\node at (2,-0.15) {$w$};
\node at (2,-0.4) {first};
\node at (2,-0.6) {write to};
\node at (2,-0.8) {$A$};
\node at (2,-1) {after $w_i$};

\draw[thick,decoration={brace,mirror},decorate]
  (0,-1.2) -- node[below=6pt ] {$\hat{s}$} (2,-1.2);

\draw (3,-0.05)[dashed] -- (3,0.55);
\draw[thick,decoration={brace},decorate]
  (3,0.55) -- node[above=12pt] {$q$ calls \Knockout{$\sigma'$}} (5,0.55);
\node at (4,0.7) {$\sigma' \neq \sigma$};
\draw (5,-0.05)[dashed] -- (5,0.55);

\draw (4,-0.05) -- (4,0.05);
\node at (4,-0.15) {$q$ writes};
\node at (4,-0.4) {$(q, \sigma')$};
\node at (4,-0.6) {to $B[i]$.};
\node at (4,-0.8) {Contradiction.};

\draw (6,-0.05) -- (6,0.05);
\node at (6,-0.15) {$s$};

\draw (6.1,-0.05)[dashed] -- (6.1,0.55);
\node at (6.1,0.65) {$B[i] = (p, \sigma)$};
\node at (6.1,0.9) {$A = \sigma$};

\end{tikzpicture}

	\label{figure:3}
	\caption{Illustration of the order of the operations if $w_i$ precedes $\hat{w}$}
\end{figure}

Hence, by Lemma~\ref{lem:scans}, at $s_i$, $A$ must have signature $\sigma$.
 Let $\hat{s}$ be the last scan before $w$ in which the signature of $A$ is $\sigma$.
Since $s_i$ precedes $w$, $\hat{s}$ exists.
Then $w$ is the first write to $A$ following $\hat{s}$.
 By Lemma~\ref{lem:allParticipate}, there is a process $q$ that executes a complete \Knockout{$\sigma'$} in $E[w : s)$, where $\sigma \neq \sigma'$, and returns \False.
Hence, in $E[w_i : s)$, $q$ over-writes every component in $B$ with $(q, \sigma')$.
This contradicts that $w_i$ is the last write to $B[i]$ preceding $s$.
\end{proof}

\begin{lemma}\label{lem:lose}
There is no execution in which all processes return \lose.
\end{lemma}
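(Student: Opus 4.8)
The plan is to argue by contradiction: assume there is an execution $E$ in which every process returns \lose, and derive a contradiction by tracking the last write to $A$. Let $w$ be the last write to $A$ in $E$, performed by some process $p$. Since $A$ is non-$\bot$ everywhere only after a winning scan, I will not need that here — I just need that $p$ eventually returns \lose. Process $p$ returns \lose via one of three routes, so I would do a case analysis.

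First I would rule out that $p$ loses on $A$ (via Line~\ref{line:first-die}). If it did, then in $p$'s last scan $a$ at Line~\ref{line:Cscan} there is some process $q \neq p$ with $\num{q}{a} > \num{p}{a}$, so $q$ occupies at least two components of $A$ at that scan. Since $w$ is the last write to $A$, no write to $A$ happens after $p$'s scan, so $q$ remains in (at least) two components of $A$ forever after. Then $q$ can never satisfy the losing condition of Line~\ref{line:first-die} on a later scan (it always has $\num{q}{a} \geq 2$, so no process can strictly exceed it while also $q$ stays put), and $q$ never signature-loses on $B$ after $w$ because Lemma~\ref{lem:scans} plus the absence of writes to $A$ means $A$'s signature can't change; I need to check that $q$ also can't process-lose on $B$ in a way that's consistent. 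Actually the cleanest line is: if $p$ loses on $A$ then some $q$ occupies two slots of $A$ at $p$'s last scan and stays there, so $q$'s own last scan (if it ever does one) sees $\num{q}{\cdot} \geq 2$; hence $q$ cannot lose on $A$, and since $A$ never changes after $w$, $q$ cannot signature-lose on $B$ either, and I will argue $q$ cannot process-lose on $B$ because that requires $q$ to first move onto $B$, which requires a full signature scan containing $q$ — impossible while $q$ sits in two slots. So $q$ never loses: contradiction.

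Next I would rule out that $p$ signature-loses on $B$: by definition that means one of $p$'s scans at Line~\ref{line:Kscan} returns $\widehat{a} \neq \sig_p$, i.e.\ the signature of $A$ changed from the value $p$ recorded. But by Lemma~\ref{lem:scans} (applied contrapositively, or directly tracing the argument) a change in $A$'s signature requires a write to $A$ after $p$'s last scan on $A$, contradicting that $w$ is the last write to $A$ — more carefully, $p$'s \Knockout{} scan that detects the mismatch comes after $p$'s last write $w$ to $A$ only if there is an intervening write to $A$ by someone; since there is none, $A$ still holds $\sig_p$, contradiction. The remaining case is that $p$ process-loses on $B$: $p$'s scan at Line~\ref{line:Kscan} shows some $q \neq p$ with $\num{(q,\sig_p)}{\widehat{b}} \geq 2$, i.e.\ $q$ occupies two components of $B$, each holding $(q,\sig_p)$ with the same signature $s := \sig_p$. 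The key point is that $q$ has the same signature $s$ as $p$, so $q$ is also engaged in a \Knockout{$s$} call, and the signature of $A$ is $s$ throughout (no writes to $A$ after $w$). Now I would show $q$ cannot lose: $q$ cannot lose on $A$ (it is currently playing on $B$, and to lose on $A$ it would have to return to $A$ and then lose, but returning to $A$ only happens after a clean-sweep of $B$, and in any case losing on $A$ requires some process to have two $A$-slots while $A$ is frozen at the full signature $s$ — impossible since all three slots of a full signature are distinct and no writes to $A$ occur). More directly: $q$ cannot signature-lose on $B$ since $A$ never changes from $s$; $q$ cannot lose on $A$ since that needs a future write to $A$; and $q$ cannot process-lose on $B$ — here is where I'd use that among the (at most few) processes with signature $s$ competing on $B$, one of them, namely the one performing the last write to $B$, cannot see two copies of anyone else's $(r,s)$ pair. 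I expect I'll need to take $q'$ to be the process executing the last write to $B[\cdot]$ among processes with signature $s$ still active; $q'$'s final scan then sees at most one slot holding $(r,s)$ for any $r \neq q'$, so $q'$ does not process-lose, and by the above it does not lose at all — contradiction.

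The main obstacle, as in the intuition paragraph, is the last case: cleanly pinning down the witness process that \emph{cannot} process-lose on $B$. The subtlety is that a process can withhold its last write (poised to write but not writing), so "the last write to $B$" may not correspond to the process that actually terminates; I will need to phrase it in terms of the last \emph{completed} write to $B$ by an active signature-$s$ process and carefully note, using Lemma~\ref{lem:noOld-sig} and the fact that $A$'s signature is frozen at $s$, that after that write no component of $B$ can come to hold two copies of a single $(r,s)$ with $r$ distinct from the scanning process, so that process's relevant scan (if it performs one) is not a process-losing scan — and then it cannot lose via any route, since the other two routes have already been excluded under "no writes to $A$ after $w$." A bit of care is also needed at the very start: if some process never does any write to $A$ at all I should handle it (it still must do at least the write at Line~\ref{line:Awrite} before its first scan, so $w$ is well-defined as long as at least one process participates, which is the only interesting case). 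Assembling these pieces, every branch yields a process that never returns \lose, contradicting the assumption, which proves the lemma.
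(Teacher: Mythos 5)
Your overall strategy is the same as the paper's: fix the last write $w$ to $A$, rule out that its performer loses on $A$ or signature-loses on $B$, and then extract a contradiction from the signature-$s$ processes competing on $B$. The first two cases are fine up to a small temporal gap: in the lose-on-$A$ case you argue that $q$ occupies two components ``forever after $p$'s scan'' and hence $q$'s last scan sees $\num{q}{a_q}\geq 2$, but $q$'s losing scan may well precede $p$'s scan. The paper closes this by observing that only $q$ writes its own id, so $\num{q}{A}$ is non-increasing after $q$'s last write to $A$; since it equals $2$ at the later scan $s_u$, it is at least $2$ at every scan $q$ performs after that write, so $q$ can neither lose in Line~\ref{line:first-die} nor even reach the $\num{p}{a}=1$ test that triggers \Knockout{}. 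That is patchable.

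The genuine gap is in the last case. Your witness --- the process $q'$ performing the last completed write $W$ to $B$ with signature $s$ --- does not have the property you assert. After $W$ no \emph{new} $(\cdot,s)$ entries can appear, but two copies of some $(r,s)$ may already sit in $B$, written by $r$ before $W$ (with $r$ poised, withholding its next write), and they persist through $q'$'s scan immediately after $W$; then $q'$ does see $\num{(r,s)}{\widehat{b}}\geq 2$ and process-loses, so the claimed contradiction does not materialize at $q'$. To finish along your route you would have to shift attention to $r$: just after $W$ the array $B$ holds $(r,s)$ twice and $(q',s)$ once, so $r$ cannot satisfy the condition of Line~\ref{line:second-true}, cannot lose on $A$ or signature-lose (no writes to $A$ after $w$), and hence could only lose after another write to $B$ --- contradicting the maximality of $W$. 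The paper avoids this extra step by choosing the witness differently: using Lemma~\ref{lem:noOld-sig} it first shows every relevant $(\cdot,\sigma)$-write to $B$ occurs after $w$, then takes $z$, the process among these whose \emph{losing scan} is last, and argues about the process $z'$ appearing twice at $s_z$: those two entries were written by $z'$ before $z'$'s own (earlier) last scan and survive until $s_z$, so $z'$ sees itself in two components at its last scan and therefore cannot return \True{} in Line~\ref{line:second-true}, the desired contradiction. So either adopt the ``last losing scan'' witness or add the argument about $r$; as written, the step ``$q'$ does not process-lose'' is unjustified and, in fact, false.
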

\begin{proof}
  By way of contradiction, assume that there is an execution in which all processes return \lose.
    Let $u$ be the process that performs the last write to $A$,
  let $w_u^{A}$ be that write,
  and let $\sigma$ be the signature of $A$ after $w_u^A$.
  Let $s_u$ be the last scan by $u$.
  Then $u$ returns \lose in Line~\ref{line:first-die} or \ref{line:second-die}.

First consider the case in which $u$ returns \lose in Line~\ref{line:first-die}.
At $s_u$, \num{u}{a_u} is not equal to $0$ because the last write to $A$ is performed by $u$ and $s_u$ happens after $w_u^A$.
Therefore, by the if-condition of Line~\ref{line:second-if},
\num{u}{a_u}$ = 1$ and there is a process $y$ such that in $s_u$,
\num{y}{a_u} $=2$.
Let $w_y^A$ be the last write by $y$ to $A$.
Since $u$ performs the last write to $A$, $w_y^A$ precedes $w_u^A$.
  Because no process writes $y$ to $A$ after $w_y^A$ and no process writes to $A$ after $w_u^A$ and, later, at $s_u$, \num{y}{a_u} $=2$,
  it follows that \num{y}{A} $\geq 2$ for the entire execution after $w_y^A$.
  Therefore any scan
	by $y$ after $w_y^A$ must satisfy \num{y}{a_y}$\geq 2$.
  This implies $y$ cannot return \lose, contradicting the assumption.

Next consider the case in which $u$ returns \lose in Line~\ref{line:second-die}.
This implies $u$ calls \Knockout{$\sigma$} after $w_u^{A}$
from which it returns \True in Line~\ref{line:first-true} or in Line~\ref{line:second-true}.
But $u$ cannot return \True in Line~\ref{line:first-true} because the value of array $A$
remains $\sigma$ after $w_u^A$.
Therefore $u$ returns \True in Line~\ref{line:second-true}.

Let $S = \{ (q, i, w_q) ~|~ B[i] = (q, \sigma) \text{ at some scan after $w_u^A$ and $w_q$ is the last write by $q$}$ $\text{ to $B[i]$ before this scan}\}$ and let $Q=\{q ~|~ (q, i, w_q) \in S\}$.
Because $u$ returns \True in Line~\ref{line:second-true},
$S$ is not empty.
By Lemma~\ref{lem:noOld-sig}, for each $(q, i, w_q) \in S$, $w_u^A$ precedes $w_q$.
This implies that for each $q \in Q$, $q$ performs a write (i.e. $w_q$) to $B$
after $w_u^A$ and, by assumption, some time later, does a losing scan.

\begin{figure}[h]
	\centering

\begin{tikzpicture}[scale=1.85]
\draw (0.5,0) -- (7.5,0);

\draw (1,-0.05) -- (1,0.05);
\node at (1,-0.15) {$w^A_u$};
\node at (1,-0.4) {last};
\node at (1,-0.6) {write to};
\node at (1,-0.8) {$A$};
\node at (1,-1) {in the execution};

\draw (3,-0.05) -- (3,0.05);
\node at (3,-0.15) {$w_q$};
\node at (3,-0.4) {last};
\node at (3,-0.6) {write to};
\node at (3,-0.8) {$B[i]$};
\node at (3,-1) {before $s_q$};

\draw (3.1,-0.05)[dashed] -- (3.1,0.55);
\node at (3.1,0.65) {$B[i] = (q, \sigma)$};

\draw (7,-0.05) -- (7,0.05);
\node at (7,-0.15) {a scan $s_q$};
\node at (7,-0.35) {after $w^A_u$};

\draw (7.1,-0.05)[dashed] -- (7.1,0.55);
\node at (7.1,0.65) {$B[i] = (q, \sigma)$};
\node at (7.1,0.9) {$A = \sigma$};

\end{tikzpicture}

	\label{figure:5}
	\caption{Illustration of the order of the operations if $w_i$ precedes $\hat{w}$}
\end{figure}
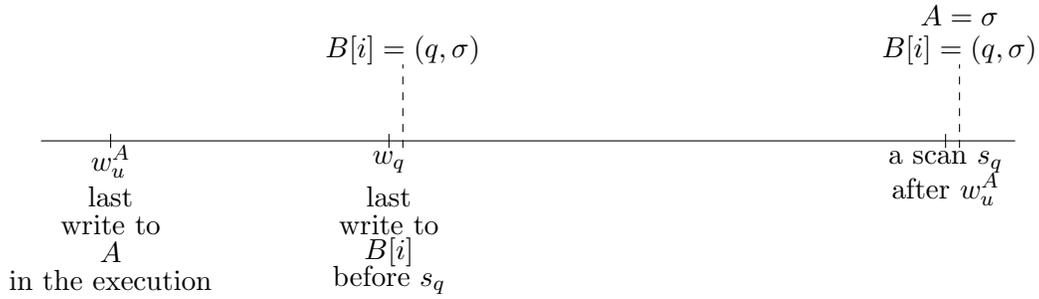

For each $q \in Q$,
$q$ cannot return \lose at Line~\ref{line:first-die} because this would imply $q$ writes to $A$ after $w_u^A$.
Therefore $q$ returns \lose at Line~\ref{line:second-die} implying that $q$ returns \True at Line~\ref{line:first-true} or Line~\ref{line:second-true}.
It does not return \True in Line~\ref{line:first-true} because $\sig_q = \sigma$
and the value of $A$ remains $\sigma$ after $w_u^A$.
Therefore for each $q \in Q$, $q$ returns \True in Line~\ref{line:second-true},
following a losing scan that is after $w_u^A$.
Let $z$ be the last process in $Q$ to do its losing scan, $s_z$.
At $s_z$ two components in $B$ contain $(z', \sigma)$, where $z' \neq z$.
Hence, $z' \in Q$.
Thus, between the last write by $z'$ (after $w_u^A$) and the last scan by $z'$,
these two components in $B$ contain $(z', \sigma)$.
So at $z'$'s last scan at least two components in $B$ contain $(z', \sigma)$. Therefore $z'$ cannot return \True in Line~\ref{line:second-true},
contradicting the assumption that the last scan of $z'$ is a losing scan.
\end{proof}
Lemma~\ref{lem:late-scan} through Lemma~\ref{lem:main2} provide us with additional properties of the algorithm
that are combined to prove, in Lemma~\ref{lem:math}, that  when two or more processes invoke \Competition{},
at most a constant fraction of them can return \win.

\begin{lemma}\label{lem:late-scan}
Suppose an execution between a write and the next scan by the same process, say $p$, contains a winning write.
Then the scan by $p$ is a losing scan.
\end{lemma}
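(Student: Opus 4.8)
\textbf{Plan for Lemma~\ref{lem:late-scan}.}
The statement to prove is: if between a write $w$ by process $p$ and $p$'s next scan $s$ (so $s$ is the scan immediately following $w$ in $p$'s execution) there is a winning write, then $s$ is a losing scan. I would split into two cases according to which line $w$ was issued from, since $w$ is either a write to $A$ at Line~\ref{line:Awrite} (so $s$ is the scan$(A)$ at Line~\ref{line:Cscan}) or a write to $B$ at Line~\ref{line:Bwrite} (so $s$ is the scan$(A,B)$ at Line~\ref{line:Kscan}).

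\emph{Case 1: $w$ is a write to $A$, and $s$ is the following scan$(A)$.} Let $w'$ be the winning write lying in $E[w:s)$, performed by some process $q$, and let $\sigma'$ be the winning signature, i.e. the value of $A$ right after $w'$; note $\sigma' = (q,q,q)$, a non-full signature all of whose entries are $q$. The key point is that $w' = w_j$ for some $j$ in the numbering of winning writes, and by the interleaving established before the lemma (order $w_1,s_1,\dots,w_\kappa,s_\kappa$), the corresponding winning scan $s_j = s_j$ that sees $A=(q,q,q)$ occurs after $w'$; moreover, since $q$ writes nowhere between $w'$ and $s_j$ except possibly... actually the cleanest route is: at $w'$, $q$ completes writing its $\id$ to all three components of $A$, so immediately after $w'$, $A = (q,q,q)$. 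Now consider $p$'s scan $s$. Either $A$ still equals $(q,q,q)$ at $s$, in which case $\num{p}{a_p}=0$ (since $p\neq q$ — $p$ cannot be the winner $q$, because $q$ never scans again before returning \win, whereas $p$ is about to scan at $s$), and then $\num{q}{a_p}=3 > 0 = \num{p}{a_p}$, so the if-condition of Line~\ref{line:second-if} fires and $s$ is a losing scan; or $A$ has been changed between $w'$ and $s$ by some further write, but then I argue that at $s$ process $p$ still occupies at most one component of $A$ while some other process occupies at least... hmm, this sub-case needs care. The robust way: immediately after $w'$ the only identifier in $A$ is $q$, and $p\neq q$; so at the moment just after $w'$, $\num{p}{A}=0$. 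Between that moment and $s$, the only writes to $A$ that could put $p$ into $A$ would have to be writes by $p$ itself — but $p$ performs no write in $E[w:s)$ (its next operation after $w$ is $s$). Hence at $s$, $\num{p}{a_p}=0$, and since at $s$ every component of $A$ holds some identifier $\neq p$, and there are three components, some identifier $y\neq p$ has $\num{y}{a_p}\geq 1 > 0$, so Line~\ref{line:second-if} triggers and $s$ is a losing scan. That settles Case 1 cleanly.

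\emph{Case 2: $w$ is a write to $B$, and $s$ is the following scan$(A,B)$ at Line~\ref{line:Kscan}, during some \Knockout{$\sigma$} call of $p$ with $\sig_p=\sigma$.} Here $\sigma$ is a full signature (it was obtained by $p$ from a scan$(A)$ returning a full signature containing $p$, the precondition for invoking \Knockout{} at Line~\ref{line:invokeK}/\ref{line:num1}). The winning write $w'$ in $E[w:s)$ makes $A=(q,q,q)$ immediately afterward, which is \emph{not} full, hence $\neq \sigma$. If at $s$ we have $\widehat{a}_p \neq \sigma$, then Line~\ref{line:A-changed} fires and $s$ is a losing scan, done. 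Otherwise $\widehat{a}_p = \sigma$ at $s$, so $A$ changed from its post-$w'$ value $(q,q,q)$ back to $\sigma$ somewhere in $E[w':s)$; but then $s$ is a scan (at a point after the winning write $w'$) at which $A=\sigma$ is full — I can now invoke Lemma~\ref{lem:scans} applied to this scan $s$: since $s$ is not a losing scan (the contrary is what we want, so assume it for contradiction), the most recent scan$(A)$ at Line~\ref{line:Cscan} by $p$ before $s$ also had $A=\sigma$; this is consistent, so Lemma~\ref{lem:scans} alone does not yield the contradiction. Instead I appeal to the structural fact behind Lemma~\ref{lem:allParticipate}/Lemma~\ref{lem:noOld-sig}: once $A$ becomes the non-full winning signature after $w'$ and later returns to the full signature $\sigma$ at $s$, Lemma~\ref{lem:allParticipate} (with $s_1$ taken as a scan witnessing $A=\sigma$ just... ) — more directly, apply Lemma~\ref{lem:noOld-sig}-type reasoning: $p$'s write $w$ to $B[i]$ had value $(p,\sigma)$ and at $s$ (by assumption $s$ not losing) $p$ would proceed, but between $w$ and $s$ the signature of $A$ left $\sigma$ (it equalled $(q,q,q)\neq\sigma$ right after $w'$) and returned, and by Lemma~\ref{lem:allParticipate} some process over-wrote all of $B$ with a pair carrying a signature $\neq\sigma$, so $B[i]\neq(p,\sigma)$ just before $s$ unless $p$ re-wrote it — but $p$ does not write in $E[w:s)$. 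Hence at $s$, $\widehat{b}_p[i]\neq(p,\sigma)$, so $\num{(p,\sigma)}{\widehat{b}_p}\leq 2 < 3$, and $p$ does not return \False at Line~\ref{line:Bwin}; combined with $\widehat{a}_p=\sigma$ ruling out Line~\ref{line:A-changed}, the only way $s$ is non-losing is if Line~\ref{line:second-true} also fails and $p$ merely iterates the while-loop — but that means $s$ is not a losing scan, which we must rule out. So the genuine obstacle is handling this residual sub-case, and the right tool is that between $w'$ and $s$, by Lemma~\ref{lem:allParticipate} there is a completed \Knockout{$\sigma'$} with $\sigma'\neq\sigma$ returning \False, whose over-writes of $B$ clobber $B[i]$ after $w$; this forces $\widehat{a}_p\neq\sigma$ at the first scan of $p$ after those over-writes that precedes $s$ — but $p$ has no scan between $w$ and $s$. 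Therefore at $s$ itself the signature of $A$ must already have been changed away from $\sigma$ by the time $B[i]$ was clobbered and re-written... I expect the clean closing argument is: in $E[w:s)$ there is a winning write $w'$, so $A=(q,q,q)$ just after $w'$; Lemma~\ref{lem:first-write} / Lemma~\ref{lem:allParticipate} then show $A$ cannot return to the full signature $\sigma$ without a full \Knockout{$\sigma'$} cycle, whose presence, together with $p$ taking no steps in $E[w:s)$, forces $\widehat{a}_p\neq\sigma$ at $s$ — contradiction with $s$ non-losing.

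\textbf{Main obstacle.} The crux is Case 2: ruling out the possibility that $A$'s signature oscillates away from $\sigma$ and back to $\sigma$ within the window $E[w:s)$ during $p$'s single \Knockout{$\sigma$}, so that $p$'s scan $s$ could still see $A=\sigma$ and not lose. The winning write $w'$ guarantees that $A$ genuinely leaves $\sigma$ (since $(q,q,q)$ is non-full hence $\neq\sigma$), so this is exactly the scenario that Lemma~\ref{lem:noOld-sig} and Lemma~\ref{lem:allParticipate} were designed to preclude; I would make this precise by showing the hypotheses of Lemma~\ref{lem:allParticipate} hold with $s_1$ a scan witnessing $A=\sigma$ before $w'$ and $s_2=s$, extracting a completed \Knockout{$\sigma'$} ($\sigma'\neq\sigma$) returning \False in $E[w':s)\subseteq E[w:s)$, and noting its over-writes of $B$ combined with $p$'s inactivity in $E[w:s)$ make $s$ a losing scan (either via Line~\ref{line:A-changed} or via the over-written $B[i]$ no longer equalling $(p,\sigma)$ together with a $(q',\sigma)$, $q'\neq p$, appearing with multiplicity $\geq 2$, triggering Line~\ref{line:second-true}). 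Case 1 is comparatively routine: $p$ takes no step in $E[w:s)$, a winning write purges every non-$q$ identifier from $A$, and $p\neq q$, so $\num{p}{a_p}=0<\num{y}{a_p}$ for some $y$ at $s$, triggering Line~\ref{line:second-if}.
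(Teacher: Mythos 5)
Your Case 1 is essentially the paper's argument and is correct: after the winning write every component of $A$ holds the winner's id, $p$ takes no step strictly between $w$ and $s$, and only $p$ ever writes $p$'s id, so $\num{p}{a_p}=0$ at $s$ and Line~\ref{line:second-if} fires. (Your stated reason that $p$ cannot be the winner --- ``$q$ never scans again before returning \win'' --- is wrong as phrased, since the winner does scan once more, namely its winning scan; the correct reason is simply that the winning write lies strictly between $w$ and $s$, where $p$ takes no step, so it is by some $q\neq p$.)

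The genuine gap is Case 2. The paper does not need any of the machinery you invoke: the same one-line observation from Case 1 closes it. When $p$ is inside \Knockout{$\sigma$}, the signature $\sigma=\sig_p$ is full and \emph{contains $p$} (it was obtained from a scan with $\num{p}{a}=1$). After the winning write, $A$ contains only the winner's id, and since $p$ performs no step in the interval and only $p$ can write $p$'s id to $A$, at the scan $s$ we have $\num{p}{\widehat{a}_p}=0$. Hence $\widehat{a}_p\neq\sigma$, Line~\ref{line:A-changed} makes \Knockout{} return \True, and $p$ returns \lose at Line~\ref{line:second-die}; the sub-case ``$\widehat{a}_p=\sigma$ at $s$'' that consumes most of your Case 2 is simply impossible. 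Your attempted route through Lemmas~\ref{lem:allParticipate} and \ref{lem:noOld-sig} does not close: as you yourself note, showing that $B[i]\neq(p,\sigma)$ at $s$ only prevents $p$ from returning \False{} at Line~\ref{line:Bwin}; it does not make $s$ a losing scan, and you end with ``I expect the clean closing argument is\ldots'' rather than an argument. So as written the proof of the lemma is incomplete precisely in the case needed (scans at Line~\ref{line:Kscan}), even though the missing step is the same $\num{p}{\cdot}=0$ observation you already used in Case 1; the paper's proof is this uniform two-line argument with no case split.
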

\begin{proof}
At the winning write all components in $A$ contain the $\id$ of the process that performs this winning write.
In the sub-execution from the winning write to the scan by $p$ there is no write by $p$.
Since only $p$ writes its $\id$, at $p$'s scan, \num{p}{a_p} $=0$.
Hence $p$ returns \lose after this scan.
\end{proof}

\begin{observation}\label{obs:two-writes}
Every sifting interval contains at least two writes to $A$.
\end{observation}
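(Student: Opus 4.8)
My plan is to argue directly from the structure of a sifting interval and the code of \Competition{}. A sifting interval $J$ begins at a winning scan $s_i$ (performed by some process $q_i$) and ends just before the next winning write $w_{i+1}$ (performed by $q_{i+1}$). The key observation is that at $s_i$, all three components of $A$ contain $q_i$'s identifier (that is, $A=(q_i,q_i,q_i)$), whereas immediately after $w_{i+1}$ — equivalently, at the end of $J$ — all three components of $A$ contain $q_{i+1}$'s identifier. So the question reduces to: starting from the state $A=(q_i,q_i,q_i)$, how many writes to $A$ are needed before $A$ can reach the state $(q_{i+1},q_{i+1},q_{i+1})$, where $w_{i+1}$ itself is the final one of those writes?

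First I would handle the case $q_{i+1}\ne q_i$. Before $w_{i+1}$ is executed, $q_{i+1}$ must, at its preceding scan at Line~\ref{line:Cscan}, have seen $A$ with exactly two copies of $q_{i+1}$ and have chosen a position $\pos$ with $a[\pos]\ne q_{i+1}$ and $a[(\pos-1)\bmod 3]=q_{i+1}$ (Line~\ref{line:pos}); this scan cannot already show three copies of $q_{i+1}$, else $q_{i+1}$ would have returned \win at that earlier scan rather than performing $w_{i+1}$. Hence immediately before $w_{i+1}$ there is at least one component of $A$ holding a value other than $q_{i+1}$. Since at $s_i$ that component held $q_i\ne q_{i+1}$, some write to $A$ strictly inside $J$ (distinct from $w_{i+1}$, since $w_{i+1}$ writes $q_{i+1}$) must have occurred — indeed at least one write is needed just to place the first $q_{i+1}$ into $A$ after $s_i$, since at $s_i$ no component holds $q_{i+1}$. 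Together with $w_{i+1}$ this gives at least two writes to $A$ in $J$.

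For the case $q_{i+1}=q_i$: then between $s_i$ and $w_{i+1}$, process $q_i$ must have left the winning scan $s_i$ (returning \win would end its method, so it cannot be the same method call; it must be a fresh \Competition{} call, or else $q_{i+1}$ is simply a different process and we are in the previous case). Actually the cleanest route here is to note that $w_{i+1}$ is preceded by a scan of $q_{i+1}$ showing exactly two copies of $q_{i+1}$ and choosing $\pos$ as in Line~\ref{line:pos}; but for $q_{i+1}$ to be executing Line~\ref{line:Awrite} at all after $s_i$, it must have looped, and the only way $A$ can differ from $(q_i,q_i,q_i)$ — which it must, since $q_{i+1}$'s pre-$w_{i+1}$ scan does not show three copies — is via some intervening write to $A$. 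I expect the main obstacle to be nailing down this same-process subcase rigorously without circular reference to later lemmas; the safest fix is to invoke Lemma~\ref{lem:first-write} applied at the scan $s_i$ (a full — indeed constant — signature), which guarantees the first write to $A$ after $s_i$ changes $A$'s signature away from $(q_i,q_i,q_i)$, and then observe that $w_{i+1}$, which must restore a three-of-a-kind signature, is necessarily a later, distinct write. This yields two writes to $A$ in every sifting interval and completes the proof.
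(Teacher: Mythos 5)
There is a genuine gap, and it is a counting one: you include the winning write $w_{i+1}$ among the two writes, but by definition a sifting interval ends \emph{at the operation immediately before} the next winning write, so $w_{i+1}$ is not contained in the interval. Your Case~1 only exhibits one write strictly inside the interval (``at least one write is needed just to place the first $q_{i+1}$ into $A$'') and then adds $w_{i+1}$ to reach two; as written this proves only that the interval \emph{plus} $w_{i+1}$ contains two writes, i.e.\ only one write within the sifting interval itself. The stronger count is exactly what is needed later: Lemma~\ref{lem:help} lists all writes $w_1^A,\dots,w_\ell^A$ to $A$ \emph{during} $I=E[s^\ast:w^\ast)$ and uses $\ell\geq 2$ to form $E[w_2^A:w^\ast)$. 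The repair is the paper's argument, which your own starting point already contains: immediately after $w_{i+1}$ all three components of $A$ hold $q_{i+1}$; since $w_{i+1}$ writes a single component, the other \emph{two} components already held $q_{i+1}$ just before $w_{i+1}$; only $q_{i+1}$ ever writes its own identifier, and at $s_i$ those components held $q_i\neq q_{i+1}$, so $q_{i+1}$ must have written each of them at some point strictly between $s_i$ and $w_{i+1}$ --- that is, two writes to $A$ inside the interval, without counting $w_{i+1}$ at all.

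Two smaller points. The case $q_{i+1}=q_i$ never arises: a process returns \win{} immediately after its winning scan and invokes \Competition{} only once, so consecutive winners are distinct, and your whole second case is unnecessary. Moreover, the fix you propose there is invalid: Lemma~\ref{lem:first-write} requires a \emph{full} signature, which the paper defines as three pairwise distinct identifiers, whereas $(q_i,q_i,q_i)$ is constant, so that lemma cannot be applied at $s_i$. Finally, your inference that ``immediately before $w_{i+1}$ some component differs from $q_{i+1}$'' from the content of $q_{i+1}$'s preceding scan is not immediate (the array can change between that scan and $w_{i+1}$); it happens not to be needed once you count as above.
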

\begin{proof}
Consider the sifting interval $I=E[s^{\ast}:w^{\ast})$. Let $p^{\ast}$ be the process that performs $w^{\ast}$.
Since at the winning write $w^{\ast}$, all components in $A$ contain $p^{\ast}$,
$p^{\ast}$ must have performed at least two writes to $A$ before $w^{\ast}$,
and these two writes must be after the previous winning scan, which is $s^{\ast}$.
\end{proof}

A sifting interval that does not contain a write to $A$ by a process whose next scan is a losing scan
is called a \emph{slow sifting interval}.

\begin{lemma}\label{lem:help}
For any slow sifting interval $I$,
there exists a signature $\sigma=(q_0,q_1,q_2)$
and a set $Z \subseteq \{0, 1, 2\}$ satisfying: $|Z|=2$ and for each $z \in Z$ during $I$,
 $q_z$ performs a write
and then a scan in \Competition{}
and then invokes \Knockout{$\sigma$}
and becomes poised to write $(q_z, \sigma)$ to $B$.
Furthermore, there is no write to $A$ between these two scans.
\end{lemma}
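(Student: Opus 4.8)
The plan is to extract from the slowness hypothesis a rigid structure on how a clean sweep of $A$ can re-form inside $I$, and to show that this forces two processes to head to $B$ under a common full signature. Write $I=E[s^{*}:w^{*})$, where $s^{*}$ is a winning scan and $w^{*}$ the next winning write, performed by a process $p^{*}$; note that no winning scan occurs strictly inside $I$, so no process returns \win during $I$.

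First I would record the consequences of slowness.
\textbf{(i)} If a process $p$ writes to $A$ during $I$, then $p$'s next scan $\hat s$ at Line~\ref{line:Cscan} lies inside $I$: otherwise $w^{*}$ is a winning write occurring between that write and $\hat s$, so by Lemma~\ref{lem:late-scan} $\hat s$ is a losing scan, contradicting slowness.
\textbf{(ii)} At such a scan $\hat s$, $\num{p}{\hat s}\ge 1$, since a zero count forces $p$ to lose at Line~\ref{line:first-die}.
\textbf{(iii)} If $\num{p}{\hat s}=1$ then $p$ does not lose at Line~\ref{line:second-if}, so the signature at $\hat s$ is full, and $p$ invokes \Knockout{$\sigma$} at Line~\ref{line:invokeK} with $\sigma$ that full signature, becoming poised to write $(p,\sigma)$ to $B$.
\textbf{(iv)} Every write to $A$ changes the signature of $A$ (a process writes its own id to a position that did not contain it, by Lemmas~\ref{obs:diffPos} and~\ref{lem:first-write}, and only that process writes that id).

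Next I would locate a full signature and name it. At $s^{*}$ the signature of $A$ is $(q,q,q)$ for the process $q$ that just won, and $q$ takes no steps during $I$. Let $\omega_{1}$ be the first write to $A$ in $I$, by $\pi_{1}\neq q$; afterward the signature is a permutation of $(\pi_{1},q,q)$, and $\pi_{1}$'s only copy in $A$ is the one it wrote, so by (i)--(iii) its next scan sees a full signature. Let $\omega_{k}$ be the first write to $A$ in $I$ after which the signature of $A$ is full, say $\sigma=(q_{0},q_{1},q_{2})$, executed by $\pi_{k}$. Since a single write turned a non-full signature into $\sigma$, the signature just before $\omega_{k}$ is a permutation of $(v,v,u)$ with $v\neq u$, $\omega_{k}$ overwrites a $v$-entry with $\pi_{k}$, and the owners of $\sigma$ are exactly the three distinct processes $\pi_{k},v,u$. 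At most one of these equals $q$, and $\pi_{k}\neq q$, so at least two owners are processes other than $q$; for each such owner $q_{z}$, the entry $A[z]=q_{z}$ was produced by a write of $q_{z}$ to $A[z]$ during $I$, since $A[z]=q$ at $s^{*}$.

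Finally I would show that two of these non-$q$ owners scan while $A=\sigma$ and invoke \Knockout{$\sigma$}, with their triggering scans inside the single maximal block of $E$ on which $A=\sigma$; by (iv) this is the $\sigma$-epoch $[\omega_{k},\omega_{k+1})$ with $\omega_{k+1}$ the next write to $A$. For $\pi_{k}$: by (i) its first scan after $\omega_{k}$ is inside $I$; its only copy is $A[\pos]$ from $\omega_{k}$, which no one else may overwrite before that scan without making $\pi_{k}$ lose (excluded by slowness), so the scan still has count $1$, sees $A=\sigma$, lies in the $\sigma$-epoch, and triggers \Knockout{$\sigma$} by (iii). For an owner $q_{z}\in\{v,u\}$ with $q_{z}\neq q$, I would argue via (i)--(iv), the impossibility of any process returning \win inside $I$, and the minimality of $\omega_{k}$ (so $A$ is non-full on all of $E[s^{*}:\omega_{k})$), that $q_{z}$'s first scan after it first holds exactly one copy within the $\sigma$-epoch occurs inside $[\omega_{k},\omega_{k+1})$ and triggers \Knockout{$\sigma$}; the alternative---$q_{z}$ holding two copies of itself just before the epoch, looping, and rewriting---would either create a clean sweep of $A$ inside $I$ or be ruled out by slowness. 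Taking $Z$ to be the two corresponding indices yields the stated $\sigma$ and $Z$, and since both triggering scans lie in $[\omega_{k},\omega_{k+1})$, no write to $A$ separates them.

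The main obstacle is exactly this last step: pinning down which full signature to use and proving that two of its owners invoke \Knockout{} on it within one $\sigma$-epoch, rather than on some other full signature met earlier or with $\ge 2$ copies. I expect this needs a careful case analysis of each owner's last write to $A$ preceding the $\sigma$-epoch and its ensuing scan, leaning on (i)--(iv), on the fact that no \win is returned during $I$, and on the minimality of $\omega_{k}$. The case of an owner that takes no further steps (so never performs the ensuing scan) must also be addressed, e.g.\ by extending the schedule to a quiescent state, which does not affect the statement.
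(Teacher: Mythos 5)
There is a genuine gap, and it is exactly at the step you flag as the main obstacle. Your plan fixes $\sigma$ to be the first full signature formed during $I$ (after the write $\omega_k$) and then tries to show that two of $\sigma$'s owners scan inside the $\sigma$-epoch $[\omega_k,\omega_{k+1})$ and invoke \Knockout{$\sigma$}. Even the argument you give for the writer $\pi_k$ of $\omega_k$ does not hold up: what slowness forbids is $\pi_k$'s \emph{own} copy being erased before its next scan (that would give count $0$ and a loss at Line~\ref{line:first-die}), but other processes are free to overwrite the \emph{other} two components before $\pi_k$ scans. Such writes change the signature away from $\sigma$ without making $\pi_k$ lose, so $\pi_k$'s next scan can see a different (possibly full) signature and invoke \Knockout{} on that one instead; it need not lie in the $\sigma$-epoch at all. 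For the remaining owners $v,u$ the situation is worse: their triggering scans may have occurred \emph{before} $\omega_k$ (they may already be inside a \Knockout{} call on an earlier signature, or be poised to write to $A$ again), and nothing in (i)--(iv), the absence of wins inside $I$, or the minimality of $\omega_k$ forces them to perform a Line~\ref{line:Cscan} scan during $[\omega_k,\omega_{k+1})$. So for this particular choice of $\sigma$ the conclusion may simply fail; the lemma only asserts that \emph{some} signature works, and pinning it to the first full one is the wrong anchor.

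For contrast, the paper's proof avoids choosing a distinguished signature altogether: it lists all writes $w_1^A,\dots,w_\ell^A$ to $A$ in $I$ (at least two by Observation~\ref{obs:two-writes}), notes via Lemma~\ref{lem:late-scan} and slowness that the next scan of each writer lies in $I$, shows the first writer's scan cannot precede $w_2^A$ (it would lose against the previous winner's two remaining copies), and then applies pigeonhole: $\ell$ next-scans must fall into the $\ell-1$ gaps $E[w_j^A:w_{j+1}^A)$, so some gap contains two scans by distinct processes. Since no write to $A$ separates them, both scans return the same signature, slowness rules out losing, and counts force both processes into \Knockout{$\sigma$} for that common $\sigma$. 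If you want to salvage your route you would need to prove that two owners of your chosen $\sigma$ actually scan within its epoch, which, as argued above, is not implied by slowness; the counting argument sidesteps this entirely.
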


\begin{proof}
Let $I$ be $E[s^{\ast}:w^{\ast})$ and Let $p^{\ast}$ be the process that performs $s^{\ast}$.
Suppose that $w_1^A,w_2^A,\dots,w_{\ell}^A$ is the sequence of all writes to $A$ during $I$.
By Observation~\ref{obs:two-writes}, $\ell \geq 2$.
For each $i$, $1 \leq i \leq \ell$, let $s_i$ denote the next scan by the process that executes $w_i^A$.
Each $s_i$ is at Line~\ref{line:Cscan} following $w_i$, at Line~\ref{line:Awrite}
of \Competition{}.
Let $S$ denote the set of all these scans.
Let $\hat{I}$ denote the execution $E[w_2^A : w^{\ast})$.

By Lemma~\ref{lem:late-scan}, if $s_i$ happens after $w^{\ast}$
then $s_i$ is a losing scan and hence $E[s^{\ast}:w^{\ast})$ is not a slow sifting interval.
Therefore for all $i$, $1 \leq i \leq \ell$, $s_i$ occurs in $E[w_i^A : w^{\ast})$.
Let $q$ be the process that performs $s_1$.
At $s^{\ast}$, \num{p^{\ast}}{A} $=3$.
Because only one write happens to $A$ during $E[s^{\ast}: w_2^A)$,
$q$ would return \lose at Line~\ref{line:first-die} if $s_1$ precedes $w_2^A$
implying $E[s^{\ast}:w^{\ast})$ is  not a slow sifting interval.
Hence for all $i$, $1 \leq i \leq \ell$, $s_i$ must happen in $\hat{I}$.
Execution $\hat{I}$ consists of the $\ell -1$ disjoint sub-executions
$E[w_2^A:w_3^A), E[w_3^A:w_4^A), \ldots, E[w_{\ell}^A: w^{\ast}=w_{\ell+1}^A)$.
Since $\ell$ scans happen in these $\ell-1$ executions, by the pigeonhole principal,
there is a $j$, $2 \leq j \leq \ell$ such that (at least) two scans in $S$, say $s'$ and $s'' $  occur in $E[w_j^A: w_{j+1}^A)$.
Because no process performs two scans in \Competition{} without writing to $A$ in between,
$s'$ and $s''$ are performed by two distinct processes say $q_z$ and $q_{z'}$.
Because  $E[s^{\ast}:w^{\ast})$ is a  slow sifting interval,
neither $q_z$ nor $q_{z'}$ return \lose at Line~\ref{line:first-die}.
Since no write happens to $A$ during $E[w_j^A,w_{j+1}^A)$,
the scans by $q_z$ and $q_{z'}$ in \Competition{} return the same signature for $A$, say, $\sigma$ where $\sigma$ contains $q_z$ and $q_{z'}$.
Therefore $q_z$ and $q_{z'}$ both invoke  \Knockout{$\sigma$}.
\end{proof}

\begin{lemma}\label{lem:allParticipateb}
Suppose at scan $s_1$, $A=(p_0,p_1,p_2)$ is a full signature.
Let $w$ be the first write to $A$ after $s_1$.
Let $s_2$ be any scan after $w$ such that at $s_2$, $A=(p_0,p_1,p_2)$. Then, for all $\ell \in \{0,1,2\}$, $p_{\ell}$ performs at least two writes to $A$ in the execution $E[w : s_2)$.
\end{lemma}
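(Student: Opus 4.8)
The plan is to strengthen the argument in Lemma~\ref{lem:allParticipate}, where we showed that \emph{some} $p_\ell$ performs at least two writes to $A$ in $E[w:s_2)$; here we want this for \emph{all} three of $p_0,p_1,p_2$. The key observation is that the proof of Lemma~\ref{lem:allParticipate} never used any special property of the particular index $i$ with $w$ a write to $A[i]$ except that $w$ was the first write to $A$ after $s_1$ and that $A[i]=p_i$ holds at $s_2$. Since $A=(p_0,p_1,p_2)$ at $s_2$ by hypothesis, every component $A[j]$ satisfies $A[j]=p_j$ at $s_2$, so the argument applies to each index in turn.

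Concretely, fix an arbitrary $\ell\in\{0,1,2\}$. First I would show $p_\ell$ performs at least one write to $A$ in $E[w:s_2)$: since $A[\ell]=p_\ell$ at $s_2$, the last write to $A[\ell]$ in $E[w:s_2)$ must be by $p_\ell$ (only $p_\ell$ writes its own $\id$), and such a write exists because either $w$ itself wrote to $A[\ell]$ — in which case $A[\ell]$ was overwritten and rewritten, or more simply the component holding $p_\ell$ was disturbed — and in any case $A[\ell]$ must hold $p_\ell$ at $s_2$ while, as in Lemma~\ref{lem:allParticipate}, $w$ changes the signature of $A$ away from $(p_0,p_1,p_2)$ (Lemma~\ref{lem:first-write}), so $A$ does not continuously equal $(p_0,p_1,p_2)$ throughout $E[w:s_2)$; hence $A[\ell]$ must be written after $w$ to restore $p_\ell$ there. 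Then I would invoke Lemma~\ref{lem:change-location}: the \emph{first} write by $p_\ell$ to $A$ after $s_1$ goes to some component $A[\pos]$ with $\pos\neq\ell$. If $p_\ell$ wrote only once in $E[w:s_2)$, that single write would be to $A[\pos\neq\ell]$, contradicting $A[\ell]=p_\ell$ at $s_2$ together with the fact that no write to $A[\ell]$ after $w$ would then be by $p_\ell$. (One must also handle whether $p_\ell$'s first post-$s_1$ write to $A$ occurs before or after $w$; if it occurs before $w$, i.e. it equals nothing in $E[w:s_2)$ then we restart the count inside $E[w:s_2)$, but the same location-change reasoning from Lemma~\ref{lem:change-location} applied to $p_\ell$'s writes shows consecutive writes by $p_\ell$ move around, so reaching $A[\ell]$ from a non-$\ell$ position requires at least two writes within the interval.) This gives at least two writes by $p_\ell$ in $E[w:s_2)$, as desired.

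The main obstacle I anticipate is the bookkeeping around \emph{when} $p_\ell$'s writes fall relative to $w$ and $s_1$, since Lemma~\ref{lem:change-location} controls only the \emph{first} write after $s_1$, not the first write after $w$. The clean way around this is to track the sequence of locations written by $p_\ell$: by the same invariant used in Line~\ref{line:pos} and Lemma~\ref{obs:diffPos}, each write by $p_\ell$ is to a component it did not occupy in its preceding scan, so $p_\ell$ cannot write the same component twice in a row, and combined with the fact (from Lemma~\ref{lem:change-location}-style reasoning) that immediately after $w$ no component of $A$ holds $p_\ell$ when $\ell=i$, or that $p_\ell$ occupies only $A[\ell]$ right before being displaced when $\ell\neq i$, restoring $A[\ell]=p_\ell$ by $s_2$ forces $p_\ell$ to write at least twice. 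I would present this as a short case analysis on whether $w$ writes $A[\ell]$ or not, mirroring the single-index argument of Lemma~\ref{lem:allParticipate} but carried out for each of the three indices.
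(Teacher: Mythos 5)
There is a real gap, and it sits exactly where the lemma is harder than Lemma~\ref{lem:allParticipate}. Your argument works for the index $i$ with $w$ a write to $A[i]$: there $A[i]$ is certainly written in $E[w:s_2)$ (by $w$ itself), the last such write must be by $p_i$ since $A[i]=p_i$ at $s_2$, and Lemma~\ref{lem:change-location} forces $p_i$'s first write after $s_1$ to go elsewhere, giving two writes. But for $\ell\neq i$ your key claim --- that because $w$ changes the signature of $A$ (Lemma~\ref{lem:first-write}), ``$A[\ell]$ must be written after $w$ to restore $p_\ell$ there'' --- is a non sequitur: $w$ changes only the single component $A[i]$, so for $\ell\neq i$ nothing you have said rules out that $A[\ell]$ is simply never touched during $E[w:s_2)$ and retains $p_\ell$ throughout, in which case your argument yields no write by $p_\ell$ at all. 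Likewise your phrase ``$p_\ell$ occupies only $A[\ell]$ right before being displaced'' presupposes that $p_\ell$ is displaced, which is precisely what has to be proved. Your opening claim that the proof of Lemma~\ref{lem:allParticipate} used nothing special about $i$ is therefore not accurate: it used that $A[i]$ is provably written in the interval, and that is exactly the asymmetry between $i$ and the other two indices.

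The paper closes this gap with a separate step that your proposal is missing: it first proves (as you do) that whenever a component $A[\ell]$ \emph{is} written in $E[w:s_2)$, $p_\ell$ writes at least twice, and then proves that \emph{every} component is written in the interval. That second part is the crux. Assuming some $A[k]$ is untouched, one looks at the last writes to $A[i]$ and to $A[j]$ (where $A[j]$ is the target of $p_i$'s first write, $j\neq i$ by Lemma~\ref{lem:change-location}); these are by $p_i$ and $p_j$, and their preceding scans lie inside the interval because both processes write twice. Lemma~\ref{obs:diffPos} then forces $A[(i-1)\bmod 3]=p_i$ at one scan and $A[(j-1)\bmod 3]=p_j$ at the other, while the untouched component satisfies $A[k]=p_k$ throughout; hence $k\notin\{i,j\}$, $k\neq(i-1)\bmod 3$ and $k\neq(j-1)\bmod 3$, which is impossible modulo $3$. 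Some argument of this kind --- using the cyclic position constraint of Line~\ref{line:pos} to show no component can sit out the interval --- is needed; without it your case ``$w$ does not write $A[\ell]$'' does not go through.
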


\begin{proof}
In order to prove the lemma we show that for each $\ell \in \{0,1,2\}$, in the execution $E[w : s_2)$:

\begin{enumerate}[a)]
	\item if $A[\ell]$ is written, then in this execution, $p_{\ell}$ writes to $A$ at least twice;
	\item $A[\ell]$ is written.
\end{enumerate}

First we prove $(a)$. Let $w_{\ell}$ be the last write to $A[\ell]$ during $E[w:s_2)$.
Since $A[\ell]=p_{\ell}$ at $s_2$, $w_{\ell}$ is executed by $p_{\ell}$.
By Lemma~\ref{lem:first-write}, $p_{\ell}$'s first write during $E[w:s_2)$ is not to $A[\ell]$. Hence, $p_{\ell}$ executes at least two writes during $E[w:s_2)$, proving $(a)$.

We now prove $(b)$. Suppose that $w$ is a write to component $A[i]$. By $(a)$, $p_i$ writes to $A$ at least twice during $E[w:s_2)$.
Let $w^A_i$ be $p_i$'s first write to $A$ during $E[w:s_2)$. By Lemma~\ref{lem:change-location}, $w^A_i$ is to $A[j]$ where $j \neq i$.

By way of contradiction assume there is a $k \in \{0,1,2\}$ such that, $A[k]$ is not written in $E[w : s_2)$.
In particular, since $A[i]$ and $A[j]$ are written (by writes $w$ and $w^A_i$, respectively) in $E[w : s_2)$, we have:

 \begin{align}
	\text{$k \notin \{i,j\}$}.
 \end{align}

Let $w_{i'}$ and $w_{j'}$ be the last writes to $A[i]$ and $A[j]$, respectively, during $E[w:s_2)$.
Since $A[i]=p_i$ and $A[j] = p_j$ at $s_2$, $w_{i'}$ is executed by $p_i$ and $w_{j'}$ by $p_{j}$.
Let $s_{i'}$ and $s_{j'}$ be the scans by $p_i$, respectively $p_j$, preceding $w_{i'}$, respectively $w_{j'}$.
By $(a)$, both processes execute at least two writes during $E[w:s_2)$, and thus $s_{i'}$ and $s_{j'}$ are both also in $E[w:s_2)$. From Lemma~\ref{obs:diffPos} we conclude that $A[ (i-1) \bmod 3]=p_i$ at $s_{i'}$ and $A[(j-1) \bmod 3]=p_j$ at $s_{j'}$. Since no process writes to $A[k]$ in $E[w : s_2)$, $A[k]=p_k$ throughout $E[w:s_2)$. Hence, we have:

\begin{align}
	\text{$k\neq (i-1) \bmod 3$ and $k\neq (j-1) \bmod 3$}.
 \end{align}

Combining conditions $(4.1)$ and $(4.2)$ contradicts that $k$ is in $\{0,1,2\}$.
\end{proof}

\begin{lemma}\label{lem:noRepeat}	
Let $s$ be a scan$(A)$ from Line~\ref{line:Cscan} by $p$ immediately before $p$ invokes \Knockout{$\sigma$} and
$s'$ be any scan$(A, B)$ by $p$ within this invocation.
Let $w$ be the first write to $A$ after $s$.
If  $w$ precedes $s'$, then $s'$ is a losing scan.
\end{lemma}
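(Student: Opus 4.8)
The plan is a proof by contradiction: I assume that $w$ precedes $s'$ yet $s'$ is \emph{not} a losing scan, and derive that $p$ must itself write to $A$ somewhere in $E[s:s')$, which is impossible because $p$ is executing the single \Knockout{$\sigma$} call throughout that interval.

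First I would record the structural facts. Since $p$ invokes \Knockout{$\sigma$} immediately after the scan $s$ at Line~\ref{line:Cscan}, the test at Line~\ref{line:num1} succeeded at $s$, so $\num{p}{\sigma}=1$; let $\ell$ be the unique index with $\sigma[\ell]=p$ (so $A[\ell]=p$ at $s$). Since $p$ also did not return \lose at Line~\ref{line:second-if}, no identifier of $\procs$ occurs more than once in $\sigma$. The key observation is that $p$ performs no write to $A$ during $E[s:s')$: after $s$ it only evaluates local conditions and then enters \Knockout{$\sigma$}, and within a \Knockout call the only shared writes (Line~\ref{line:Bwrite}) go to $B$; since $s'$ lies inside this very \Knockout call, no write to $A$ by $p$ occurs between $s$ and $s'$. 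As $w$ follows $s$ and precedes $s'$, we have $w\in E[s:s')$; in particular $w$ is executed by some process other than $p$, hence writes a value different from $p$.

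Now, assuming $s'$ is not a losing scan, the \Knockout code (Line~\ref{line:A-changed}, which would otherwise return \True and make $p$ lose) forces $\widehat a_p=\sig_p=\sigma$, i.e.\ $A=\sigma$ at $s'$; in particular $A[\ell]=p$ at $s'$. I then split on whether $\sigma$ is full. If $\sigma$ is full, $p$ is one of its three identifiers, and applying Lemma~\ref{lem:allParticipateb} with $s_1:=s$, this same $w$, and $s_2:=s'$ gives that $p$ performs at least two writes to $A$ in $E[w:s')\subseteq E[s:s')$ --- contradicting the key observation. If $\sigma$ is not full, then from $\num{p}{\sigma}=1$ and the fact that no $\procs$-identifier repeats in $\sigma$, the two components of $\sigma$ other than position $\ell$ must both equal $\bot$; since $\bot$ is never written to $A$, those components are never written before $s'$, so $w$ must write to $A[\ell]$. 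But $w$'s value differs from $p$ while $A[\ell]=p$ at $s'$, so some write of value $p$ to $A[\ell]$ --- necessarily a write by $p$ --- occurs in $E[w:s')\subseteq E[s:s')$, again contradicting the key observation. Either way we reach a contradiction, so $s'$ is a losing scan.

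The main obstacle is exactly the step of ruling out that $A$'s signature, after being perturbed by $w$, is restored to $\sigma$ by other processes before $s'$ without $p$'s participation; in the full-signature case this is supplied directly by Lemma~\ref{lem:allParticipateb}, and in the non-full case by the short argument that the non-$p$ components of $\sigma$ are $\bot$ and can never be rewritten to $\bot$. The remaining points are routine bookkeeping: identifying which lines $p$ executes between $s$ and $s'$, and noting that to conclude ``$s'$ is a losing scan'' it suffices that $\widehat a_p\neq\sigma$ triggers Line~\ref{line:A-changed}, so the other return branches of \Knockout need not be examined.
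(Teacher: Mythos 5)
Your proof is correct and follows essentially the same route as the paper: assume $s'$ is not losing, conclude that $A$ still equals $\sigma$ at $s'$ (the paper cites Lemma~\ref{lem:scans}, you re-derive it from Line~\ref{line:A-changed}), and then invoke Lemma~\ref{lem:allParticipateb} to force $p$ to write to $A$ inside its \Knockout{} call, a contradiction. The only difference is your explicit treatment of a non-full $\sigma$ (two $\bot$ components), a case the paper dismisses by simply asserting that $\sigma$ is full; your elementary argument there is sound and, if anything, covers executions where $A$ still contains $\bot$, which the paper's one-line assertion glosses over.
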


\begin{proof}
Since $p$ invokes \Knockout{$\sigma$}, $\sigma$ is a full signature, $\num{p}{\sigma} = 1$ and $A = \sigma$ at $s$.
By way of contradiction suppose $s'$ is not a losing scan.
Hence, by Lemma~\ref{lem:scans}, at $s'$, the signature of $A$ is $\sigma$.
Therefore, by Lemma~\ref{lem:allParticipateb}, $p$  writes to $A$ in the execution $E[w : s')$.
This is a contradiction because $p$ is performing \Knockout{$\sigma$} in this entire execution and there are no writes to $A$ during the \Knockout method call.
\end{proof}

\begin{lemma}\label{lem:main}
For every slow sifting interval $I$,
there is a  process $p$ that performs a write during $I$
and either the first or the second scan by $p$
following this write is a losing \scan.
\end{lemma}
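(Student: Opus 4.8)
The plan is to use Lemma~\ref{lem:help} to produce two processes that are forced, inside $I$, onto $B$ carrying a common full signature $\sigma$, and then to argue that the three components of $B$ cannot accommodate a successful clean-sweep by each of them: the second of the two, on the scan it performs immediately after its first over-writing write, still sees the first one's pair in two components, so Line~\ref{line:second-true} fires and that scan is losing. The remaining work is to guarantee that this losing scan is only one or two scans past a write the losing process already performed inside $I$, rather than merely near the end of $I$.

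First I would invoke Lemma~\ref{lem:help} on $I = E[s^{\ast}:w^{\ast})$ to obtain a full signature $\sigma = (q_0,q_1,q_2)$ and a two-element set $Z = \{z_1,z_2\} \subseteq \{0,1,2\}$ such that, for each $z \in Z$, during $I$ the process $q_z$ performs a write $w_z$ to $A$, then a scan $t_z$ at Line~\ref{line:Cscan} with $A = \sigma$, then invokes \Knockout{$\sigma$} and becomes poised to write $(q_z,\sigma)$ to $B[0]$. Since $w_z$ is a write to $A$ in a slow sifting interval, $t_z$ is not a losing scan, so $q_z$ indeed enters \Knockout{$\sigma$}, and the second scan $q_z$ performs after $w_z$ is the first scan of that call; call it $v_z$. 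Assume for contradiction that the lemma fails for $I$; then, in particular, $v_{z_1}$ and $v_{z_2}$ are not losing scans, and no write performed inside $I$ has a losing scan as its first or second successor.

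Next I would push the whole analysis into $I$, exploiting that $w^{\ast}$ is itself a write to $A$. By Lemma~\ref{lem:noRepeat}, any scan that $q_z$ performs inside its \Knockout{$\sigma$} call, once a write to $A$ has occurred after $t_z$, is a losing scan; since $v_z$ is not losing, this already forces $q_z$'s write $(q_z,\sigma)$ to $B[0]$ and the ensuing scan $v_z$ to lie before $w^{\ast}$, hence inside $I$, with $A = \sigma$ at $v_z$. Continuing forward: every later scan of $q_z$ in the call that is preceded by a fresh write to $A$ is losing by Lemma~\ref{lem:noRepeat}, and is the first or second successor of a write to $B$ that $q_z$ performed before that write to $A$; in the situations that arise, that write to $B$ lies before $w^{\ast}$, i.e.\ inside $I$, contradicting our assumption. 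Likewise $q_z$ cannot perform a losing scan inside $I$, since it would be the first successor of a write inside $I$. As $I$ is finite, $q_z$ cannot remain inside \Knockout{$\sigma$} through all of $I$ either, because $w^{\ast}$ then occurs and triggers one of the previous cases. Hence $q_z$ must leave \Knockout{$\sigma$} by returning \False --- that is, by clean-sweeping $B$ --- at a scan inside $I$; this holds for both $z_1$ and $z_2$.

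Finally comes the collision on $B$, which is where the real difficulty lies. With $q_{z_1}$ and $q_{z_2}$ both clean-sweeping $B$ inside $I$ while carrying signature $\sigma$ --- say $q_{z_1}$'s clean-sweep scan first --- I would argue that between the two sweeps $q_{z_2}$ must over-write all three components of $B$ with $(q_{z_2},\sigma)$ (only $q_{z_2}$ ever writes that pair), and that the scan $q_{z_2}$ performs immediately after its first over-writing write still shows $(q_{z_1},\sigma)$ in the remaining two components; hence that scan is a losing scan at Line~\ref{line:second-true}, one scan past a write to $B$ that itself lies inside $I$ --- contradicting the failure of the lemma. The parts that need care: (i) the signature of $A$ must not change between the two sweeps, since otherwise Lemma~\ref{lem:noRepeat} makes $q_{z_2}$ lose even sooner, which again yields the lemma; (ii) $q_{z_1}$, having returned \False, could re-enter \Knockout{$\sigma$} after a fresh scan with $A = \sigma$ and re-touch those two components --- this is controlled via Lemmas~\ref{lem:first-write} and~\ref{lem:noOld-sig}, which keep $A \equiv \sigma$ on the window between $t_{z_2}$ and $q_{z_2}$'s clean-sweep scan and so pin down the over-writing count; (iii) a third signature-$\sigma$ process might also be over-writing $B$ in that window, in which case one instead picks the first signature-$\sigma$ over-writer of a completed clean-sweep that fails to complete the next clean-sweep, whose scan right after its over-writing write sees two components of the preceding sweep and is therefore losing, with that write again inside $I$ by the same push-past-$w^{\ast}$ argument. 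I expect keeping the witnessing write strictly inside $I$ to be the recurring technical hurdle throughout this last step.
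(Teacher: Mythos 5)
Your overall architecture is recognisably the same as the paper's (Lemma~\ref{lem:help} to obtain two processes entering \Knockout{$\sigma$} inside $I$, Lemma~\ref{lem:noRepeat} plus the fact that $w^{\ast}$ is itself a write to $A$ to force, under the contradiction hypothesis, both of them to finish their \Knockout{$\sigma$} calls with clean-sweeps of $B$ inside $I$), and that ``push'' step is essentially sound. The genuine gap is in the collision step. Your central claim --- that the second sweeper's scan immediately after its first over-writing write ``still sees the first one's pair in two components'' --- and its repair in caveat (iii) do not hold as stated once a third $\sigma$-carrying process is live in the window. Concretely: after a clean sweep with all three components equal to $(x,\sigma)$, let $y$ write $(y,\sigma)$ to $B[0]$, then let $z$ write $(z,\sigma)$ to $B[1]$, and only then let $y$ and $z$ scan. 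Each of them sees the three distinct pairs $(y,\sigma),(z,\sigma),(x,\sigma)$, so neither scan triggers Line~\ref{line:second-true}; in particular ``the first signature-$\sigma$ over-writer of a completed clean-sweep that fails to complete the next clean-sweep'' does not, at the scan right after its over-writing write, see two components of the preceding sweep. (In this interleaving a losing scan does appear later, at the second scan after one of these writes, but that requires an argument you have not given.) The paper's proof does the corresponding work differently and more robustly: it pivots on $w$, the first write to $A$ after the earliest pre-\Knockout{} scan $\hat s$, and in its Case~3 it takes the \emph{last two} completed \Knockout{$\sigma$} calls among the candidate processes; the third $\sigma$-process has then already returned \False{} and cannot write to $B$ again without first writing to $A$, which cannot happen before $w$, so between the two final scans only one process writes to $B$ and the ``two remaining components'' argument is clean.

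A second omission in the collision step is interference from outside the $\sigma$-triple: a process $r$ carrying a signature $\sigma'\neq\sigma$ may write $(r,\sigma')$ to $B$ in your window, which equally destroys the ``two components of the preceding sweep'' count; your caveats (i)--(iii) only discuss changes to $A$ and $\sigma$-carrying writers. The paper devotes its Case~2 to exactly this: such a write implies $A$ changed between $r$'s pre-\Knockout{} scan and the write, so by Lemma~\ref{lem:noRepeat} $r$'s next scan is losing, and since the write lies inside $I$ this yields the lemma (in your framing, it feeds the contradiction hypothesis) --- but it must be said. Relatedly, to pin down who can over-write a sweep you also need to exclude $\sigma$-carrying writes coming from \Knockout{$\sigma$} calls that were entered outside the current no-write-to-$A$ window; this follows from the fact that only a process writes its own identifier to $A$ together with Lemma~\ref{lem:allParticipateb}, since a process stuck inside \Knockout{} cannot perform the $A$-writes these force. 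Without these pieces the census of possible writers to $B$ between the two sweeps, on which your entire final step rests, is not established; patching your route essentially amounts to importing the paper's Case~2 and its last-two-final-scans device.
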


\begin{proof}
Let $I = E[s^{\ast}:w^{\ast})$ be a slow sifting interval.
By Lemma~\ref{lem:help}, there exists a full signature $\sigma = (q_0,q_1,q_2)$, a set $Q \subseteq \{q_0,q_1,q_2\}$, satisfying $|Q| = 2$ and for each $q \in Q$
during $I$:
\begin{enumerate}[1)]
	\item $q$ performs a write to $A$ and a scan in \Competition{}
and calls \Knockout{$\sigma$} and becomes poised,
at Line~\ref{line:Bwrite}, to write $(q, \sigma)$ to $B[0]$; and
\item there is no write to $A$ between these scans.
\end{enumerate}
Let $Q' \subseteq \{q_0,q_1,q_2\}$ be the set of all processes satisfying $(1)$ and $(2)$. Therefore $2 \leq |Q'| \leq 3$.
Let $\hat{s}$ be the earliest of these scans (by processes in $Q'$ immediately before calling \Knockout{$\sigma$}).
At $\hat{s}$, the signature in $A$ is full and at $w^{\ast}$ the same id is in all locations of $A$.
Therefore, $w^{\ast}$ is  the second  or later write after $\hat{s}$.
Hence, there is at least one write to $A$ in $E[\hat{s}: w^{\ast})$.
Let $w$ be the first write to $A$ in $E[\hat{s}: w^{\ast})$.

Suppose there is $q \in Q'$, such that $q$ performs a scan, say $s$, in Line~\ref{line:Kscan}
of its current call to \Knockout after $w$. Then by Lemma~\ref{lem:noRepeat}, $s$ is a losing scan.
Since $q$ writes at least once in $E[s^{\ast}:w^{\ast})$
and at most once after $w$,
it follows that $q$ performs its last or second last write during $E[s^{\ast}:w^{\ast})$,
and so $s$ is either $q$'s first or second scan following this write, and the lemma holds.

Otherwise, all processes in $Q'$ execute at least one write and perform their last scan of their current call to \Knockout before $w$.
We partition this case into three subcases.

Case 1: There is $q \in Q'$ such that $q$ calls \Knockout{$\sigma$} and returns \True
 (Line~\ref{line:first-true} or \ref{line:second-true}).
Then $q$'s last scan before returning \True
is a losing scan, and the lemma follows.

Case 2: For each process $q \in Q'$, $q$'s current \Knockout call returns \False  and
there is a process $p \notin Q'$
that performs a write $w_p$ to $B$ with value $(p,\sigma')$ in the execution $E[\hat{s}:w)$ where $\sigma' \neq \sigma$.
When $p$ did its scan in \Competition{} just before invoking \Knockout{$\sigma'$},
the signature of $A$ was $\sigma'$.
At $w_p$, the signature of $A$ is $\sigma \neq \sigma'$, so there is a write to $A$ between this scan by $p$
and $w_p$.
Hence, by Lemma~\ref{lem:noRepeat}, $p$'s next scan after $w_p$ is a losing scan, and again the lemma follows. 		

Case 3: For each process $q \in Q'$, $q$'s current \Knockout call returns \False  and
there is no write to $B$ in $E[\hat{s}:w)$ that contains a signature different from $\sigma$. We show that this case is impossible.
Let $S$ be the set of last scans of \Knockout calls by processes in $Q'$.
Let $s''$ be the last scan and $s'$ be the second last scan in set $S$. Let $q'$ and $q''$ be the processes performing $s'$ and $s''$ respectively.
Since $q'$ returns \False, all three components in $B$ contain $(q',\sigma)$ at $s'$.
After $s'$, there can be at most one write to $B$ by $q''$.
Because $q''$'s next scan after such a write would be a losing scan,
contradicting that $q''$ returns \False.
\end{proof}

\begin{lemma}\label{lem:main2}
For every sifting interval, there is a process $p$ and a write $w$ by $p$ satisfying: either
the first operation by $p$ or the third operation by $p$ that follows $w$ is a losing scan.
\end{lemma}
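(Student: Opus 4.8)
The plan is to split every sifting interval into two cases according to whether it is a \emph{slow} sifting interval (contains no write to $A$ by a process whose next scan is losing) or not, and to handle each case with the tools already developed.

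First, suppose the sifting interval $I$ is \emph{not} slow. Then by definition $I$ contains a write $w$ to $A$ by some process $p$ such that $p$'s very next scan is a losing scan. Since that next scan is $p$'s first operation following $w$ (a process alternates write/scan, so the operation immediately after a write is always a scan), the conclusion holds directly with this $p$ and $w$.

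Second, suppose $I$ is slow. Here I would invoke Lemma~\ref{lem:main}: it gives a process $p$ that performs a write $w'$ during $I$ such that the first or second scan by $p$ following $w'$ is a losing scan. The only gap between this and the statement of Lemma~\ref{lem:main2} is the phrasing ``first operation or third operation that follows $w$'' versus ``first or second scan following $w$.'' Because $p$ strictly alternates between writes and scans, the operations following a write $w'$ are, in order, scan, write, scan, write, $\dots$; so $p$'s first scan after $w'$ is the first operation after $w'$, and $p$'s second scan after $w'$ is the third operation after $w'$. Thus taking $w := w'$ translates Lemma~\ref{lem:main}'s conclusion into exactly the form required here, and the lemma follows.

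The main (and only) obstacle is really just bookkeeping: making sure the alternation argument is airtight, i.e. that between any two consecutive shared-memory steps of a single process in \Competition{} or \Knockout{} the pattern is always write-then-scan, so that ``$k$-th scan after $w$'' and ``$(2k-1)$-th operation after $w$'' coincide. This is immediate from inspection of Figure~\ref{fig:Tas-alg}, where every iteration of either \textbf{while} loop writes and then scans, and no two writes or two scans are ever adjacent. With that observation in hand, both cases are short, and the two cases together exhaust all sifting intervals, completing the proof.
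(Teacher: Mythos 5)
Your proposal is correct and follows essentially the same route as the paper: for a non-slow interval the conclusion is immediate from the definition (the losing scan is the first operation after the write), and for a slow interval you invoke Lemma~\ref{lem:main} and use the strict write/scan alternation to identify the first and second scans after the write with the first and third operations after it. The extra bookkeeping you spell out about the alternation is exactly the (implicit) justification the paper gives.
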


\begin{proof}
For any sifting interval that is not slow, the lemma holds by definition.
For any slow sifting interval,
the lemma follows from Lemma~\ref{lem:main},
because each process alternates between writes and scans.
\end{proof}

\begin{lemma}\label{lem:math}
	If $k$ processes invoke the \Competition{} method, then at most $\floor {\frac{2k+1}{3}} $  processes  return \win.
\end{lemma}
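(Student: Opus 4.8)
The plan is to count winners via a charging argument over sifting intervals. Suppose $k$ processes invoke \Competition{} and let $\kappa$ of them return \win; since winning scans and winning writes strictly interleave, the execution contains exactly $\kappa - 1$ disjoint sifting intervals. The idea is to charge each sifting interval to a distinct ``token'' that corresponds to a step of a losing process, in such a way that each losing process contributes at most two tokens. This yields that the number of losing processes is at least $(\kappa-1)/2$, hence $\kappa \le k - (\kappa-1)/2$, which rearranges to $\kappa \le (2k+1)/3$, and since $\kappa$ is an integer, $\kappa \le \floor{(2k+1)/3}$.

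First I would make the charging precise using Lemma~\ref{lem:main2}: for every sifting interval $I$ there is a process $p$ and a write $w$ by $p$ inside $I$ such that either the first operation by $p$ after $w$, or the third operation by $p$ after $w$, is a losing scan. In either case $p$ is a losing process, and the losing scan it performs is essentially ``anchored'' to the write $w$, which lies in $I$. I would then argue that each losing process $p$ can be charged by at most two distinct sifting intervals. The key point is that a losing process performs exactly one losing scan (it returns \lose immediately after, doing no further shared-memory operations), and that losing scan is the first or third operation following some write $w$ of $p$; the write $w$ that ``witnesses'' this losing scan can lie in at most two sifting intervals only if $w$ is close enough to the losing scan --- concretely, the losing scan is one of the two operations $\{$first, third$\}$ after $w$, so $w$ is one of the (at most two) writes by $p$ immediately preceding the losing scan. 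Since each sifting interval is charged via some such witnessing write of its charged process, and a fixed losing process has at most two candidate witnessing writes for its unique losing scan, at most two sifting intervals charge to it. Combining over all $\kappa - 1$ intervals gives at least $\ceil{(\kappa-1)/2}$ losing processes.

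To finish: the $k$ participating processes split into winners and losers (every process that terminates returns one or the other by Lemma~\ref{lem:lose}'s complement reasoning --- here we just need that a non-winner that terminates is a loser, which is immediate from the code); write $L$ for the number of losing processes, so $\kappa + L \le k$ in any execution where all $k$ finish, and more carefully $L \ge (\kappa-1)/2$ from the charging. Then $k \ge \kappa + (\kappa-1)/2 = (3\kappa - 1)/2$, so $\kappa \le (2k+1)/3$, and integrality gives $\kappa \le \floor{(2k+1)/3}$. One subtlety I would be careful about: Lemma~\ref{lem:main2} only guarantees the losing scan exists somewhere in the execution, not necessarily inside $I$ --- the remark after the three association rules in the intuition warns that a process ``could withhold its last write in order to be assigned to a later interval.'' So the main obstacle will be formalizing why each losing process is charged at most twice: I expect to need to track, for the process $p$ charged to interval $I$, precisely which write $w$ of $p$ lies in $I$ and to observe that the losing scan of $p$ follows $w$ by one or three operations, so the at most two sifting intervals containing $p$'s last two writes before its losing scan are the only ones that can be charged to $p$. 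This bookkeeping, rather than any new structural fact about the algorithm, is the crux.
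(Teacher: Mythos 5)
Your proposal is correct and follows essentially the same route as the paper: charge each of the $\kappa-1$ disjoint sifting intervals, via Lemma~\ref{lem:main}/Lemma~\ref{lem:main2}, to a process whose witnessing write in that interval must be its last or second-last write before its unique losing scan, so each non-winner is charged at most twice, giving at least $\ceil{(\kappa-1)/2}$ processes that cannot win and hence $\kappa \leq \floor{\frac{2k+1}{3}}$. The paper's proof is just a terser statement of this same bookkeeping.
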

\begin{proof}
	If $k'$ processes return \win, then by definition, there are $k'-1$ sifting intervals.
	By Lemma~\ref{lem:main}, for each sifting interval there is a process that performs its last or second last write and it cannot return \win.
	Hence there are at least $\ceil {\frac{k'-1}{2}} $ processes which have invoked \Competition{} and cannot return \win.
	Since $\ceil {\frac{k'-1}{2}} + k' \leq k$, $k'$ is at most $\floor {\frac{2k+1}{3} }$.
\end{proof}
\begin{lemma}\label{lem:obs-free}
The sifter implementation in Figure \ref{fig:Tas-alg} is obstruction-free where each process terminates in $O(1)$ solo steps.
\end{lemma}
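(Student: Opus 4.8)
The plan is to analyze the two methods \Competition{} and \Knockout{} separately and show that in any solo execution each one terminates within a constant number of steps, then combine the two counts.

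First I would handle \Knockout{$\sigma$}. When a process $p$ runs solo inside \Knockout{$\sigma$}, no other process writes, so the signature of $A$ stays equal to $\sigma$ (the value it had when $p$ read it at Line~\ref{line:Cscan}); in particular the first test at Line~\ref{line:A-changed} fails, and since no other process writes to $B$, the condition at Line~\ref{line:second-true} also fails. So on each iteration $p$ writes $(p,\sigma)$ to $B[\index]$ at Line~\ref{line:Bwrite} and then, at Line~\ref{line:index}, picks an $\index$ with $\widehat{b}[\index] \neq (p,\sigma)$. I would argue that starting from any configuration, after at most three solo iterations all three components of $B$ hold $(p,\sigma)$ and so the scan at Line~\ref{line:Kscan} satisfies $\num{(p,\sigma)}{\widehat b} = 3$, whence $p$ returns \False at Line~\ref{line:Bwin}. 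The one subtlety is that when $p$ enters \Knockout{} it might have a stale $\index$ or the components of $B$ might have arbitrary values; but since $p$ re-selects $\index$ each round to avoid a component already equal to $(p,\sigma)$, the number of components equal to $(p,\sigma)$ strictly increases each round while $p$ runs solo, so three rounds (a constant number of scans and writes) suffice.

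Next I would handle \Competition{}. Running solo, $p$ writes $p$ at $A[\pos]$ (Line~\ref{line:Awrite}), scans $A$ (Line~\ref{line:Cscan}), and then either returns (\win or \lose), invokes \Knockout{} once if $\num{p}{a} = 1$, or updates $\pos$ at Line~\ref{line:pos} to a component not currently holding $p$. Since no one else writes, each iteration strictly increases $\num{p}{a}$, so after at most three solo iterations the scan yields $\num{p}{a} = 3$ and $p$ returns \win at Line~\ref{line:Awin}. The second if-condition at Line~\ref{line:second-if} can never hold during a solo run once $p$ has written at least once (no other id's count can exceed $p$'s), though I'd note that even on the very first iteration, whatever the initial contents of $A$, after the write $\num{p}{a} \geq 1$ and no $q \neq p$ gains a count from $p$'s write; the worst case is still a constant number of iterations. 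In the iterations where $\num{p}{a} = 1$, $p$ makes one call to \Knockout{}, which we have shown takes $O(1)$ solo steps and returns \False, so $p$ proceeds. Summing: $p$ executes at most a constant number of \Competition{} iterations, each costing $O(1)$ steps plus at most one $O(1)$-step \Knockout{} call, for a total of $O(1)$ solo steps.

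Finally, for obstruction-freedom in the general (non-solo) sense: the argument above shows that from any reachable configuration, if $p$ runs solo it terminates in $O(1)$ of its own steps, which is exactly the obstruction-freedom guarantee (indeed a strong uniform-bound version of it). The main obstacle I anticipate is being careful about the configurations from which the solo run starts — in particular making sure the bookkeeping of $\pos$ at Line~\ref{line:pos} and $\index$ at Line~\ref{line:index} is always well-defined (there is always a component not equal to $p$, resp.\ not equal to $(p,\sigma)$, as long as $p$ has not yet achieved its clean-sweep, which is precisely when the loop would otherwise continue) and that the monotonicity of $\num{p}{\cdot}$ under solo steps genuinely holds across the \Knockout{} call boundary. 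Once those details are pinned down, the $O(1)$ bound follows immediately by adding up a fixed number of iterations.
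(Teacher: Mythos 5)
Your decomposition — first show a solo \Knockout{} call finishes in at most three iterations, then show a solo \Competition{} run finishes in a constant number of iterations with at most one embedded \Knockout{} call — is exactly the paper's argument, and your conclusion is correct. However, several of your intermediate assertions are literally false and should be replaced by the (one-line) case the paper uses instead. A solo run starts from an arbitrary reachable configuration, so the tests you claim ``fail'' can in fact hold at the first scan of the solo run: other processes may have written to $A$ \emph{before} the solo run began, so the check at Line~\ref{line:A-changed} can succeed even though nobody writes during the solo run; likewise some $q\neq p$ may already occupy two components of $B$ with $(q,\sigma)$, making Line~\ref{line:second-true} true; and in \Competition{} some $q$ may already occupy two components of $A$, so Line~\ref{line:second-if} can hold even after $p$ has written once (your parenthetical ``no other id's count can exceed $p$'s'' is wrong). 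The paper's proof does not argue these branches cannot occur; it observes that whenever they do occur the process returns \True{} and then \lose{} at Line~\ref{line:second-die} (or \lose{} at Line~\ref{line:first-die}), i.e.\ terminates immediately, so they only shorten the solo run. With that correction your counting goes through: in the remaining case each solo \Knockout{} iteration adds a new component equal to $(p,\sig_p)$ so \False{} is returned within three iterations at Line~\ref{line:Bwin}, and (as the paper notes for a solo run that starts inside \Knockout{}) upon returning \False{} the signature of $A$ equals $\sig_p$, in which $p$ occupies exactly one component, so two further write/scan rounds yield $\num{p}{a}=3$ and \win{} at Line~\ref{line:Awin}, giving the $O(1)$ bound.
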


\begin{proof}
 Suppose a process, $p$, begins a solo run while it is executing \Knockout.
 If it returns \True in either Line~\ref{line:first-true} or Line~\ref{line:second-true},
  then it terminates due to Line~\ref{line:second-die}.
  Otherwise in each iteration of the while loop, it writes a new location in $B$.
  Therefore after three iterations, all locations in $B$ contain $(p,\sig_p)$, and
  $p$ returns \False in Line~\ref{line:Bwin}.
When $p$ executes \Knockout during its solo run, the value of $A$ is equal to $\sig_p$ because otherwise $p$ returns \True from its \Knockout call.
In $\sig_p$, exactly one location in $A$ contains $p$ and no other process writes to $A$  after it returns from
its \Knockout call.
Hence $p$ writes two more times to $A$ and, by Line~\ref{line:Awin} returns \win.

Suppose $p$ starts its solo run in a \Competition{} call.
After at most one write it performs a scan.
Then it either returns \win due to Line~\ref{line:Awin-condition} or returns \lose in Line~\ref{line:first-die},
or it invokes a \Knockout call.
If it calls \Knockout, then by the argument above it terminates.
\end{proof}
Notice that each component of the snapshot object used in our sifter implementation in Figure~\ref{fig:Tas-alg} holds at most 4 identifiers,
so it is a $(4 \cdot \log {n})$-bounded 6-component snapshot object.
Combining this with Lemma~\ref{lem:lose}, Lemma~\ref{lem:math} and  Lemma~\ref{lem:obs-free} yields Theorem~\ref{thm:sifter}.

\section{Obstruction-Free Snapshot from Registers}\label{sec:snapshot}
This section establishes Theorem \ref{thm:scan}.
That is, we present an obstruction-free implementation of a $B$-bounded $M$-component snapshot object
from $M+1$ registers of size $\Theta(B + \log{n})$.

Our implementation uses an array $A[1\dots M]$ of shared registers and a register $S$.
Each array entry $A[i]$ stores a triple $(w_i,p_i,b_i)$, where $w_i\in D$ represents the $i$-th entry in the vector $V$ of the snapshot object, $p_i$ is a process ID or $\bot$ which identifies the last process that wrote to $A[i]$, and $b_i\in\{0,1\}$ is a bounded (modulo 2) sequence number.
Initially, $S=\bot$ and each array entry $A[i]$ has the value $(w_i,\bot,0)$ for some fixed $w_i\in D$.

Now suppose process $p$ calls \update{$i,x$}, and this is $p$'s $j$-th update of the $i$-th component of $V$.
To perform the update, $p$ first writes its ID to $S$ and then it writes the triple $(x,p,j \bmod 2)$ to $A[i]$.

To execute a \scan{}, process $p$ first writes its ID to $S$.
Then it performs a collect (i.e., it reads all entries of $A$) to obtain a \emph{view} $a[1\dots M]$, and another collect to obtain a second view $a'[1\dots M]$.
Finally, the process reads $S$.
If $S$ does not contain $p$'s ID or if the views $a$ and $a'$ obtained in the two collects differ, then $p$ starts its \scan{} over; otherwise it returns view $a$.

Obviously \update{} is wait-free and has step complexity $O(1)$.
If process $p$ runs alone for at most $4m+3$ steps of its \scan{} operation, it performs a write to $S$ following by two collects and a read of $S$. Since $p$ runs alone collects are the same and $p$ reads its own ID from $S$ and it must terminate.
Hence solo step complexity of \scan{} is $O(M)$.

To prove linearizability, we use the following linearization points:
Each \update{$i,x$} operation linearizes at the point when the calling process writes to $A[i]$, and each \scan{} operation that terminates linearizes at the point just before the calling process performs its last collect during its \scan{}.
(We don't linearize pending \scan{} operations.)

Consider a \scan{} operation by process $p$ which returns the view $a=a[1\dots M]$.
Let $t$ be the point when that \scan{} linearizes, i.e., just before $p$ starts its last collect.
To prove linearizability it suffices to show that $A=a$ at point $t$.

For the purpose of a contradiction assume that this is not the case, i.e., there is an index $i\in\{1,\dots,M\}$ such that at time $t$ the triple stored in $A[i]$ is not equal to $a[i]$.
Let $t_1$ and $t_2$ be the points in time when $p$ reads the value $(w,q,b)=a[i]$ from $A[i]$ during its penultimate and ultimate collect, respectively.
Then $t_1<t<t_2$.
Since $A[i]\neq (w,q,b)$ at time $t$ but $A[i]=(w,q,b)$ at times $t_1$ and $t_2$, process $q$ writes $(w,q,b)$ to $A[i]$ at some point in the interval $(t,t_2)\subseteq (t_1,t_2)$.
Since $p$ does not write to $A$ during its \scan{}, this implies $q\neq p$.

First suppose $q$ writes to $A[i]$ at least twice during $(t_1,t_2)$.
Each such write must happen during an \update{} operation by $q$.
Since each \update{} operation starts with a write to $S$, $q$ writes its ID to $S$ at least once in $(t_1,t_2)$.
But since the penultimate collect of $p$'s \scan{} starts before $t_1$ and the ultimate collect finishes after $t_2$, $S$ cannot change in the interval $(t_1,t_2)$, which is a contradiction.

Hence, suppose $q$ writes to $A[i]$ exactly once in $(t_1,t_2)$; in particular it writes the triple $(w,q,b)$ to $A[i]$ at some point $t^\ast\in (t_1,t_2)$.
Recall that each time $q$ writes to $A[i]$ it alternates the bit it writes to the third component.
Hence, at no point in $[t_1,t^\ast]$ the second and third component of $A[i]$ can have value $q$ and $b$.
In particular, $A[i]\neq (w,q,b)$ at point $t_1$, which is a contradiction.


\section{Obstruction Freedom vs.\ Randomized Wait-Freedom}
\label{sec:wait-free}
In this section, we present a simple technique that transforms any deterministic obstruction-free algorithm into a randomized one that is equally space efficient and is randomized wait-free against the oblivious adversary.
Moreover, if
the solo step complexity of the deterministic algorithm is $b$,
then the randomized algorithm guarantees that any process finishes after a number of steps that is bounded by a polynomial function of $n$ and $b$.
Precisely, the process finishes in $O\big(b(n + b) \log(n/\delta)\big)$ steps, with probability at least $1-\delta$, as stated in Theorem~\ref{thm:randomized-step-bound}.

A naive approach is the following:
Whenever a process is about to perform a shared memory step in the algorithm, it can flip a coin, and with probability 1/2 it performs the step of the algorithm (called ``actual" step), while with the remaining probability it executes a ``dummy'' step, e.g., reads an arbitrary register.
Suppose the solo step complexity of an obstruction-free algorithm is $b$.
Any execution of length $b n$ (i.e., where exactly $b n$ shared memory steps are performed) must contain a process that executes at least $b$ steps, and with probability at least $1/2^{b n}$ that process executes $b$ actual steps while all other processes execute just dummy steps.
Then during an execution of length $c\cdot b\cdot n\cdot 2^{b n}$ some process runs unobstructed for at least $b$ actual steps with probability $1-1/e^c$.
Hence, the algorithm is randomized wait-free.
This naive transformation yields exponential expected step complexity.

In order to improve the expected step complexity, processes use a biased coin to decide whether to take a larger number of consecutive ``dummy'' or ``actual'' steps.
Precisely, every process $p$ tosses a biased coin before its first step, and also again every $b$ steps.
The outcome of each coin toss is heads with probability $1/n$ and tails with probability $1-1/n$, independently of other coin tosses.
If the outcome of a coin toss by $p$ is heads, then in its next $b$ steps, $p$ executes the next $b$ steps of the given deterministic algorithm; if the outcome is tails then the next $b$ steps of $p$ are \emph{dummy} steps, e.g., $p$ repeatedly reads some shared register.

%
\subsection*{Proof of Theorem~\ref{thm:randomized-step-bound}}
We show that the randomized algorithm described above has the properties specified in Theorem~\ref{thm:randomized-step-bound}.

Let $\sigma = (\pi_1,\pi_2,\ldots)$, where $\pi_i\in\PP$, be an arbitrary schedule determining an order in which processes take steps.
We assume that $\sigma$ is fixed before the execution of the algorithm, and in particular before any process tosses a coin.
For technical reasons we assume that after a process finishes it does not stop, but it takes \emph{no-op} steps whenever it is its turn to take a step according to $\sigma$.
Also the process continues to toss a coin every $b$ (no-op) steps; the outcome of this coin toss has no effect on the execution, but is used in the analysis.

We start with a sketch of the proof.
We sort processes by increasing order in which they are scheduled to take their $(\lambda b)$-th step in $\sigma$, for some $\lambda = \Theta\big((n+b)\log(n/\delta)\big)$.
Let $p_i$ denote the $i$-th process in this order.
We focus on process $p_1$ first.
We consider $\lambda$ disjoint \emph{blocks} of $\sigma$, where the $\ell$-th block, for $1\leq \ell\leq\lambda$, starts with the first step of $p_1$ after its $\ell$-th coin toss, and finishes with the last step of $p_1$ before its next coin toss.
Let $m_\ell$ denote the number of steps contained in block $\ell$;
then $\sum_\ell m_\ell\leq n\lambda b$ by $p_1$'s definition.
Further, the number of coin tosses that occur in block $\ell$ is easily seen to be at most $O(m_\ell/b + n)$.
These coin tosses, plus at most $n$ additional coin tosses preceding the block (one by each process), determine which of the steps 
in the block are actual steps and which are dummy.
If all these coin tosses by processes other than $p_1$ return tails, we say that the block is \emph{unobstructed} (for $p_1$).
Such a block does not contain any actual steps by any processes $p\neq p_1$.
It follows that the probability that block $\ell$ is unobstructed is at least $(1-1/n)^{O(m_\ell/b + n)}$.
The expected number of unobstructed blocks is then $\sum_\ell (1-1/n)^{O(m_\ell/b + n)}$, and we show that this is $\Omega(\lambda)$ using that $\sum_\ell m_\ell\leq n\lambda b$.
Further, we show that this $\Omega(\lambda)$ bound on the number of unobstructed blocks holds also with high probability.
This would follow easily if for different blocks the events that the blocks are unobstructed were independent; but they are not, as they may depend on the outcome of the same coin toss.
Nevertheless the dependence is limited, as each coin toss affects steps in at most $b$ different blocks and each block is affected by at most $O(n)$ coin tosses on average.
To obtain the desired bound we apply a concentration inequality from~\cite{McDiarmid1998}, which is a refinement of the standard method of bounded differences.
Having established that $\Omega(\lambda)$ blocks are unobstructed, it follows that the probability that $p_1$'s coin toss comes up heads at the beginning of at least one unobstructed block is $1-(1-1/n)^{\Omega(\lambda)} = 1-e^{-\Omega(\lambda/n)} \geq 1-\delta/n$ for the right choice of constants.
Hence with at least this probability, $p_1$ finishes after at most $\lambda b$ steps.

Similar bounds are obtained also for the remaining processes:
We use the same approach as above for each $p_i$, except that in place of $\sigma$  we use the schedule $\sigma_i$  obtained from $\sigma$ by removing all instances of $p_j$ except for the first $\lambda b$ ones, for all $1\leq j<i$.
We conclude that with probability $1-\delta/n$, $p_i$ finishes after taking at most $\lambda b$ steps, assuming that each of the processes $p_1,\ldots,p_{i-1}$ also finishes after at most $\lambda b$ steps.
The theorem then follows by applying a union bound. 

Next we give the detailed proof.
Let $\lambda = \beta(n+b)\ln (n/\delta)$, for a constant $\beta>0$ to be fixed later.
Let $p_1,\ldots,p_k$ be all processes that have at least $\lambda b$ steps in schedule $\sigma$, listed in the order in which they execute their $(\lambda b)$-th step.
Let $\sigma_i$, for $1\leq i\leq k$, be the schedule obtained from $\sigma$ after removing all instances of $p_{j}$ except for the first $\lambda b$, for all $1\leq j< i$.
For each $1\leq i\leq k$, we identify $\lambda$ disjoint blocks of $\sigma_i$, where for $1\leq\ell\leq\lambda$, the $\ell$-th block, denoted $\sigma_{i,\ell}$, starts with $p_i$'s step following its $\ell$-th coin toss, 
and finishes after the last step of $p_i$ before its $(\ell+1)$-th coin toss.
By $|\sigma_{i,\ell}|$ we denote the number of steps contained in $\sigma_{i,\ell}$.
We have
$
    \sum_\ell |\sigma_{i,\ell}| \leq n\lambda b,
$
because blocks $\sigma_{i,1},\ldots,\sigma_{i,\lambda}$ contain in total $\lambda b$ steps of each of the processes $p_1,\ldots,p_i$, and fewer than $\lambda b$ steps of each of the remaining processes.

Observe that if $p_i$ has not finished before block $\sigma_{i,\ell}$ begins, and if $p_i$'s coin toss before block $\sigma_{i,\ell}$ returns heads, then $p_i$ is guaranteed to finish during $\sigma_{i,\ell}$ if all other steps by non-finished processes during  $\sigma_{i,\ell}$ are dummy steps.

We say that a coin toss \emph{potentially obstructs} $\sigma_{i,\ell}$ if it is  performed by a process $p\neq p_i$, and at least one of the $b$ steps by $p$ following that coin toss takes place during $\sigma_{i,\ell}$.
This step will be an actual step only if the coin comes up heads
(and $p$ has not finished yet).
We say that block $\sigma_{i,\ell}$ is \emph{unobstructed} if all coin tosses that potentially obstruct this block yield tails.
The number of coin tosses that potentially obstruct $\sigma_{i,\ell}$ is bounded by
$
    |\sigma_{i,\ell}|/b + 2n,
$
because if process $p\neq p_i$ takes $s>0$ steps in $\sigma_{i,\ell}$, then the coin tosses by $p$ that potentially obstruct $\sigma_{i,\ell}$ are the at most $\ceil{s/b}$ ones  that  take place during $\sigma_{i,\ell}$, plus at most one before $\sigma_{i,\ell}$.

It follows that the probability that $\sigma_{i,\ell}$  is unobstructed is at least $(1-1/n)^{|\sigma_{i,\ell}|/b + 2n}$.
Thus the expected number of unobstructed blocks among $\sigma_{i,1},\ldots,\sigma_{i,\lambda}$ is at least
$
    \sum_{\ell}(1-1/n)^{|\sigma_{i,\ell}|/b + 2n}.
$
Using now that $\sum_\ell |\sigma_{i,\ell}| \leq n\lambda b$, and that $(1-1/n)^{x+2n}$ is a convex function of $x$, we obtain that the previous sum is minimized when all $\lambda$ blocks have the same size, equal to $n b$.
Thus, the expected number of unobstructed blocks is at least
\[
    \sum_{1\leq \ell\leq\lambda}(1-1/n)^{|\sigma_{i,\ell}|/b + 2n}
    \geq
    \lambda  (1-1/n)^{(nb)/b + 2n}
    \geq
    \lambda (1-1/n)^{3n}
    >
    \lambda /4^3
    =
    \lambda/64,
\]
where for the last inequality we used that $(1-1/n)^n\geq 1/4$, when $n\geq 2$.

Next we use the following result to establish a lower bound on the number of unobstructed blocks with high probability.
This result is a special case of~\cite[Theorem~3.9]{McDiarmid1998}, which is an extension to the standard method of bounded differences.

\begin{theorem}
    \label{thm:concentration}
Let $X_1,\ldots,X_\kappa$ be independent 0/1 random variables such that $\Pr(X_j = 1) = \rho$, for $1\leq j\leq \kappa$.
Let $f$ be a bounded real-valued function defined on $\{0,1\}^\kappa$, such that
$
    |f(x)-f(x')|\leq c_j,
$
whenever vectors $x,x'\in \{0,1\}^\kappa$ differ only in the $j$-the coordinate.
Then for any $t>0$,
\[
    \Pr\big(|f(X_1,\ldots,X_\kappa) - \Exp[f(X_1,\ldots,X_\kappa)]| \geq t \big)
    \leq
    2e^{-\frac{t^2}{2\rho\sum_{j} c_j^2 + 2t\max_j\{c_j\}/3}}.
\]
\end{theorem}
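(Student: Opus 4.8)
The plan is to prove this via the Doob martingale associated with $f$, combined with a Bernstein-type tail bound for martingales with bounded increments and controlled predictable quadratic variation (Freedman's inequality). Let $\mathcal F_j = \sigma(X_1,\ldots,X_j)$ and set $Z_j = \Exp[f(X_1,\ldots,X_\kappa)\mid\mathcal F_j]$, so that $(Z_j)_{0\le j\le\kappa}$ is a martingale with $Z_0 = \Exp[f(X_1,\ldots,X_\kappa)]$ and $Z_\kappa = f(X_1,\ldots,X_\kappa)$. Writing $D_j = Z_j-Z_{j-1}$ for the martingale differences, it suffices to bound $\Pr\big(|\sum_j D_j|\ge t\big)$.

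First I would control each $D_j$ both pointwise and in conditional second moment. Since $Z_j$ depends only on $X_1,\ldots,X_j$, write $Z_j = g_j(X_1,\ldots,X_j)$ with $g_j(x_1,\ldots,x_j)=\Exp[f(x_1,\ldots,x_j,X_{j+1},\ldots,X_\kappa)]$, and put $\Delta_j = g_j(X_1,\ldots,X_{j-1},1)-g_j(X_1,\ldots,X_{j-1},0)$. Coupling the two expectations defining $g_j(\,\cdot\,,1)$ and $g_j(\,\cdot\,,0)$ with the same realization of $X_{j+1},\ldots,X_\kappa$ and applying the bounded-difference hypothesis gives $|\Delta_j|\le c_j$ pointwise. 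A direct computation of the conditional expectation then shows $D_j=(1-\rho)\Delta_j$ on $\{X_j=1\}$ and $D_j=-\rho\Delta_j$ on $\{X_j=0\}$; hence $|D_j|\le\max\{\rho,1-\rho\}\,c_j\le\max_i c_i$, and, since $\Exp[D_j\mid\mathcal F_{j-1}]=0$,
\begin{displaymath}
\Exp\big[D_j^2\mid\mathcal F_{j-1}\big] = \rho(1-\rho)\,\Delta_j^2 \le \rho\, c_j^2 .
\end{displaymath}
Summing over $j$, the total predictable quadratic variation $V=\sum_j\Exp[D_j^2\mid\mathcal F_{j-1}]$ is deterministically at most $\rho\sum_j c_j^2$.

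Next I would invoke Freedman's inequality: for a martingale started at $0$ whose increments are bounded in absolute value by $b$ and whose predictable quadratic variation is at most $v$ almost surely, the probability of deviating from the start by at least $t$ is at most $2\exp\!\big(-t^2/(2v+2bt/3)\big)$. Applying this with $b=\max_j c_j$ and $v=\rho\sum_j c_j^2$ yields exactly the claimed bound. If one prefers a self-contained derivation, this follows from the standard exponential-moment argument: for $0<\theta<3/b$ one has $\Exp[e^{\theta D_j}\mid\mathcal F_{j-1}]\le\exp\!\big(\tfrac{\theta^2/2}{1-b\theta/3}\,\Exp[D_j^2\mid\mathcal F_{j-1}]\big)$ using $|D_j|\le b$ and $\Exp[D_j\mid\mathcal F_{j-1}]=0$; multiplying these bounds along the martingale, applying Markov's inequality, optimizing over $\theta$, and a union bound over the two tails give the result.

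The only genuine subtlety is the step from the worst-case bounded-difference condition on $f$ to a true variance bound on the martingale differences, namely the observation that because each $X_j$ is Bernoulli$(\rho)$, the conditional second moment of $D_j$ carries the extra factor $\rho(1-\rho)$ rather than merely $c_j^2$; everything else is routine Bernstein/Freedman machinery. Since the statement is quoted from~\cite[Theorem~3.9]{McDiarmid1998} as a special case, one may also simply cite that reference, but the above gives a complete proof.
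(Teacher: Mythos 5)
Your proof is correct. Note, however, that the paper does not prove this statement at all: it simply quotes it as a special case of \cite[Theorem~3.9]{McDiarmid1998}, so there is no in-paper argument to match. What you supply is a genuinely self-contained derivation: the Doob martingale $Z_j=\Exp[f\mid X_1,\ldots,X_j]$, the pointwise coupling bound $|\Delta_j|\le c_j$, the exact computation $D_j=(1-\rho)\Delta_j$ on $\{X_j=1\}$ and $D_j=-\rho\Delta_j$ on $\{X_j=0\}$ giving $\Exp[D_j^2\mid\mathcal F_{j-1}]=\rho(1-\rho)\Delta_j^2\le\rho c_j^2$, and then Freedman's (martingale Bernstein) inequality with $b=\max_j c_j$ and $v=\rho\sum_j c_j^2$, which reproduces the stated tail $2e^{-t^2/(2\rho\sum_j c_j^2+2t\max_j c_j/3)}$ exactly (indeed with the slightly stronger $\rho(1-\rho)$ in place of $\rho$ if one wanted). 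This is essentially the same mechanism by which McDiarmid's Theorem~3.9 (the ``method of bounded variances'' refinement of bounded differences) is proved, so your route is morally the cited one, made explicit; the citation buys brevity, while your argument buys self-containedness and makes transparent the one nontrivial point you correctly isolate, namely that the Bernoulli$(\rho)$ distribution of each coordinate converts the worst-case increment bound $c_j$ into a conditional variance of order $\rho c_j^2$, which is precisely what the application in Section~\ref{sec:wait-free} needs (there $\rho=1/n$).
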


Let the 0/1 random variables $X_1,X_2,\ldots$ denote the outcome of the coin tosses that potentially obstruct at least one of the blocks $\sigma_{i,1},\ldots,\sigma_{i,\lambda}$: $X_j = 1$ if the $j$-th of those coin tosses is heads, and $X_j = 0$ otherwise.
Then, $\Pr(X_j=1) = 1/n$.
Let $f(X_1,X_2,\ldots)$ be the number of unobstructed blocks.
We showed above that $\Exp[f(X_1,X_2,\ldots)]\geq\lambda/64$.
Further, we observe that flipping the value of $X_j$ can change the value of $f$ by at most the number of blocks that $X_j$ potentially obstructs; let $c_j$ denote that number.
Then, $\max_j{c_j} \leq b$.
Finally, since each block $\sigma_{i,\ell}$ is potentially obstructed by at most $|\sigma_{i,\ell}|/b + 2n$ coin tosses,
\[
    \sum_j c_j
    \leq
    \sum_{1\leq\ell\leq\lambda} (|\sigma_{i,\ell}|/b + 2n)
    =
    \sum_{1\leq\ell\leq\lambda} |\sigma_{i,\ell}|/b + 2n\lambda
    \leq
    3n\lambda,
\]
Thus,
$
    \sum_j c_j^2
    \leq
    \sum_j (c_jb)
    \leq
    3n\lambda b.
$
Applying now Theorem~\ref{thm:concentration} for $t = \lambda/128 \leq \Exp[f(X_1,X_2,\ldots)]/2$ gives
\[
    \Pr\big(f(X_1,\ldots,X_n) \leq t \big)
    \leq
    \Pr\big(\Exp[f(X_1,X_2,\ldots)] - f(X_1,\ldots,X_n) \geq t \big)
    \leq
    2e^{-\frac{t^2}{6b\lambda + 2tb/3}}.
\]
Substituting $t = \lambda/128$ and $\lambda = \beta (n+b)\ln(n/\delta)$, and letting $\beta = 3(6\cdot 128^2 + 2\cdot 128/3) $ yields
$\Pr
\big(f(X_1,\ldots,X_n) \leq \lambda/128 \big)\leq 2e^{-3\ln(n/\delta)} \leq \delta/(2n)$, for $n\geq 2$.
Thus, with probability at least $1-\delta/(2n)$ at least $\lambda/128$ of the blocks $\sigma_{i,1},\ldots,\sigma_{i,\lambda}$ are unobstructed.
The probability that $p_i$ tosses heads before at least one unobstructed block is then at least
\[
    \big(1-\delta/(2n)\big)\cdot\big(1-(1-1/n)^{\lambda/128}\big).
\]
Since $1-(1-1/n)^{\lambda/128}\geq 1-e^{\lambda/(128n)}> 1- \delta/(2n)$, the above probability is at least $\big(1-\delta/(2n)\big)^2\geq 1-\delta/n$.

We have thus far established that for any $1\leq i\leq k$, with probability at least $1-\delta/n$ process $p_i$ finishes after at most $\lambda b$ steps \emph{under schedule $\sigma_i$}.
However, schedules $\sigma$ and $\sigma_i$ yield identical executions if each of the processes $p_1,\ldots,p_{i-1}$ finishes after executing no more than $\lambda b$ steps (the executions are identical assuming the same coin tosses in both executions).
Then, by the union bound, the probability that all processes $p_i$ finish after executing no more than $\lambda b$ steps each is at least $1-n\cdot\delta/n = 1-\delta$.
This concludes the proof of Theorem~\ref{thm:randomized-step-bound}.

\section{Time and Space Efficient Randomized Test-and-Set}\label{sec:randomizedTAS}

In this section, we present a new randomized TAS algorithm that has the properties stated in Theorem~\ref{thm:randomized}.
In particular, it uses a logarithmic number of registers, and has almost constant, $O(\log^\ast n)$, expected step complexity against an oblivious adversary.
The algorithm combines a known randomized TAS construction~\cite{GW2012b}, with the (deterministic) obstruction-free TAS algorithm from Section~\ref{sec:deterministicTAS}, which is turned it into a randomized one by applying the technique of Section~\ref{sec:wait-free}.

We start by observing that since the solo step complexity of the obstruction-free TAS algorithm in Section~\ref{sec:deterministicTAS} is $b = \Theta(\log n)$ (Theorem~\ref{thm:main}), the technique from Section~\ref{sec:wait-free} can be applied.
This yields a randomized TAS algorithm that uses $O(\log n)$ bounded registers, where every process finishes its \Competition{} method after at most $O(n\log^2 n)$ steps, both in expectation and with probability $1-O(1/n^c)$, for any constant $c>0$ (by Theorem~\ref{thm:randomized-step-bound}).
This step complexity is of course much larger than the nearly constant complexity we want to achieve.

Next we give an overview of the randomized TAS algorithm from~\cite{GW2012b} that we will use.
This algorithm has the desired step complexity, but requires (at least) a linear number of registers rather than logarithmic.
To simplify exposition we consider the equivalent weak leader election algorithm rather than the TAS algorithm (see Theorem~\ref{thm:TAS}).
The algorithm uses a chain of $n$ sifter objects $S_1,\ldots,S_n$, alternating with $n$ splitter objects $P_1,\ldots,P_n$, and a chain of $n$ 2-process weak leader election objects $L_n,L_{n-1},\ldots,L_1$.
(A splitter object supports a single operation, \split{}, which returns \win, \lose, or \continue, such that at most one process wins, not all processes lose, and not all continue.)

A process $p$ starts by invoking the \compete{} method of the first sifter object, $S_1$.
If $p$'s invocation of \compete{} in some sifter $S_i$ returns \lose, then $p$ immediately loses in the weak leader election algorithm.
Otherwise, after $p$ wins in $S_i$, it executes the \split{} method of $P_i$:
if this method returns \lose, then $p$ loses immediately, as before;
if it returns \continue, $p$ invokes the \compete{} method of the next sifter, $S_{i+1}$; while if \split{} returns \win, $p$ switches to the chain of 2-process weak leader election objects.
In the last case, $p$ tries to win the 2-process weak leader elections in $L_i,L_{i-1},\ldots,L_1$, in this order.
If $p$ succeeds, it wins the weak leader election algorithm; otherwise it loses, as soon as it loses for the first time in some 2-process weak leader election.

The correctness of the algorithm above follows easily from the next observations.
If exactly one process invokes the \split{} method of a splitter $P_i$, then this invocation returns \win, while if there are $\kappa > 1$ invocations then at least one returns \win or \continue, and no more than $\kappa-1$ return \continue.
This implies that not all processes lose, and that at most $n-i+1$ processes invoke $P_i$'s \split{} method, thus no more than $n$ splitter (or sifter) objects are needed.
The \compete{} method of each 2-process weak leader election object $L_i$ is invoked by no more than two processes: the winner in $L_{i+1}$ (if it exists), and the at most one winner in $P_i$.

In~\cite{GW2012b}, a randomized sifter algorithm is presented that uses $s\geq2$ single-bit registers ($s$ is a parameter), such that the \compete{} method involves just $2$ steps, and if at most $2^s$ processes invoke this method, then at most $O(s)$ of the invocations return \win, in expectation.
Moreover, for $s = 2$, if $k\geq 2$ invocations of the \compete{} method take place, then the expected number of invocations that return \win is at most $k/2 + 1$.
In the following, we will refer to a sifter object implemented by the above algorithm as a \emph{GW-sifter of size $s$}.


The weak leader election algorithm discussed earlier from~\cite{GW2012b}, uses $n$ GW-sifters of size $\log n$ as $S_1,\ldots,S_n$.
This is shown to achieve an expected step complexity of $O(\log^\ast n)$, but requires $\Theta(n\log n)$ registers in total.

Here we propose instead that different types of sifter objects are used, as follows.
The first sifter, $S_1$, is a GW-sifter of size $\log n$, as before.
The next $\ell = \log^2\log n$ objects $S_{2},\ldots,S_{\ell+1}$, are GW-sifters of size $z = 2\log\log n$.
After that, the next $m = \beta \log n$ objects $S_{\ell+2},\ldots,S_{\ell+m+1}$, for $\beta>0$ a sufficiently large constant, are GW-sifters of size~2.
Last, sifter $S_{\ell+m+2}$ is the randomized TAS object obtained by applying Theorem~\ref{thm:randomized-step-bound} to the deterministic TAS algorithm of Theorem~\ref{thm:main}, as discussed at the beginning.
(Recall that any TAS algorithm is also a 1-sifter.)
Objects $S_{i}$, $P_i$, and $L_i$, for $i>\ell+m+2$, are no longer needed.

It is straightforward to verify that this implementation uses $\Theta(\log n)$ registers:
a total of $\log n + z\ell + 2m  = O(\log n)$ registers are used for the sifter objects, and $O(1)$ registers per object suffice for implementing each splitter $P_i$ and randomized 2-process weak leader election object $L_i$~(see \cite{GW2012b}).

We compute now the step complexity of the algorithm.
For the first sifter, the expected number of invocations that return \win is $O(\log n)$. By Markov's inequality the probability that more than $2^{z} = \log^2 n$ invocations return \win is at most $O(\log(n)/\log^2 n) = O(1/\log n)$.

Suppose now that no more than $2^z$ processes invoke the \compete{} method of the $z$-bit sifter $S_{2}$ (which happens with probability $1-O(1/\log n)$ as argued above).
From the analysis in~\cite{GW2012b} it follows that only the first $\mu = O(\log^\ast n)$ of the $\ell$ sifters $S_{2},\ldots,S_{\ell+1}$ are used in expectation.
By dividing this sequence of sifters into $\ell/(2\mu)$ subsequences of $2\mu$ sifters, and applying  Markov's inequality to each, we obtain that the probability all the $\ell$ sifters are used is $1/2^{\ell/(2\mu)} = o(1/\log n)$.
Thus, only with probability $o(1/\log n)$ is sifter $S_{\ell + 2}$ used.

For each sifter $S_{i}$, for $i \in\{\ell + 2,\ldots,\ell + m+1\}$,
we have that if $S_i$'s \compete{} method is invoked $\kappa>1$ times then at most $\kappa/2+1$ invocations return \win in expectation, thus at most $\kappa/2$ processes invoke the \compete{} method of the next sifter, $S_{i+1}$ (because at least one invocation of $P_{i}$'s \split{}  does not return \continue).
Then by Markov's inequality, at most $2\kappa/3$ processes invoke $S_{i+1}$'s \compete{} method, with probability at least 1/4. 
Therefore, no more than $4\log_{3/2}n < 8\log n$ of the $m$ sifters are used in expectation.
By a standard Chernoff bound argument, the probability that the last sifter, $S_{\ell+m+2}$, is used can be made smaller than $n^{-\beta'}$, for any constant $\beta'>0$, by choosing a sufficiently large constant $\beta$.

Combining the above we obtain that the expected number of sifters, other than the last one, that are used is at most
\[
    1 + O(\log^\ast n) + \left(O\left(1/\log n\right) + o(1/\log n)\right)\cdot m = O(\log^\ast n),
\]
where the first term on the left accounts for $S_1$, the second for the expected number of sifters used among the next $\ell$ sifters, and the third term for the expected number of sifters used among the subsequent $m$ sifters, where  factor $O(1/\log n) + o(1/\log n)$ is the probability that either more than $2^z$ processes use $S_2$ or some process uses $S_{\ell + 2}$.
Using also that the last sifter is used with probability at most $n^{-\beta'}$ and has step complexity $O(n\log^2n)$, we obtain that the expectation for the maximum number of steps by any process is at most
\[
    O(\log^\ast n) +  n^{-\beta'} \cdot O(n\log^2 n) = O(\log^\ast n),
\]
for $\beta'>1$.
Further the probability that the maximum number of steps is $O(\log n)$ is $1-O(n^{-\beta'})$, which follows from the probability that the last sifter, $S_{\ell+m+2}$, is not used.

\clearpage
\bibliographystyle{plain}
\bibliography{tasbib-lowercase}

\end{document}